\definecolor{redlinkcolor}{rgb}{0.79607843, 0.25098039, 0.25882353}
\definecolor{bluecitecolor}{rgb}{0,0.36,0.69}
\definecolor{newpurple}{HTML}{000099}
\definecolor{newbrown}{HTML}{E55451}
\definecolor{myblue}{HTML}{000099}
\definecolor{myred}{HTML}{E55451}
\definecolor{mycyan}{HTML}{00ccff}
\theoremstyle{plain}
\newtheorem{theorem}{Theorem}[section]
\newtheorem{proposition}[theorem]{Proposition}
\newtheorem{lemma}[theorem]{Lemma}
\theoremstyle{definition}
\newtheorem{definition}[theorem]{Definition}
\theoremstyle{remark}
\title{Privacy Amplification for the Gaussian Mechanism via Bounded Support}
\author[1,*]{Shengyuan Hu}
\author[2]{Saeed Mahloujifar}
\author[1]{Virginia Smith}
\author[2]{Kamalika Chaudhuri}
\author[2]{Chuan Guo}
\affiliation[1]{CMU}
\affiliation[2]{FAIR, Meta}
\abstract{
Data-dependent privacy accounting frameworks such as per-instance differential privacy (pDP) and Fisher information loss (FIL) confer fine-grained privacy guarantees for individuals in a fixed training dataset. These guarantees can be desirable compared to vanilla DP in real world settings as they tightly upper-bound the privacy leakage for a \emph{specific} individual in an \emph{actual} dataset, rather than considering worst-case datasets. While these frameworks are beginning to gain popularity, to date, there is a lack of private mechanisms that can fully leverage advantages of data-dependent accounting. To bridge this gap, we propose simple modifications of the Gaussian mechanism with bounded support, showing that they {amplify} privacy guarantees under data-dependent accounting. Experiments on model training with DP-SGD show that using bounded-support Gaussian mechanisms can provide a reduction of the pDP bound $\epsilon$ by as much as $30\%$ without negative effects on model utility.}
\date{\today}
\begin{document}

\maketitle

% \icmlsetsymbol{equal}{*}

% \begin{icmlauthorlist}
% \icmlauthor{Shengyuan Hu}{cmu}
% \icmlauthor{Saeed Mahloujifar}{meta}
% \icmlauthor{Virginia Smith}{cmu}
% \icmlauthor{Kamalika Chaudhuri}{meta}
% \icmlauthor{Chuan Guo}{meta}
% \end{icmlauthorlist}

% \icmlaffiliation{cmu}{Carnegie Mellon University}
% \icmlaffiliation{meta}{FAIR, Meta}

% \icmlcorrespondingauthor{Shengyuan Hu}{shengyua@andrew.cmu.edu}
% \icmlcorrespondingauthor{Chuan Guo}{chuanguo@meta.com}

% You may provide any keywords that you
% find helpful for describing your paper; these are used to populate
% the "keywords" metadata in the PDF but will not be shown in the document
% \icmlkeywords{Privacy, Deep Learning}

% \vskip 0.3in
% ]

% this must go after the closing bracket ] following \twocolumn[ ...

% This command actually creates the footnote in the first column
% listing the affiliations and the copyright notice.
% The command takes one argument, which is text to display at the start of the footnote.
% The \icmlEqualContribution command is standard text for equal contribution.
% Remove it (just {}) if you do not need this facility.

%\printAffiliationsAndNotice{}  % leave blank if no need to mention equal contribution
% \printAffiliationsAndNotice{\icmlEqualContribution} % otherwise use the standard text.

% \vspace{-.2in}
\section{Introduction}

Differential privacy (DP) is currently the most common framework for privacy-preserving machine learning~\citep{dwork2006calibrating}. DP upper bounds the privacy leakage of a sample via its privacy parameter $\epsilon$, which holds under worst-case assumptions on the training dataset and randomness in the training algorithm. However, the implied threat model of DP can be too pessimistic in practice, and alternate notions  have emerged to relax stringent worst-case assumptions. In particular, data-dependent privacy accounting frameworks such as per-instance DP (pDP; \citet{wang2019per}) and Fisher information loss (FIL; \citet{hannun2021measuring}) offer more fine-grained privacy assessments for {specific} individuals in an {actual} dataset, rather than considering hypothetical worst-case datasets. This form of privacy guarantee can be more desirable in real world applications as it better captures the capabilities of a realistic adversary.

While data-dependent privacy accounting has been gaining popularity lately~\citep{feldman2021individual, redberg2021privately, pmlr-v162-guo22c, yu2022per, koskela2022individual, boenisch2022individualized}, there are very few private mechanisms that can fully leverage its power. For example, the Gaussian mechanism---arguably the most ubiquitous private mechanism in ML---provides a privacy guarantee that is only dependent on the local sensitivity of the query. Thus, for a simple mean estimation query, all individuals have the same privacy leakage \emph{even under data-dependent accounting} such as pDP (Def. \ref{def:pRDP}) and FIL (Def. \ref{def:fil}).

To bridge this gap, we propose simple modifications of the Gaussian mechanism with bounded support that can \emph{amplify} their privacy guarantee under data-dependent accounting.
One example of such a mechanism is the stochastic sign~\citep{jin2020stochastic}, which first adds centered Gaussian noise $\mathcal{N}(0, \sigma^2)$ and then returns the sign of the output. This mechanism is equivalent to applying the Gaussian mechanism and then performing sign compression, and hence is provably private according to pDP and FIL using post-processing. Intriguingly, a more fine-grained analysis shows that it in fact {amplifies} the privacy guarantee beyond what can be obtained through post-processing. 

Figure \ref{fig:bounded_support_motivation} shows the FIL $\eta$ and per-instance RDP $\epsilon$ (see Def. \ref{def:pRDP}) of the stochastic sign mechanism (orange line) as a function of the input $\theta$. For values of $\theta$ close to 0, the privacy cost is close to that of the standard Gaussian mechanism. However, for values of $\theta$ that are far away from 0, the privacy cost can be drastically lower. Similar observations can be made for the rectified Gaussian and truncated Gaussian mechanisms. We validate these findings empirically on private mean estimation and private model training using DP-SGD~\citep{abadi2016deep}, and show that it is possible to greatly improve the privacy-utility trade-off under our improved analysis.

\begin{figure*}[t!]
    \centering
    % \vspace{-0.1in}
    \begin{subfigure}{0.24\textwidth}
        \centering
        \includegraphics[width=0.99\textwidth,trim=10 10 10 30]{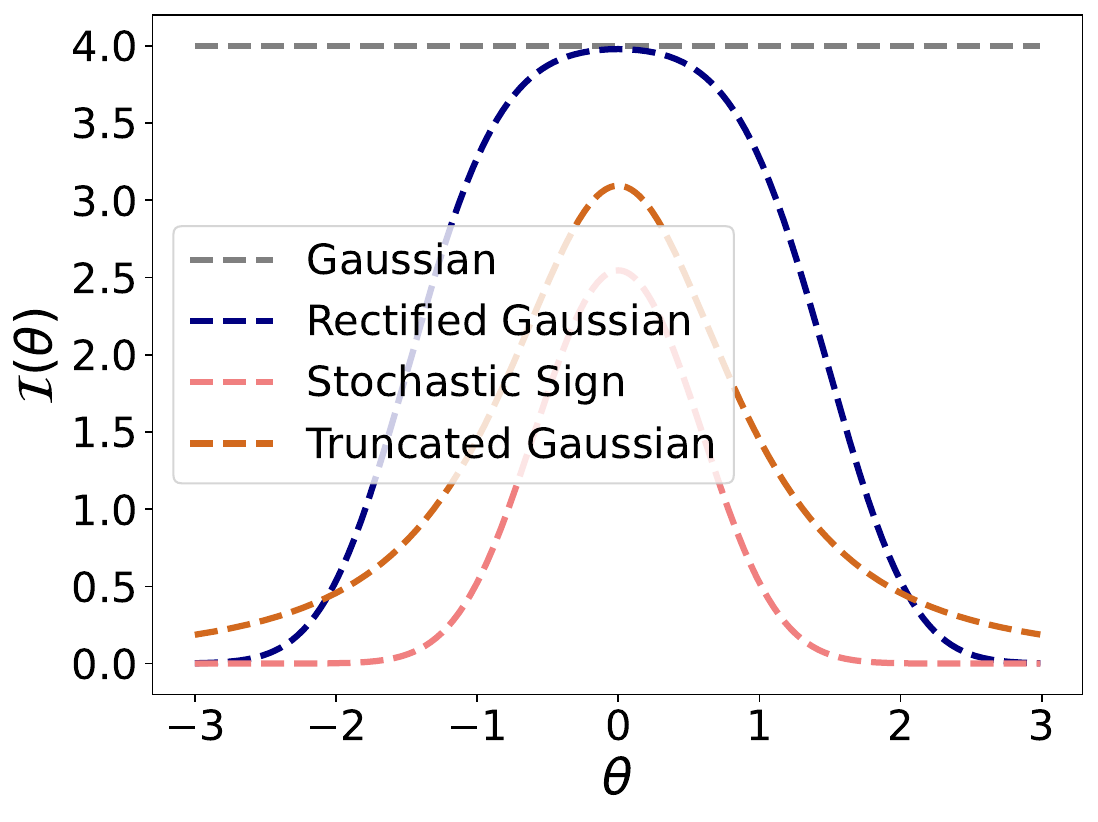}
        \subcaption{FIL,$\sigma=\frac{1}{2}$}
    \end{subfigure}\hfill
    \begin{subfigure}{0.24\textwidth}
        \centering
        \includegraphics[width=0.99\textwidth,trim=10 10 10 30]{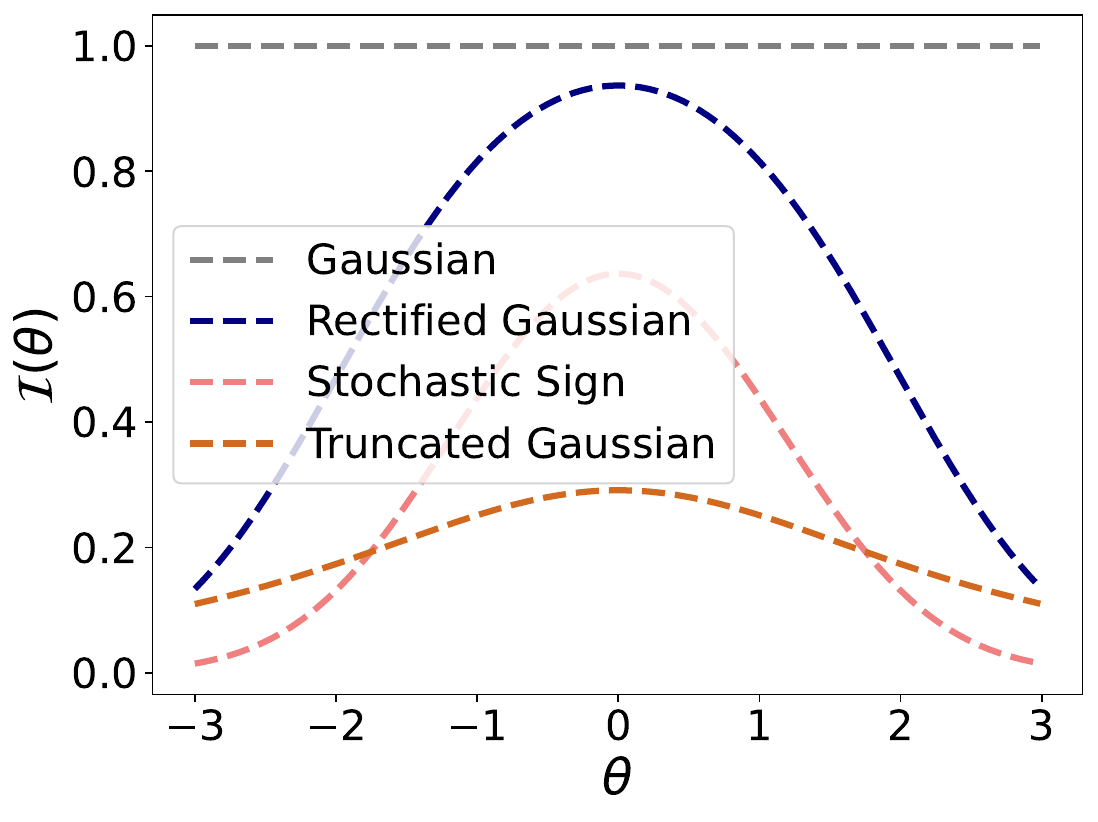}
        \subcaption{FIL,$\sigma=1$}
    \end{subfigure}\hfill
    \begin{subfigure}{0.24\textwidth}
        \centering
        \includegraphics[width=0.99\textwidth,trim=10 10 10 30]{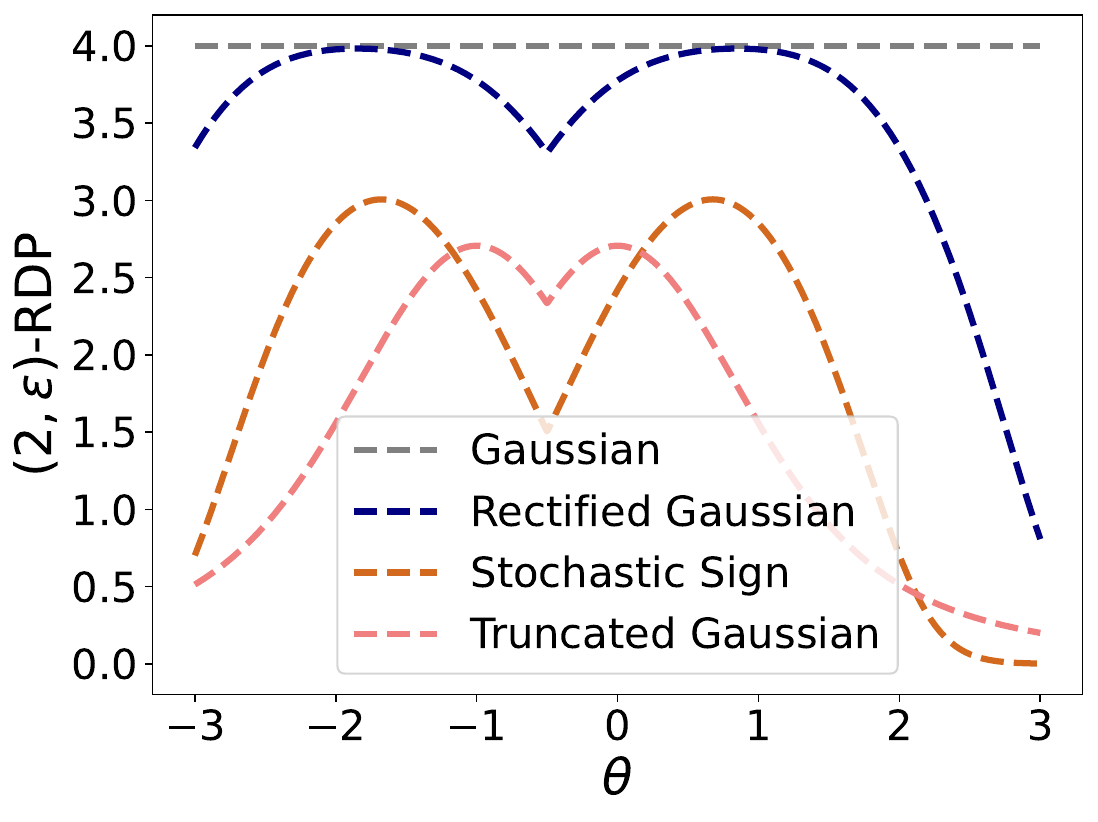}
        \subcaption{\small Per-instance RDP,$\sigma=\frac{1}{2}$}
    \end{subfigure}\hfill
    \begin{subfigure}{0.24\textwidth}
        \centering
        \includegraphics[width=0.99\textwidth,trim=10 10 10 30]{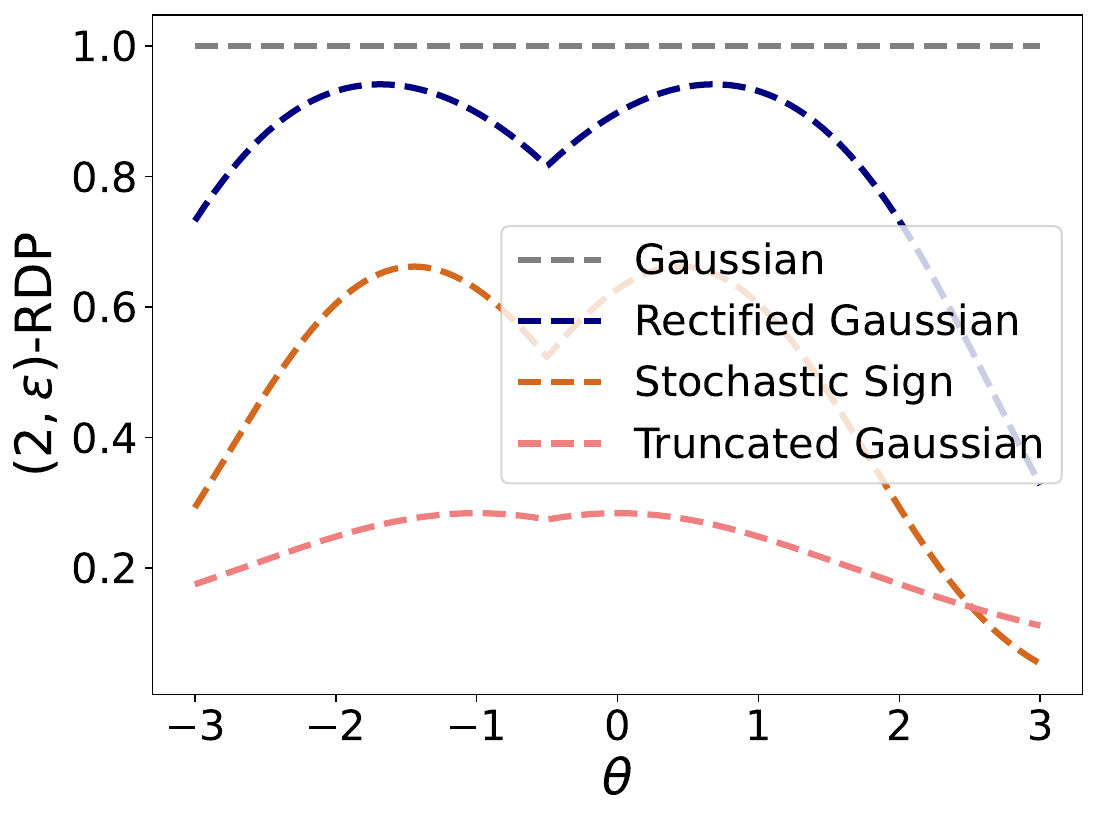}
        \subcaption{\small Per-instance RDP,$\sigma=1$}
    \end{subfigure}
    % \hfill
    % \begin{subfigure}{0.16\textwidth}
    %     \centering
    %     \includegraphics[width=0.99\textwidth,trim=10 10 10 30]{figs/Global_loss_CelebA_pareto_frontier_DP_test.pdf}
    %     \subcaption{DP}
    % \end{subfigure}
    % \begin{subfigure}{0.16\textwidth}
    %     \centering
    %     \includegraphics[width=0.99\textwidth,trim=10 10 10 30]{figs/Global_loss_CelebA_pareto_frontier_DP_test.pdf}
    %     \subcaption{Placeholder}
    % \end{subfigure}
    % \vspace{-0.1in}
    \caption{\small FIL $\eta$ and per-instance RDP $\epsilon$ for different variants of the Gaussian mechanism with input $\theta$. Compared to vanilla Gaussian mechanism, rectified Gaussian, truncated Gaussian and stochastic sign enjoy stronger privacy guarantee, especially when $|\theta|$ is large. Bounded support set $\mathcal{B}=[-1,1]$ for all figures. Sensitivity is controlled at 1 for RDP results.} %
  % \vspace{-0.2in}
    \label{fig:bounded_support_motivation}
\end{figure*}

% \vspace{-.1in}
\paragraph{Contributions.} We list our main contributions below:
% \vspace{-.15in}
\begin{enumerate}[leftmargin=*]
\itemsep0em 
    \item We analyze two modified Gaussian mechanisms with bounded support---the \textit{rectified Gaussian mechanism} and \textit{truncated Gaussian mechanism}---under the data-dependent privacy frameworks of per-instance DP (pDP) and Fisher information loss (FIL).
    \item Our  analysis shows that bounded-support Gaussian mechanisms amplify the privacy guarantee for both privacy metrics compared to the vanilla Gaussian mechanism under the same noise scale $\sigma$. The amount of amplification depends on the true location parameter $\theta$, \emph{i.e.}, the input. 
    % In particular, amplification becomes larger when $\theta$ moves away from 0.
    \item Empirically, we demonstrate that our analysis of the bounded-support Gaussian mechanisms provides improved privacy-utility trade-off in terms of both pDP and FIL on several real world image classification tasks. For example, on CIFAR-100, bounded-support Gaussian mechanisms reduce the pDP bound $\epsilon$ by as much as $>30\%$ with equal test accuracy.

\end{enumerate}

% \vspace{-.1in}
\section{Background \& Related Work}
% \vspace{-.05in}
We first introduce background on privacy tools used in this work, in particular on differential privacy and DP-SGD (Section \ref{background:dp}), and Fisher information loss (Section \ref{background:fil}).

\subsection{Differential Privacy (DP)}
\label{background:dp}
We start with the traditional definition of differential privacy (DP) \citep{DworkR14}.
% Let $\mathcal{X}$ be the domain of data. We say $D,D'\sim\mathcal{X}$ are neighboring datasets if $D$ and $D'$ differs by \textit{exactly one} element.
\begin{definition}[Differential Privacy; \citet{DworkR14}] A randomized algorithm $\mathcal{M}:\mathcal{X}\rightarrow\mathcal{R}$ satisfies $(\epsilon,\delta)$-DP if for every neighboring pairs of datasets $D,D'\sim\mathcal{X}$ and measurable sets $S\subset\mathcal{R}$, we have \[\text{Pr}[\mathcal{M}(D)\in S]\leq e^\epsilon\text{Pr}[\mathcal{M}(D')\in S]+\delta\]

\end{definition}
DP guarantees that the presence of any single data point has little impact on the output distribution of the randomized mechanism. On the other hand, the privacy parameter $\epsilon$ could often be uninformative of the privacy loss of the actual input dataset. Since $(\epsilon,\delta)$-DP is data independent, it fails to fully exploit the power of mechanisms whose privacy loss is not location invariant, e.g. stochastic sign. Hence, in this work, we focus on the following relaxation of DP known as Per-instance Differential Privacy (pDP) for all~\citep{wang2019per} to characterize the privacy guarantee.

\begin{definition}[Per-instance Differential Privacy (pDP) for all; \citet{wang2019per}]
    Given fixed dataset $D$ and any single sample $z\sim\mathcal{X}$, a randomized mechanism $\mathcal{M}:\mathcal{X}\rightarrow\mathcal{R}$ satisfies $(\epsilon,\delta)$\textit{-pDP for all} for $D$ if for every measuarable set $S\subset\mathcal{R}$, we have
    \begin{align*}&\text{Pr}[\mathcal{M}(D)\in S]\leq e^\epsilon\text{Pr}[\mathcal{M}([D,z])\in S]+\delta,\\&\text{Pr}[\mathcal{M}([D,z])\in S]\leq e^\epsilon\text{Pr}[\mathcal{M}(D)\in S]+\delta\end{align*}
\end{definition}
In the remainder of the paper, we use \textit{pDP} as an abbreviation of the above notion. The definition of pDP is closely related to local sensitivity~\citep{nissim2007smooth} where we fix one dataset upfront and only take the max over its neighboring dataset. An important property of pDP is that it allows privacy loss $\epsilon$ to be dependent on the actual dataset $D$. 

In this work we focus on a variant called R\'enyi Differential Privacy (RDP; \citet{mironov2017renyi}), which allows for convenient and tight privacy accounting under composition. Similar to pDP, we extend the RDP definition to allow data dependent account defined below.

\begin{definition}[Per-instance R\'enyi Differential Privacy for all]
\label{def:pRDP}
For a given $\alpha>1$ and dataset $D$, a randomized mechanism $\mathcal{M}:\mathcal{X}\rightarrow\mathcal{R}$ satisfies per-instance $(\alpha,\epsilon)$-RDP for all if for any single sample $z\sim\mathcal{X}$, we have $D_{\alpha}(\mathcal{M}(D)\|\mathcal{M}([D,z]))\leq\epsilon$ where $D_{\alpha}(P\|Q)$ is the R\'enyi Divergence between probability distribution $P$ and $Q$:\[D_{\alpha}(P\|Q)=\frac{1}{\alpha-1}\log\left(\mathbb{E}_{x\sim Q}\left[\left(\frac{P(x)}{Q(x)}\right)^\alpha\right]\right).\]
    
\end{definition}

%As we showed in Figure \ref{fig:bounded_support_motivation}, while the privacy budget $\epsilon$ for the Gaussian mechanism is invariant to the input $\theta$, $\epsilon$ drastically decreases as the location parameter becomes far from 0. While original DP definition provides worst case definition over the location parameter, it does not fully leverage the advantage of amplification via bounded support for location at the tail. Therefore, in this work we focus on ex-post differential privacy defined below.

%\begin{definition}[Ex-post differential privacy\cite{ligett2017accuracy}]
%A randomized algorithm $\mathcal{M}:\mathcal{X}\rightarrow\mathcal{R}$ satisfies $\epsilon(\cdot)$-ex-post differential privacy if for every neighboring pairs of datasets $D,D'\sim\mathcal{X}$ and fixed outcome $o\in\mathcal{R}$, we have \[\max_{D,D'}\left|\log\frac{\text{Pr}[\mathcal{M}(D)=o]}{\text{Pr}[\mathcal{M}(D')=o]}\right|\leq \epsilon(o)\]
%\end{definition}

%Note that ex-post DP has the same semantic as DP as it measures the log likelihood ratio of the dataset being $D$ vs. $D'$. It reduces to $\epsilon$-DP by upper bounding $\epsilon(o)$ with some constant $\epsilon$ for all possible $o$. Therefore, ex-post DP is specific to the output itself, enabling a more precise privacy accounting. Ex-post DP definition has been used in prior works for maximizing privacy for fixed accuracy constraint \cite{ligett2017accuracy}, per-instance privacy accounting \cite{redberg2021privately,feldman2021individual}, etc.

% \gs{Is this paragraph on DP-SGD necessary? could potentially cut this to save space}
\textbf{DP-SGD.} To apply differential privacy to iterative gradient-based learning algorithms such as SGD, \citet{abadi2016deep} proposed DP-SGD based on the Gaussian mechanism. Intuitively, at each iteration, DP-SGD performs $\ell_2$ clipping for per-example gradient before aggregation and applies the Gaussian mechanism to the aggregated gradient. 
By applying composition and post-processing, one can show that DP-SGD provides provable $(\epsilon,\delta)$-DP guarantee.
%To account for the privacy spent in practice, RDP is commonly used to calculate $\epsilon$ for Gaussian mechanism \cite{mironov2019r,wang2019subsampled}. 
Unfortunately, DP-SGD is known to suffer from undesirable privacy-utility trade-offs \citep{bagdasaryan2019differential}, \emph{i.e.}, utility degrades as the privacy budget $\epsilon$ decreases.

\subsection{Fisher Information Loss (FIL)}
\label{background:fil}

%While DP provides rigorous privacy guarantee, it does not explicitly measure the information leakage of subsets of the dataset.
The idea of DP is centered around hypothesis testing, asserting that an adversary cannot distinguish between the output of a private mechanism $\mathcal{M}$ when executed on adjacent datasets.
\citet{hannun2021measuring} proposed an alternative privacy notion known as Fisher information loss (FIL) that instead opts for the parameter estimation interpretation.
%Let $D\sim\mathcal{X}$ be the training set and $\mathcal{M}:\mathcal{X}\rightarrow\mathcal{R}$ be a randomized algorithm. For any output $o \sim \mathcal{M}(D)$, let $\text{Pr}_{\mathcal{M}}(h|D)$ be the probability density function of $o$ given dataset $D$. Then FIL could be defined as the following:
\begin{definition}[Fisher information loss; \citet{hannun2021measuring}]
    \label{def:fil}
    Given a dataset $D\sim\mathcal{X}$, we say that a randomized algorithm $\mathcal{M}:\mathcal{X}\rightarrow\mathcal{R}$ satisfies FIL of $\eta$ w.r.t. $D$ if $\|\mathcal{I}_h(D)\|_2\leq\eta$, where \[\mathcal{I}_h(D)=-\mathbb{E}_{h}\left[\nabla^2_D\log\text{Pr}_{\mathcal{M}}(h|D)\right]\]
    is the Fisher information matrix (FIM), $\|\cdot\|_2$ denotes the matrix 2-norm, and $\nabla^2_D$ denotes the Hessian matrix w.r.t $D$.
\end{definition}
The privacy implication of FIL is given by the Cram\'{e}r-Rao bound, which states that any unbiased estimate of the dataset $D$ has variance lower bounded by $\frac{1}{\eta^2}$ \citep{hannun2021measuring}. In other words, a smaller FIL implies the dataset $D$ is harder to estimate given output $\mathcal{M}(D)$. FIL can also be computed for arbitrary subsets of $D$ such as sample-wise and group-wise. Furthermore, \citet{pmlr-v162-guo22c} derived FIL accounting for DP-SGD by proving the composition theorem and amplification via subsampling for FIL. 

\subsection{Compression-aware Privacy Mechanism}
Several prior efforts tried to study bounded noise mechanism such as bounded Laplace \citep{holohan2018bounded}, generalized gaussian \citep{liu2018generalized} in tasks such as adaptive data analysis and query answering \citep{dagan2022bounded}. However, these work don't show privacy amplification via bounded support, and they do not focus on private SGD which is the main application of our new mechanism.
Our work is also closely related to compression-aware privacy mechanisms~\citep{canonne2020discrete, agarwal2021skellam,chen2022poisson,guo2022interpolated}, where the mechanism's output is compressed to a small number of bits for efficient communication without significantly harming the privacy-utility trade off. Similarly, these mechanisms do not provide amplified privacy themselves. Our work leverages ideas from the compression literature to design mechanisms that come with \textit{improved} privacy guarantees. 

% \subsection{Bounded Mechanisms}
% \label{background:related}
% Recent works have proposed various compression-aware privacy mechanism to reduce communication cost when training in distributed / federated setting, e.g. \cite{canonne2020discrete, agarwal2021skellam,chen2022poisson,guo2022interpolated}. Typically, the main purpose of these mechanisms is to compress the private statistics and does not provide amplified privacy themselves. Several prior efforts tried to study bounded noise mechanism such as bounded Laplace \cite{holohan2018bounded}, generalized gaussian \cite{liu2018generalized} in tasks such as adaptive data analysis and query answering \cite{dagan2022bounded}. Unfortunately, these work didn't show privacy amplification via bounded support. They also do not focus on private SGD which is the main application of our new mechanism.

% \vspace{-.05in}
\section{Privacy Mechanisms}

% Prior works have shown that heuristic compression techniques such as quantization and masking are surprisingly robust for gradient-based optimization \cite{alistarh2017qsgd}.
Private gradient-based optimization can be surprisingly robust to compression. One intriguing example is stochastic sign~\citep{jin2020stochastic}, which applies sign compression on top of the Gaussian mechanism. On one hand, taking the sign reduces the sparse noisy signal in the gradient so that the optimization could is not dominated by those noise. Hence, sign SGD can achieve comparable or even better performance compared to SGD~\citep{bernstein2018signsgd}. Meanwhile, as we only have one bit information per coordinate under stochastic sign, we should get amplified privacy compared to vanilla Gaussian mechanism by post-processing theorem for DP. Unfortunately, such amplification has not been investigated empirically and theoretically previously.

In this work, we propose two alternatives to the Gaussian mechanism, both of which utilize the idea of having a probability density function with pre-determined bounded support. Throughout the rest of the paper, we will use $\phi(x)$ (resp. $\Phi(x)$) to represent the standard Gaussian pdf (resp. cdf). Our first example is to clip the output of Gaussian perturbation. Such clipping leverages similar intuition to reduce noisy signal in the gradient as stochastic sign.
% \footnote{Here we refer to clipping the noisy gradient, rather than the per-example clipping that is used to control sensitivity, which is necessary for private training.}. 
We introduce our first mechanism based on \textit{rectified Gaussian distribution} defined as the following.

\begin{definition}
    \label{eq:rectified_gaussian}
    A random variable $X$ follows a \textit{rectified Gaussian distribution\footnote{Also referred as boundary inflated Gaussian distribution in other works \citep{liu2018generalized}.}}, denoted as $X\sim\mathcal{N}^R(\mu,\sigma^2,[a,b])$ for some $\mu,\sigma>0,a<b$ if its probability density function is the following:
    \begin{equation}
    \begin{split}
        p(x) =\left(\frac{1}{\sigma}\phi\left(\frac{x-\mu}{\sigma}\right)\right)^{U(x;a,b)}\times\left(\Phi\left(\frac{a-\mu}{\sigma}\right)\delta(x;a)\right)^{\mathbf{1}_{x=a}}\left(\Phi\left(\frac{\mu-b}{\sigma}\right)\delta(x;b)\right)^{\mathbf{1}_{x=b}}
    \end{split}
    \end{equation}
    where $U(x;a,b)=1$ iff $a<x<b$ and 0 otherwise; $\mathbf{1}_{x=i}=1$ iff $x=i$ and 0 otherwise; $\delta(x;i)=+\infty$ iff $x=i$ and 0 otherwise.
    
\end{definition}
Note that the rectified Gaussian is a mixture of a continuous random variable and discrete random variable. When $x\in(a,b)$ it follows the distribution of a Gaussian random variable. Otherwise, $x$ either equals to $a$ with probability $\Phi\left(\frac{a-\mu}{\sigma}\right)$ or equals to $b$ with probability $\Phi\left(\frac{\mu-b}{\sigma}\right)$. Therefore, its support is a bounded interval $[a,b]$. Sampling from a rectified Gaussian is equivalent to first sampling from a Gaussian and then clip the value onto bounded interval $[a,b]$. Therefore, a rectified Gaussian random variable contains strictly equivalent or less information about the true location compared to a Gaussian random variable.

Closely related to the rectified Gaussian, the \textit{truncated Gaussian distribution} enjoys the same property of having bounded support while handling the density at the tail differently. The definition is given as the following:

\begin{definition}
    A random variable $X$ follows a \textit{truncated Gaussian distribution}, denoted as $X\sim\mathcal{N}^T(\mu,\sigma^2,[a,b])$ for some $\mu,\sigma>0,a<b$ if its probability density function is the following:
    \begin{equation}
        p(x)=\frac{1}{\sigma}\frac{\phi\left(\frac{x-\mu}{\sigma}\right)}{\Phi\left(\frac{b-\mu}{\sigma}\right)-\Phi\left(\frac{a-\mu}{\sigma}\right)}U(x;a,b)
    \end{equation}
\end{definition}

Unlike rectified Gaussian where the Gaussian probability density at the tail is concentrated at the two ends of the closed support set, the truncated Gaussian directly normalizes the Gaussian probability density within the support set. It is also not straightforward to argue the truncated Gaussian provides amplified privacy since it is not achievable from post processing a Gaussian. 

In the remainder of the paper, we will use $\mathcal{N}^{B}$ as a generalized expression for bounded Gaussian noise, e.g. $\mathcal{N}^R$ and $\mathcal{N}^T$. Similar to the Gaussian mechanism, where we sample a noisy output hypothesis from a Gaussian distribution, we refer the process of sampling from $\mathcal{N}^{B}$ as the bounded Gaussian mechanism defined below.

\begin{definition}[Bounded Gaussian mechanism]
    % \shengyuan{Is there a better notation for this?}
    Let $D\in\mathcal{D}$ be the dataset and $f:\mathcal{D}\rightarrow\mathbb{R}^d$ be a real function. We define the Bounded Gaussian mechanism with support set $\mathcal{B}$ as \begin{equation*}
        BG(f(D))\sim\mathcal{N}^B\left(f(D),\sigma^2\mathbf{I}_d, \mathcal{B}\right).
    \end{equation*}
\end{definition}

Note that our mechanism differs from additive noise mechanism where one adds an independent, bounded noise term to perturb the mechanism input. Instead, we directly sample from a bounded noise distribution with the mechanism input as the location parameter and a data independent bounded support set. The former will not have additional privacy amplification at the tail as the support set for the mechanism changes as the input changes.

\section{Privacy Amplification}

In this section we show how bounded Gaussian mechanism enjoys amplified FIL (Section \ref{subsec:fil}) and pDP (Section \ref{subsec:rdp}) compared to Gaussian mechanism.
\subsection{Privacy Amplification for FIL}
\label{subsec:fil}
\textbf{Setup.} Consider the setting where $D={(x_i,y_i)}_{i=1,\cdots,n}$ is the training set. Let $f$ be a deterministic function that maps $D$ to some $d$ dimensional real vector. 
% Further let $\mathcal{M}(D)=noisy(f(D))$ be a randomized mechanism where $noisy(\cdot)$ samples a noisy vector from some probability distribution. 

\textbf{Compute FIL.} Similar to \citet{hannun2021measuring}, we define $J_f$ to be the Jacobian matrix of $f(D)$ with respect to $D$. We can calculate the closed form FIM for both the rectified Gaussian mechanism and truncated Gaussian mechanism. We will use $\theta=f(D)$ for the rest of the section.
\begin{lemma}
\label{lemma:fil_closed_form}
    The FIL of $\theta^T\sim\mathcal{N}^T\left(\theta,\sigma^2,[a,b]\right)$ is given by $\eta=\eta^T\|J_f\|_2$ where
    \begin{equation}
        \begin{split}
            (\eta^T)^2 = \frac{1}{\sigma^2}-\frac{1}{\sigma^2}\frac{\left(\phi\left(\frac{b-\theta}{\sigma}\right)-\phi\left(\frac{a-\theta}{\sigma}\right)\right)^2}{\left(\Phi\left(\frac{b-\theta}{\sigma}\right)-\Phi\left(\frac{a-\theta}{\sigma}\right)\right)^2}+\frac{1}{\sigma^2}\frac{\frac{a-\theta}{\sigma}\phi\left(\frac{a-\theta}{\sigma}\right)-\frac{b-\theta}{\sigma}\phi\left(\frac{b-\theta}{\sigma}\right)}{\Phi\left(\frac{b-\theta}{\sigma}\right)-\Phi\left(\frac{a-\theta}{\sigma}\right)}.
        \end{split}
    \end{equation}
    The FIL of $\theta^R\sim\mathcal{N}^R\left(\theta,\sigma^2,[a,b]\right)$ is given by $\eta=\eta^R\|J_f\|_2$ where
    \begin{equation}
    \label{eq:rec_fil}
        \begin{split}
            (\eta^R)^2 &= \frac{1}{\sigma^2}\left(\frac{\phi^2\left(\frac{a-\theta}{\sigma}\right)}{\Phi\left(\frac{a-\theta}{\sigma}\right)}+\frac{\phi^2\left(\frac{\theta-b}{\sigma}\right)}{\Phi\left(\frac{\theta-b}{\sigma}\right)}\right)+\frac{1}{\sigma^2}\left(\Phi\left(\frac{\theta-b}{\sigma}\right) - \Phi\left(\frac{\theta-a}{\sigma}\right)\right)\\ &+\frac{1}{\sigma^2}\left(\frac{a-\theta}{\sigma}\phi\left(\frac{a-\theta}{\sigma}\right) - \frac{b-\theta}{\sigma}\phi\left(\frac{b-\theta}{\sigma}\right)\right).
        \end{split}
    \end{equation}
\end{lemma}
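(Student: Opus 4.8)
The plan is to compute the Fisher information matrix directly from its definition $\mathcal{I}_h(D) = -\mathbb{E}_h[\nabla^2_D \log \mathrm{Pr}_{\mathcal{M}}(h|D)]$ and then apply the chain rule through $\theta = f(D)$. Since the Gaussian noise is added coordinatewise with covariance $\sigma^2 \mathbf{I}_d$, the output distribution factorizes over coordinates, so it suffices to analyze the one-dimensional case and then assemble the $d$-dimensional FIM as $J_f^T \mathcal{I}(\theta) J_f$ where $\mathcal{I}(\theta)$ is diagonal with identical entries $(\eta^B)^2$; taking the matrix 2-norm then yields $\eta = \eta^B \|J_f\|_2$ exactly as in \citet{hannun2021measuring}. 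So the real content is the scalar computation: with $t^B \sim \mathcal{N}^B(\theta, \sigma^2, [a,b])$, compute $-\mathbb{E}_{t^B}\left[\frac{\partial^2}{\partial \theta^2} \log p(t^B; \theta)\right]$.

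First I would handle the \emph{truncated} case, since its density is a genuine (absolutely continuous) probability density and the standard identity $\mathcal{I}(\theta) = \mathbb{E}[(\partial_\theta \log p)^2] = -\mathbb{E}[\partial_\theta^2 \log p]$ applies cleanly. Writing $\log p(x;\theta) = -\tfrac{(x-\theta)^2}{2\sigma^2} - \log\left(\Phi(\tfrac{b-\theta}{\sigma}) - \Phi(\tfrac{a-\theta}{\sigma})\right) + \text{const}$, I differentiate twice in $\theta$. The first term contributes $-1/\sigma^2$ to $\partial_\theta^2 \log p$; the normalizing term contributes a piece involving $\partial_\theta Z/Z$ and $(\partial_\theta Z/Z)^2$ where $Z(\theta) = \Phi(\tfrac{b-\theta}{\sigma}) - \Phi(\tfrac{a-\theta}{\sigma})$. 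Using $\Phi' = \phi$ and $\phi'(u) = -u\phi(u)$, one gets $\partial_\theta Z = \tfrac{1}{\sigma}(\phi(\tfrac{b-\theta}{\sigma}) - \phi(\tfrac{a-\theta}{\sigma}))$ (with a sign I would track carefully) and a similar expression for $\partial_\theta^2 Z$ producing the $\tfrac{a-\theta}{\sigma}\phi(\tfrac{a-\theta}{\sigma}) - \tfrac{b-\theta}{\sigma}\phi(\tfrac{b-\theta}{\sigma})$ terms. Since the normalizer does not depend on $x$, taking $-\mathbb{E}_{t^B}[\cdot]$ is trivial once we observe $\mathbb{E}[\partial_\theta^2 \log p]$ has no residual $x$-dependence in the normalizer part; the $-1/\sigma^2$ term survives as is. Collecting terms gives the stated $(\eta^T)^2$.

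For the \emph{rectified} case the density is a mixture of an absolutely continuous part on $(a,b)$ and atoms at $a$ and $b$ with masses $\Phi(\tfrac{a-\theta}{\sigma})$ and $\Phi(\tfrac{\theta-b}{\sigma})$. Here I would compute the Fisher information as $\mathbb{E}[(\partial_\theta \log p)^2]$, splitting the expectation into three contributions. On the continuous part, $\log p(x;\theta) = -\tfrac{(x-\theta)^2}{2\sigma^2} + \text{const}$ so $\partial_\theta \log p = \tfrac{x-\theta}{\sigma^2}$, and $\mathbb{E}[(\tfrac{x-\theta}{\sigma^2})^2 \mathbf{1}_{a<x<b}]$ is a truncated Gaussian second-moment integral that, after integration by parts, yields the $\tfrac{1}{\sigma^2}(\Phi(\tfrac{\theta-b}{\sigma}) - \Phi(\tfrac{\theta-a}{\sigma}))$ term plus the $\tfrac{1}{\sigma^2}(\tfrac{a-\theta}{\sigma}\phi(\tfrac{a-\theta}{\sigma}) - \tfrac{b-\theta}{\sigma}\phi(\tfrac{b-\theta}{\sigma}))$ term. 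At the atom $x=a$, the "density" is $\Phi(\tfrac{a-\theta}{\sigma})$, so $\partial_\theta \log \Phi(\tfrac{a-\theta}{\sigma}) = -\tfrac{1}{\sigma}\tfrac{\phi(\tfrac{a-\theta}{\sigma})}{\Phi(\tfrac{a-\theta}{\sigma})}$, and its squared value weighted by the atom mass $\Phi(\tfrac{a-\theta}{\sigma})$ gives exactly $\tfrac{1}{\sigma^2}\tfrac{\phi^2(\tfrac{a-\theta}{\sigma})}{\Phi(\tfrac{a-\theta}{\sigma})}$; symmetrically for $x=b$ with $\Phi(\tfrac{\theta-b}{\sigma})$. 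Summing the three pieces recovers \eqref{eq:rec_fil}.

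The main obstacle I anticipate is not any single integral but the bookkeeping: getting the signs right in the chain rule for $\Phi(\tfrac{b-\theta}{\sigma})$ versus $\Phi(\tfrac{\theta-b}{\sigma})$ (which differ by a reflection), correctly invoking the information identity $-\mathbb{E}[\partial_\theta^2 \log p] = \mathbb{E}[(\partial_\theta \log p)^2]$ only where regularity holds (it holds for truncated Gaussian, and for rectified Gaussian one should argue via the score-squared form directly since the atoms make the Hessian form awkward), and justifying that differentiation under the integral/expectation is valid — which follows from smoothness of $\phi,\Phi$ and dominated convergence since all relevant integrands are bounded by Gaussian-type functions. A secondary subtlety is confirming that the atom masses themselves are the correct "likelihood" quantities to differentiate; this is justified because the reference measure (Lebesgue on $(a,b)$ plus counting measure on $\{a,b\}$) is independent of $\theta$, so the score is well-defined with respect to a fixed dominating measure.
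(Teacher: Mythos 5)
Your proposal is correct and follows essentially the same route as the paper's proof: a direct scalar computation of the Fisher information from the log-density of the truncated/rectified Gaussian, followed by the chain-rule assembly $J_f^\top \mathcal{I}(\theta) J_f$ (the paper additionally verifies explicitly that the Hessian-of-$f$ term vanishes because the expected score is zero). The only minor deviation is that for the rectified case you use the score-squared form $\mathbb{E}\left[\left(\partial_\theta \log p\right)^2\right]$ split over the interior and the two atoms, whereas the paper differentiates twice and takes the negative expected Hessian; under the fixed dominating measure you describe the two coincide and produce the same three terms.
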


We provide a detailed derivation of the above Lemma in the Appendix \ref{appen:fil}. Prior work has shown that the FIL of the Gaussian mechanism is given by $\eta^G\|J_f\|_2 = \frac{1}{\sigma}\|J_f\|_2$ \citep{hannun2021measuring}. Observe that $\eta^G$ is fixed given the variance, which is not the case for $\eta^T$ and $\eta^R$. Hence, when $a,b$ are fixed constant, FIL for bounded Gaussian mechanism changes as the true location parameter changes. For example, consider the rectified Gaussian mechanism and focus on the case where $|\theta|\rightarrow\infty$. It is easy to verify that the last two terms in Equation \ref{eq:rec_fil} converges to 0. Applying L'Hôpital's rule to the first term also gives us the fact that it converges to 0. Hence, $\eta^R$ is asymptotically close to 0 when $|\theta|\rightarrow\infty$, a huge amplification compared to $\eta^G$ which is a constant. In fact, such privacy amplification does not only happen to the tail. We showed that such amplification is general for both mechanisms and all location parameters.

\begin{theorem}
\label{th:bgm_fil_amp}
    Assume $h\sim\mathcal{N}(f(D),\sigma^2)$ has FIL of $\eta$, $h\sim\mathcal{N}^{B}(f(D),\sigma^2,[a,b])$ has FIL of $\eta^B$,. Then for any $f(D)$, we have $\eta^B\leq\eta$.
    This is true for both $\mathcal{N}^R$ and $\mathcal{N}^T$.
\end{theorem}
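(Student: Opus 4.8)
The plan is to work in the scalar case first (the claim for general $f$ then follows by the chain rule, since all three FIMs share the factor $\|J_f\|_2$ and the $\mathcal{N}^B$ FIM equals $(\eta^B)^2 J_f^\top J_f$ with $\eta^B$ depending only on the scalar location), so it suffices to show $(\eta^B)^2 \le (\eta^G)^2 = 1/\sigma^2$ for every $\theta$. By translation and scaling I can normalize $\sigma = 1$ and write $A = a - \theta$, $B = b - \theta$, so the task is to prove the two closed-form expressions from Lemma~\ref{lemma:fil_closed_form} are at most $1$ for all $A < B$.

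For the \emph{rectified} case the cleanest route is the post-processing/data-processing inequality for Fisher information: $\theta^R$ is obtained from $\theta^G \sim \mathcal{N}(\theta,1)$ by the deterministic map $x \mapsto \operatorname{clip}(x;a,b)$, which does not depend on $\theta$, so the Fisher information about $\theta$ in $\theta^R$ cannot exceed that in $\theta^G$; this is exactly the monotonicity of Fisher information under (channel) post-processing, and it gives $(\eta^R)^2 \le 1$ immediately. If one instead wants a self-contained algebraic argument, the identity to exploit is that the Fisher information of $\theta^R$ can be written as $\mathbb{E}[(\partial_\theta \log p)^2]$ where $\log p$ is locally constant on the atoms at $a,b$, so the ``score'' is supported only on $(a,b)$; expanding and using the Gaussian Stein identities $\int_A^B t\phi(t)\,dt = \phi(A)-\phi(B)$ and $\int_A^B t^2\phi(t)\,dt = A\phi(A) - B\phi(B) + \Phi(B)-\Phi(A)$ recovers Equation~\ref{eq:rec_fil}, and the bound follows because the continuous part contributes a truncated second moment $\le$ its untruncated value minus the nonnegative atom contributions.

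For the \emph{truncated} case there is no post-processing interpretation, so I would argue directly. Write $Z = \Phi(B) - \Phi(A) \in (0,1]$ and observe that the score of the truncated density is $\partial_\theta \log p(x) = (x-\theta) - \partial_\theta \log Z = (x-\theta) - \frac{\phi(A)-\phi(B)}{Z}$, i.e. the truncated variable's score is the Gaussian score $(x-\theta)$ minus its own conditional mean. Hence $(\eta^T)^2 = \operatorname{Var}(X-\theta \mid X \in (a,b))$, the variance of a standard Gaussian conditioned on lying in $(A,B)$. The theorem then reduces to the classical fact that conditioning a Gaussian on an interval can only \emph{decrease} variance: $\operatorname{Var}(X \mid X\in(A,B)) \le \operatorname{Var}(X) = 1$. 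This can be proved by noting the truncated Gaussian density is log-concave with second-derivative-of-log equal to $-1$, so by the Brascamp--Lieb / Cram\'er--Rao-type inequality its variance is at most $1$; alternatively, a direct computation differentiating the truncated variance in the endpoints shows the supremum over $A<B$ is the limit $A\to-\infty, B\to+\infty$, which is $1$.

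The main obstacle is the truncated case: unlike the rectified mechanism it has no free post-processing bound, and the closed form in Lemma~\ref{lemma:fil_closed_form} is a delicate combination of three terms of indefinite sign, so the real work is recognizing that it collapses to the conditional variance $\operatorname{Var}(X \mid X \in (A,B))$ and then invoking (or re-deriving) the log-concavity/Brascamp--Lieb bound that conditioning on a convex set does not increase variance. Everything else---the scalar-to-vector reduction via $\|J_f\|_2$, the normalization $\sigma=1$, and the rectified bound via monotonicity of Fisher information under post-processing---is routine.
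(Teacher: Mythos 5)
Your proposal is correct, and for the rectified case it coincides with the paper's argument: both invoke monotonicity of Fisher information under $\theta$-independent post-processing (the paper cites Theorem 2.86 of Schervish), since clipping the Gaussian output is such a post-processing. For the truncated case, however, you take a genuinely different and substantially shorter route. The paper works with the closed form $(\eta^T)^2 = 1 + f''(\theta)$ where $f(x)=\log\bigl(\Phi(b-x)-\Phi(a-x)\bigr)$ (with $\sigma=1$), and spends its main technical effort (Lemma \ref{lemma:concave}, via the auxiliary sign-change Lemma \ref{lem:derivatives} and a lengthy $\mathrm{erf}$ computation) proving that $f$ is concave. You instead observe that the score of the truncated density is the centered Gaussian score, so $(\eta^T)^2=\operatorname{Var}(X\mid X\in(a,b))$, and then bound this conditional variance by $1$ using strong log-concavity of the truncated Gaussian (Brascamp--Lieb, applied to $g(x)=x$ with $V''\equiv 1$ on the convex support). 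This matches the paper's closed form exactly, and since $f''(\theta)=\operatorname{Var}(X\mid X\in(a,b))-1$, your argument in effect re-proves the paper's concavity lemma by a more conceptual route; what it buys is a one-line replacement for the paper's hardest calculation, at the cost of importing Brascamp--Lieb (or an equivalent fact about truncated-normal variances) rather than staying elementary. One small caveat: in your sketched \emph{alternative} algebraic argument for the rectified case, the claim that the score is supported only on $(a,b)$ is wrong --- the atom masses $\Phi\bigl(\tfrac{a-\theta}{\sigma}\bigr)$ and $\Phi\bigl(\tfrac{\theta-b}{\sigma}\bigr)$ depend on $\theta$, so the score is nonzero at $x=a,b$ (these terms appear in the paper's Equation \ref{eq:rec_fil}); this does not affect your proof, since the data-processing argument already disposes of the rectified case.
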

% The fact that $\eta^R\leq R$ directly follows from the following lemma about post processing properties of Fisher Information.

% \begin{lemma}[Theorem 2.86 from \cite{schervish2012theory}]
%     Let $T=t(X)$ be a statistics. Then the FIM of T and X satisfies $\mathcal{I}_T(\theta)\leq\mathcal{I}_X(\theta)$.
% \end{lemma}
We defer the proof to Appendix \ref{appen:fil}, and in Figure \ref{fig:bounded_support_motivation}, 
% \shengyuan{The last missing piece is to show $f(x)=\log(\Phi(b-x)-\Phi(a-x))$ is concave}. 
 plot how the FIL differs among different mechanisms with respect to the location parameter $\theta$.  The gap between FIL of Gaussian and FIL of bounded Gaussian is the smallest at $\theta=0$ and gradually grows as $\theta$ approaches the tail. Meanwhile, we also observe that more relaxed bounded support (smaller $\sigma$) results in weaker privacy amplification. 
% \saeed{Following is a proof for the statement above:}

% \begin{enumerate}
%     \item Privacy amplification about FIL is worst at $\theta=0$. As the value of $\theta$ approaches the tail, we start to see increasing amplification of privacy over vanilla gaussian mechanism whose support is the entire real axis.
%     \item Given fixed bounded interval, as we reduce the variance $\sigma^2$ (which is equivalent to choosing a more relaxed bounded support), privacy amplification via bounded support starts to weaken.
% \end{enumerate}

% \paragraph{Relation between quantization and rectified gaussian}

% As we mentioned earlier, our initial motivation comes from compression of SGD. In fact, we found that there's inherent relation between performing quantization over a Gaussian random variable and performing rectification over a Gaussian random variable. A quantization function takes a vector/scalar and an alphabet as input and dithers each scalar to its closest element on the alphabet. Then we have the following result:
% \begin{theorem}
%     Let $q_k(\cdot)$ be a $k$-bit quantization function whose alphabet is within bounded range $[a,b]$ and $h$ be some hypothesis sampled from a Gaussian distribution: $h\sim\mathcal{N}(\theta,\sigma^2)$. We have the random variable $\lim_{k\rightarrow\infty}q_k(h)$ follows the distribution of $\mathcal{N}^R(\theta,\sigma^2,[a,b])$. Further, the same convergence result applies to the FIL as well.
% \end{theorem}
% The simplest example of quantization is stochastic sign SGD \cite{jin2020stochastic},
As mentioned in the introduction, another example of the Gaussian mechanism with bounded support is stochastic sign \citep{jin2020stochastic}, where we take the sign of the Gaussian perturbed gradient as the new gradient value. This is equivalent to 1-bit quantization with bounded range $[-1,1]$. We can compute the FIL of stochastic sign: $\eta^{Sgn}=\frac{\phi\left(\frac{\theta}{\sigma}\right)}{\sigma\sqrt{\Phi\left(\frac{\theta}{\sigma}\right)\Phi\left(\frac{-\theta}{\sigma}\right)}}\|J_f\|_2$. Similar to the rectified Gaussian mechanism, it relies on the location parameter, which decreases as $\theta$ approaches the tail. Comparison between stochastic sign and Gaussian is also shown in Figure \ref{fig:bounded_support_motivation}. 

\subsection{Privacy Amplification for per-instance RDP}
\label{subsec:rdp}
We start by introducing RDP accounting for Gaussian mechanism in the scalar case without subsampling.
\begin{lemma}[Proposition 7 from \citet{mironov2017renyi}]
\label{lemma:gaussian_rdp}
    $D_{\alpha}\left(\mathcal{N}(\theta,\sigma^2)\|\mathcal{N}(\theta+c,\sigma^2)\right)=\frac{\alpha c^2}{2\sigma^2}$
\end{lemma}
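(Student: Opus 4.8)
The plan is to evaluate the Rényi divergence directly from its definition (Definition \ref{def:pRDP}) by exploiting the Gaussian moment generating function, since both distributions share the variance $\sigma^2$. Write $P=\mathcal{N}(\theta,\sigma^2)$ with density $p(x)\propto\exp\!\left(-(x-\theta)^2/(2\sigma^2)\right)$ and $Q=\mathcal{N}(\theta+c,\sigma^2)$ with density $q(x)\propto\exp\!\left(-(x-\theta-c)^2/(2\sigma^2)\right)$. Because the normalizing constants are identical, they cancel in the likelihood ratio, and after the substitution $u=x-\theta$ one obtains
\begin{equation*}
\frac{p(x)}{q(x)}=\exp\!\left(\frac{(u-c)^2-u^2}{2\sigma^2}\right)=\exp\!\left(\frac{c^2-2cu}{2\sigma^2}\right).
\end{equation*}

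Next I would raise this ratio to the power $\alpha$ and take the expectation under $Q$. Under $x\sim Q$ we have $u=x-\theta\sim\mathcal{N}(c,\sigma^2)$, so pulling out the deterministic factor $\exp\!\left(\alpha c^2/(2\sigma^2)\right)$ gives
\begin{equation*}
\mathbb{E}_{x\sim Q}\!\left[\left(\frac{p(x)}{q(x)}\right)^{\alpha}\right]
=\exp\!\left(\frac{\alpha c^2}{2\sigma^2}\right)\,\mathbb{E}_{u\sim\mathcal{N}(c,\sigma^2)}\!\left[\exp\!\left(-\frac{\alpha c}{\sigma^2}\,u\right)\right].
\end{equation*}
Applying the Gaussian moment generating identity $\mathbb{E}_{u\sim\mathcal{N}(c,\sigma^2)}[e^{tu}]=\exp\!\left(ct+\sigma^2 t^2/2\right)$ with $t=-\alpha c/\sigma^2$, and combining the three exponent contributions, the total exponent simplifies to $\alpha(\alpha-1)c^2/(2\sigma^2)$, i.e.
\begin{equation*}
\mathbb{E}_{x\sim Q}\!\left[\left(\frac{p(x)}{q(x)}\right)^{\alpha}\right]=\exp\!\left(\frac{\alpha(\alpha-1)c^2}{2\sigma^2}\right).
\end{equation*}

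Finally, taking $\tfrac{1}{\alpha-1}\log(\cdot)$ cancels the factor $\alpha-1$ and leaves $D_{\alpha}\!\left(\mathcal{N}(\theta,\sigma^2)\,\|\,\mathcal{N}(\theta+c,\sigma^2)\right)=\frac{\alpha c^2}{2\sigma^2}$, as claimed. There is no genuine obstacle here: the computation is a single completion of the square followed by the Gaussian moment generating function, and the only points requiring care are (i) that the shared variance makes the density ratio a pure exponential in $u$ so the integral is tractable in closed form, and (ii) bookkeeping of the three exponent terms so that the $(\alpha-1)$ factor emerges cleanly. The identity holds for every $\alpha>1$ (in fact every $\alpha\neq 1$), and letting $\alpha\to 1$ recovers the familiar KL value $c^2/(2\sigma^2)$.
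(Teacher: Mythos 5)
Your computation is correct, and it is essentially the standard derivation: the paper itself does not prove this lemma (it simply cites Proposition 7 of Mironov, 2017), and the same completion-of-the-square step producing the factor $\exp\bigl(\tfrac{(\alpha^2-\alpha)c^2}{2\sigma^2}\bigr)$ appears verbatim inside the paper's Appendix~\ref{appen:dp_derivation} derivation for the truncated Gaussian. So your argument matches the intended route, with the only cosmetic difference that you phrase the integral via the Gaussian moment generating function rather than re-completing the square inside the integrand.
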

Note that we can directly use Lemma \ref{lemma:gaussian_rdp} to show per-instance RDP for all guarantee for Gaussian mechanism because given fixed $\theta$, $c$ maximizes the R\'enyi Divergence between the two Gaussian. 
% There are a few properties specific to Gaussian distribution that enable us to use Lemma \ref{lemma:gaussian_rdp} for RDP accounting. \begin{enumerate}
%     \item R\'enyi Divergence between two Gaussian with same variance is shift invariant, i.e. $D_{\alpha}\left(\mathcal{N}(0,\sigma^2)\|\mathcal{N}(c,\sigma^2)\right)=D_{\alpha}\left(\mathcal{N}(\theta,\sigma^2)\|\mathcal{N}(\theta+c,\sigma^2)\right)$ for any $\theta$.
%     \item $D_{\alpha}\left(\mathcal{N}(0,\sigma^2)\|\mathcal{N}(c,\sigma^2)\right)$ is an increasing function of $c$. Therefore we can use the clipping bound to calculate the divergence.
% \end{enumerate}
Unfortunately, it is unclear whether this true for bounded Gaussian mechanism. Therefore, we ask the following question: \textit{Given fixed $\theta$, what is the dominating pair of distribution under R\'enyi Divergence for rectified and truncated Gaussian distribution?} For the rest of the section, we will use $\mathcal{B}=[-a,a], a>0$ as the default bounded support set. 

\subsubsection{Accounting for Rectified Gaussian}
We start from the simpler case of rectified Gaussian.
\textbf{Calculation of RDP.}
 We first present the closed form for the R\'enyi Divergence between two rectified Gaussian distribution with same variance and bounded support.
\begin{lemma} 
\label{lemma:rgm_rdp}
Let $\Delta(x)=\Phi\left(\frac{a-x}{\sigma}\right)-\Phi\left(\frac{-a-x}{\sigma}\right)$
    \begin{equation}
    \label{eq:rec_rdp}
        \begin{split}
            D_{\alpha}(\mathcal{N}^R(\theta,\sigma,\mathcal{B})\|\mathcal{N}^R(\theta+c,\sigma,\mathcal{B}))
    =&\frac{1}{\alpha-1}\log\Bigg(\exp\left(\frac{(\alpha^2-\alpha)c^2}{2\sigma^2}\right)\Delta\left(\theta+(1-\alpha)c\right)\\
    &+\Phi\left(\frac{-a-\theta}{\sigma}\right)^\alpha\Phi\left(\frac{-a-\theta-c}{\sigma}\right)^{1-\alpha}
    +\Phi\left(\frac{\theta-a}{\sigma}\right)^\alpha\Phi\left(\frac{\theta+c-a}{\sigma}\right)^{1-\alpha}\Bigg)
        \end{split}
    \end{equation}
\end{lemma}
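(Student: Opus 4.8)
The plan is to compute the R\'enyi integral directly, exploiting the mixed discrete--continuous structure of the rectified Gaussian. Recall from Definition~\ref{eq:rectified_gaussian} that $\mathcal{N}^R(\mu,\sigma^2,[-a,a])$ places an atom of mass $\Phi\!\left(\frac{-a-\mu}{\sigma}\right)$ at $-a$, an atom of mass $\Phi\!\left(\frac{\mu-a}{\sigma}\right)$ at $a$, and has density $\frac1\sigma\phi\!\left(\frac{x-\mu}{\sigma}\right)$ on the open interval $(-a,a)$. Writing $P=\mathcal{N}^R(\theta,\sigma^2,[-a,a])$ and $Q=\mathcal{N}^R(\theta+c,\sigma^2,[-a,a])$, both are absolutely continuous with respect to the common dominating measure $\nu=\mathrm{Leb}|_{(-a,a)}+\delta_{-a}+\delta_{a}$, and since they share the same atoms and the same continuous support, $P\ll Q\ll P$. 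Hence $D_\alpha(P\|Q)=\frac{1}{\alpha-1}\log\!\left(\int \big(\tfrac{dP}{d\nu}\big)^\alpha\big(\tfrac{dQ}{d\nu}\big)^{1-\alpha}\,d\nu\right)$ is finite, and the integral splits as a sum of three terms: the atom at $-a$, the atom at $a$, and the integral over $(-a,a)$.

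The two atom terms are read off immediately from the masses above, giving $\Phi\!\left(\frac{-a-\theta}{\sigma}\right)^\alpha\Phi\!\left(\frac{-a-\theta-c}{\sigma}\right)^{1-\alpha}$ and $\Phi\!\left(\frac{\theta-a}{\sigma}\right)^\alpha\Phi\!\left(\frac{\theta+c-a}{\sigma}\right)^{1-\alpha}$, which are precisely the last two summands of Equation~\ref{eq:rec_rdp}. For the continuous part I would redo the computation underlying Lemma~\ref{lemma:gaussian_rdp} but with the integration restricted to $[-a,a]$: after cancelling the $\frac1\sigma$ powers the integrand is $\frac1\sigma\,\phi\!\left(\frac{x-\theta}{\sigma}\right)^\alpha\phi\!\left(\frac{x-\theta-c}{\sigma}\right)^{1-\alpha}$, and substituting $u=x-\theta$ and completing the square yields the identity $\alpha u^2-(\alpha-1)(u-c)^2=(u+(\alpha-1)c)^2-(\alpha^2-\alpha)c^2$. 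Consequently the integrand equals $\exp\!\left(\frac{(\alpha^2-\alpha)c^2}{2\sigma^2}\right)\cdot\frac1\sigma\,\phi\!\left(\frac{u+(\alpha-1)c}{\sigma}\right)$; integrating $u$ over $[-a-\theta,\,a-\theta]$ and changing variables to $v=u+(\alpha-1)c$ produces $\exp\!\left(\frac{(\alpha^2-\alpha)c^2}{2\sigma^2}\right)\Big(\Phi\!\left(\frac{a-\theta+(\alpha-1)c}{\sigma}\right)-\Phi\!\left(\frac{-a-\theta+(\alpha-1)c}{\sigma}\right)\Big)$, and the parenthesized factor is exactly $\Delta(\theta+(1-\alpha)c)$ by the definition of $\Delta$. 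Adding the three contributions and applying $\frac{1}{\alpha-1}\log(\cdot)$ gives the claimed formula.

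I expect the only delicate point to be bookkeeping around the mixed measure rather than any genuine analytic difficulty: one must fix the dominating measure $\nu$, verify mutual absolute continuity so that the R\'enyi integral is finite and legitimately decomposes into ``continuous $+$ two atoms,'' and keep track of which $\Phi$ argument goes with which endpoint (the upper-endpoint mass follows the $\Phi\!\left(\frac{\mu-b}{\sigma}\right)$ rule of Definition~\ref{eq:rectified_gaussian} specialized to $b=a$). The continuous-part integral is just Mironov's completing-the-square argument with $\mathbb{R}$ replaced by $[-a,a]$; this truncation is exactly what converts the constant $\exp\!\left(\frac{(\alpha^2-\alpha)c^2}{2\sigma^2}\right)$ of Lemma~\ref{lemma:gaussian_rdp} into $\exp\!\left(\frac{(\alpha^2-\alpha)c^2}{2\sigma^2}\right)\Delta(\theta+(1-\alpha)c)$. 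Note that this lemma only evaluates the divergence for a fixed shift $c$; identifying the worst-case $c$ (the dominating pair) is a separate matter handled afterward.
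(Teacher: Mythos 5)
Your proposal is correct and takes essentially the same route as the paper's derivation: decompose the R\'enyi integral into the two boundary atoms plus the continuous part on $(-a,a)$, read the atom contributions off the masses $\Phi\!\left(\frac{-a-\mu}{\sigma}\right)$ and $\Phi\!\left(\frac{\mu-a}{\sigma}\right)$, and evaluate the restricted Gaussian integral by completing the square to get $\exp\!\left(\frac{(\alpha^2-\alpha)c^2}{2\sigma^2}\right)\Delta(\theta+(1-\alpha)c)$. The only difference is your explicit handling of the dominating measure and mutual absolute continuity, which the paper leaves implicit.
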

\textbf{Privacy Amplification.} As we mentioned earlier, rectified Gaussian mechanism could be viewed as post-processing (by doing clipping) of Gaussian mechanism, we can rely on Data Processing Inequality (DPI) of R\'enyi Divergence.
\begin{lemma}[Theorem 9 from \citet{van2014renyi}]
\label{lemma:rdp_dpi}
    If we fix the transition probability $A_{Y|X}$ for a Markov Chain $X\rightarrow Y$, we have $D_{\alpha}(P_Y\|Q_Y)\leq D_{\alpha}(P_X\|Q_X)$.
\end{lemma}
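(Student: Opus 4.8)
\emph{Proof proposal.} This is a restatement of the data-processing inequality for R\'enyi divergence, so my plan is to reduce it to the DPI for $f$-divergences and then give a short self-contained argument for that. Write $D_{\alpha}(P\|Q)=\frac{1}{\alpha-1}\log H_{\alpha}(P\|Q)$ where $H_{\alpha}(P\|Q):=\mathbb{E}_{x\sim Q}[(P(x)/Q(x))^{\alpha}]=\int P(x)^{\alpha}Q(x)^{1-\alpha}\,dx$. In the RDP regime $\alpha>1$ the map $t\mapsto\frac{1}{\alpha-1}\log t$ is strictly increasing on $(0,\infty)$, so it suffices to establish the contraction $H_{\alpha}(P_Y\|Q_Y)\le H_{\alpha}(P_X\|Q_X)$. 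Here $H_{\alpha}$ is exactly the $f$-divergence generated by $f(t)=t^{\alpha}$, which is convex on $(0,\infty)$ for $\alpha>1$ (equivalently, the perspective $(p,q)\mapsto p^{\alpha}q^{1-\alpha}$ is jointly convex on $\mathbb{R}_{+}^2$).

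First I would prove the $f$-divergence contraction under the channel $A_{Y|X}$ directly via joint convexity and Jensen's inequality. For a fixed output $y$, the likelihood ratio of the pushforwards satisfies $\frac{P_Y(y)}{Q_Y(y)}=\frac{\int A(y|x)P_X(x)\,dx}{\int A(y|x)Q_X(x)\,dx}=\int w_y(x)\,\frac{P_X(x)}{Q_X(x)}\,dx$ with $w_y(x):=\frac{A(y|x)Q_X(x)}{Q_Y(y)}$, which for fixed $y$ is a probability density in $x$. Convexity of $f$ and Jensen's inequality give $f\!\left(\frac{P_Y(y)}{Q_Y(y)}\right)\le\int w_y(x)\,f\!\left(\frac{P_X(x)}{Q_X(x)}\right)dx$; multiplying by $Q_Y(y)$, integrating over $y$, and using $\int A(y|x)\,dy=1$ (Fubini--Tonelli) to collapse the $y$-integral yields $H_{\alpha}(P_Y\|Q_Y)\le H_{\alpha}(P_X\|Q_X)$. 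Applying the increasing map $\frac{1}{\alpha-1}\log(\cdot)$ then gives $D_{\alpha}(P_Y\|Q_Y)\le D_{\alpha}(P_X\|Q_X)$. An equivalent packaging: form the joint laws $P_{XY},Q_{XY}$ with the shared kernel $A$; then $H_{\alpha}(P_{XY}\|Q_{XY})=H_{\alpha}(P_X\|Q_X)$ since $dP_{XY}/dQ_{XY}=dP_X/dQ_X$ depends only on $x$, while marginalizing onto $Y$ can only decrease $H_{\alpha}$ by the convexity argument above, and chaining equality with inequality gives the claim.

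The main obstacle is measure-theoretic bookkeeping rather than genuine difficulty: I need $P_Y\ll Q_Y$ (which follows from $P_X\ll Q_X$, since on $\{Q_Y=0\}$ one has $A(y|x)Q_X(x)=0$ for a.e.\ $x$ and hence $P_Y=0$ there), I need Fubini to apply so that $\int A(y|x)\,dy$ really returns $1$, and I must interpret all the integrals with respect to the appropriate base measure to accommodate the mixed continuous/discrete support that arises for the rectified Gaussian. For $\alpha\in(0,1)$ the integral inequality reverses, but so does the sign of $\frac{1}{\alpha-1}$, so the divergence-level DPI still holds; since our accounting only invokes $\alpha>1$, I would state and use just that case. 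As this is a known theorem, in the paper I would ultimately simply cite \citet{van2014renyi} and include the sketch above only for completeness.
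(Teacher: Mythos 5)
Your proposal is correct. The paper does not prove this lemma at all---it is imported verbatim as Theorem 9 of \citet{van2014renyi}---so there is no in-paper argument to compare against; your reduction to the $f$-divergence contraction for $f(t)=t^{\alpha}$ via the weights $w_y(x)=A(y|x)Q_X(x)/Q_Y(y)$ and Jensen's inequality, followed by the monotone map $t\mapsto\frac{1}{\alpha-1}\log t$, is exactly the standard route by which the cited result is established, and your handling of the edge cases (absolute continuity, the trivial case $D_{\alpha}(P_X\|Q_X)=\infty$, mixed base measures for the rectified Gaussian, and the sign flip for $\alpha\in(0,1)$) is sound. Citing \citet{van2014renyi} and omitting the derivation, as you suggest and as the paper does, is entirely appropriate.
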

A direct consequence of Lemma \ref{lemma:rdp_dpi} is that we get privacy amplification though rectification:
\begin{proposition} Given fixed $\theta$ and norm clipping bound $c$, we have $D_{\alpha}\left(\mathcal{N}^R\left(\theta,\sigma^2,\mathcal{B}\right)\|\mathcal{N}^R\left(\theta+c,\sigma^2,\mathcal{B}\right)\right) \leq D_{\alpha}\left(\mathcal{N}\left(\theta,\sigma^2\right)\|\mathcal{N}\left(\theta+c,\sigma^2\right)\right)$
\end{proposition}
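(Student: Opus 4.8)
The plan is to invoke the Data Processing Inequality (Lemma~\ref{lemma:rdp_dpi}) directly. The key observation, already noted in the paper, is that sampling from $\mathcal{N}^R(\mu,\sigma^2,\mathcal{B})$ is \emph{exactly} the same as first sampling $Y\sim\mathcal{N}(\mu,\sigma^2)$ and then applying the deterministic clipping map $\kappa:\mathbb{R}\to\mathcal{B}$ given by $\kappa(y)=\min(b,\max(a,y))$ (with $\mathcal{B}=[a,b]$, or $[-a,a]$ in the notation of the section). This map does not depend on $\mu$, hence does not depend on which of the two inputs $\theta$ or $\theta+c$ we are considering. So if we let $X$ range over the pair $(\mathcal{N}(\theta,\sigma^2),\mathcal{N}(\theta+c,\sigma^2))$ and define $Y=\kappa(X)$ via the fixed transition kernel $A_{Y|X}$ (here a point mass at $\kappa(x)$), then $P_Y=\mathcal{N}^R(\theta,\sigma^2,\mathcal{B})$ and $Q_Y=\mathcal{N}^R(\theta+c,\sigma^2,\mathcal{B})$.

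First I would state the factorization $\mathcal{N}^R(\mu,\sigma^2,\mathcal{B})=\kappa_\#\,\mathcal{N}(\mu,\sigma^2)$ as the pushforward of the Gaussian under $\kappa$, and verify it matches Definition~\ref{eq:rectified_gaussian}: for $x\in(a,b)$ the density is unchanged because $\kappa$ is the identity there; the atom at $a$ collects exactly the Gaussian mass on $(-\infty,a]$, namely $\Phi((a-\mu)/\sigma)$, and symmetrically the atom at $b$ gets $\Phi((\mu-b)/\sigma)$. Second, I would note that $\kappa$ (being deterministic and input-independent) defines a legitimate fixed Markov transition $X\to Y$ in the sense required by Lemma~\ref{lemma:rdp_dpi}. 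Third, I would apply Lemma~\ref{lemma:rdp_dpi} with $P_X=\mathcal{N}(\theta,\sigma^2)$, $Q_X=\mathcal{N}(\theta+c,\sigma^2)$ to conclude
\[
D_{\alpha}\!\left(\mathcal{N}^R(\theta,\sigma^2,\mathcal{B})\,\|\,\mathcal{N}^R(\theta+c,\sigma^2,\mathcal{B})\right)\le D_{\alpha}\!\left(\mathcal{N}(\theta,\sigma^2)\,\|\,\mathcal{N}(\theta+c,\sigma^2)\right),
\]
which is the claim.

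I do not expect a genuine obstacle here; the proposition is essentially a one-line corollary of DPI once the pushforward identity is pinned down. The only point requiring a little care is that $\mathcal{N}^R$ is a mixed discrete–continuous distribution, so one should be slightly careful that the Rényi divergence and the DPI statement are being applied with respect to a common dominating measure (Lebesgue on $(a,b)$ plus atoms at $a,b$); but Lemma~\ref{lemma:rdp_dpi} as quoted from \citet{van2014renyi} holds for general measurable spaces, so this is not an issue. One could alternatively give a self-contained proof by plugging the closed form of Lemma~\ref{lemma:rgm_rdp} against $\frac{\alpha c^2}{2\sigma^2}$ from Lemma~\ref{lemma:gaussian_rdp} and checking the inequality term by term, but that is strictly more work and less illuminating than the DPI argument, so I would relegate it to at most a remark.
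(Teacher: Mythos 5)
Your argument is exactly the paper's: the proposition is stated in the text as a direct consequence of the data processing inequality (Lemma~\ref{lemma:rdp_dpi}), using the fact that the rectified Gaussian is the pushforward of the Gaussian under the fixed, input-independent clipping map. Your write-up simply makes the pushforward identity and the dominating-measure point explicit, which is fine but adds nothing beyond the paper's one-line justification.
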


\subsubsection{Accounting for Truncated Gaussian}
\textbf{Calculation of RDP.} Similar to the previous section, we
first present how to calculate R\'enyi Divergence for truncated
Gaussian distribution.

\begin{lemma} Let $\Delta$ defined similarly in Lemma \ref{lemma:rgm_rdp}
\label{lemma:tgm_rdp}
    \begin{equation}
    \label{eq:trunc_rdp}
        \begin{split}
            D_{\alpha}(\mathcal{N}^T(\theta,\sigma^2,\mathcal{B})\|\mathcal{N}^T(\theta+c,\sigma^2,\mathcal{B}))
    =D_{\alpha}(\mathcal{N}(\theta,\sigma^2)\|\mathcal{N}(\theta+c,\sigma^2))
    +\log\left(\frac{\Delta(\theta+c)}{\Delta(\theta)}\left(\frac{\Delta((1-\alpha)(c+\theta))}{\Delta(\theta)}\right)^{1-\alpha}\right)
        \end{split}
    \end{equation}
\end{lemma}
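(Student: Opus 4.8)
The plan is to evaluate $D_\alpha$ directly from its definition, exploiting the fact that a truncated Gaussian density is an ordinary Gaussian density rescaled by the input‑dependent normalizer $\Delta(\cdot)$; this reduces almost the entire computation to the standard Gaussian one, after which the term $D_\alpha(\mathcal{N}(\theta,\sigma^2)\|\mathcal{N}(\theta+c,\sigma^2))$ is read off from Lemma~\ref{lemma:gaussian_rdp}. Writing $p_\theta$ for the density of $\mathcal{N}^T(\theta,\sigma^2,\mathcal{B})$ and $q_{\theta+c}$ for that of $\mathcal{N}^T(\theta+c,\sigma^2,\mathcal{B})$, I would first note that the truncated Gaussian --- unlike the rectified Gaussian --- is purely absolutely continuous, with no atoms at $\pm a$, so
\[
D_\alpha\left(\mathcal{N}^T(\theta,\sigma^2,\mathcal{B})\,\|\,\mathcal{N}^T(\theta+c,\sigma^2,\mathcal{B})\right)=\frac{1}{\alpha-1}\log\int_{-a}^{a}p_\theta(x)^{\alpha}\,q_{\theta+c}(x)^{1-\alpha}\,dx .
\]
Since $p_\theta(x)=\frac1\sigma\phi(\tfrac{x-\theta}{\sigma})/\Delta(\theta)$ (and likewise for $q_{\theta+c}$ with $\Delta(\theta+c)$), the two input‑dependent normalizers pull out of the integral as the constant $\Delta(\theta)^{-\alpha}\Delta(\theta+c)^{-(1-\alpha)}$, leaving the Gaussian integral $\int_{-a}^{a}\tfrac1\sigma\,\phi(\tfrac{x-\theta}{\sigma})^{\alpha}\,\phi(\tfrac{x-\theta-c}{\sigma})^{1-\alpha}\,dx$.

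Next I would dispatch that Gaussian integral with the standard completing‑the‑square identity
\[
\phi\!\left(\tfrac{x-\theta}{\sigma}\right)^{\alpha}\phi\!\left(\tfrac{x-\theta-c}{\sigma}\right)^{1-\alpha}=\exp\!\left(\tfrac{(\alpha^2-\alpha)c^2}{2\sigma^2}\right)\phi\!\left(\tfrac{x-\theta-(1-\alpha)c}{\sigma}\right),
\]
so that, by the definition of $\Delta$, the Gaussian integral equals $\exp\!\big(\tfrac{(\alpha^2-\alpha)c^2}{2\sigma^2}\big)\,\Delta(\theta+(1-\alpha)c)$. This continuous contribution is exactly the first summand appearing inside the logarithm in Lemma~\ref{lemma:rgm_rdp}: the truncated case is precisely the rectified case with its two boundary‑atom terms deleted and the extra normalizers attached, so that earlier formula is a convenient cross‑check for this step.

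Finally I would take $\tfrac1{\alpha-1}\log$ of the product $\Delta(\theta)^{-\alpha}\Delta(\theta+c)^{-(1-\alpha)}\exp\!\big(\tfrac{(\alpha^2-\alpha)c^2}{2\sigma^2}\big)\,\Delta(\theta+(1-\alpha)c)$. The exponential factor contributes $\tfrac1{\alpha-1}\cdot\tfrac{(\alpha^2-\alpha)c^2}{2\sigma^2}=\tfrac{\alpha c^2}{2\sigma^2}=D_\alpha(\mathcal{N}(\theta,\sigma^2)\|\mathcal{N}(\theta+c,\sigma^2))$ by Lemma~\ref{lemma:gaussian_rdp}, and the three $\Delta$‑factors, after splitting $\tfrac{\alpha}{\alpha-1}=1+\tfrac1{\alpha-1}$ and $\tfrac{1-\alpha}{\alpha-1}=-1$, regroup into $\log\tfrac{\Delta(\theta+c)}{\Delta(\theta)}+\tfrac1{\alpha-1}\log\tfrac{\Delta(\theta+(1-\alpha)c)}{\Delta(\theta)}$, which gives the claimed $\Delta$‑correction term after combining into a single logarithm. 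There is no genuine obstacle here --- the argument is elementary Gaussian algebra --- so the only things to watch are performing the completing‑the‑square step correctly and keeping straight which Gaussian probability mass each $\Delta(\cdot)$ denotes; any slip is caught immediately by comparing the continuous part with Lemma~\ref{lemma:rgm_rdp}.
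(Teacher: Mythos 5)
Your proposal is correct and follows essentially the same route as the paper's own derivation: write the truncated densities as Gaussians rescaled by $\Delta(\cdot)$, pull the normalizers out of the Rényi integral, complete the square to get $\exp\bigl(\tfrac{(\alpha^2-\alpha)c^2}{2\sigma^2}\bigr)\Delta(\theta+(1-\alpha)c)$, and read off the Gaussian term from Lemma~\ref{lemma:gaussian_rdp}. Note that your final correction term $\log\tfrac{\Delta(\theta+c)}{\Delta(\theta)}+\tfrac{1}{\alpha-1}\log\tfrac{\Delta(\theta+(1-\alpha)c)}{\Delta(\theta)}$ agrees with the paper's appendix derivation, whereas the displayed statement of Lemma~\ref{lemma:tgm_rdp} (exponent $1-\alpha$ and argument $(1-\alpha)(c+\theta)$) appears to contain typos.
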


\textbf{Privacy Amplification.} As mentioned earlier, since truncated Gaussian is not achievable from post-processing of Gaussian, Lemma \ref{lemma:rdp_dpi} is not directly applicable here. Therefore, following Definition \ref{def:pRDP}, we need to find a pair of distribution that maximizes the R\'enyi Divergence given location parameter $\theta$ and clipping bound $c$, i.e. find \[\max_{x:0<x\leq c}D_{\alpha}\left(\mathcal{N}^R\left(\theta,\sigma^2,S\right)\|\mathcal{N}^R\left(\theta+x,\sigma^2,S\right)\right).\] We can show the monotonicity of R\'enyi Divergence between two truncated Gaussian with respect to sensitivity.

\begin{lemma}
    \label{lemma:monotonicity}
    For any fixed $\theta,\sigma>0$ and $\mathcal{B}=[-a,a]$, we have $D_{\alpha}\left(\mathcal{N}^T\left(\theta,\sigma^2,\mathcal{B}\right)\|\mathcal{N}^T\left(\theta+x,\sigma^2,\mathcal{B}\right)\right)$ is an increasing function of $x$.
\end{lemma}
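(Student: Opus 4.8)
\textbf{Proof proposal for Lemma~\ref{lemma:monotonicity}.}

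The plan is to work directly with the closed form from Lemma~\ref{lemma:tgm_rdp} and show that the expression is increasing in $x$ by differentiation. Write $g(x) = D_{\alpha}(\mathcal{N}^T(\theta,\sigma^2,\mathcal{B})\|\mathcal{N}^T(\theta+x,\sigma^2,\mathcal{B}))$. By Lemma~\ref{lemma:tgm_rdp}, $g(x) = \frac{\alpha x^2}{2\sigma^2} + \frac{1}{\alpha-1}\left[\log\Delta(\theta+x) + (1-\alpha)\log\Delta(\theta - \alpha x) \text{ --- wait, } \Delta((1-\alpha)(x+\theta))\right] - \log\Delta(\theta)$ (I will need to be careful transcribing the argument of the middle $\Delta$, which is $(1-\alpha)(x+\theta)$; the $-\log\Delta(\theta)$ terms collect the constants). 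The first term is the Gaussian RDP contribution from Lemma~\ref{lemma:gaussian_rdp}, which is manifestly increasing in $x$ for $x>0$. So the whole task reduces to controlling the derivative of the correction term $h(x) := \log\Delta(\theta+x) + (1-\alpha)\log\Delta((1-\alpha)(\theta+x))$, scaled by $\frac{1}{\alpha-1}$.

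First I would record the key analytic facts about $\Delta$. Since $\Delta(x) = \Phi\!\left(\frac{a-x}{\sigma}\right) - \Phi\!\left(\frac{-a-x}{\sigma}\right)$, its derivative is $\Delta'(x) = \frac{1}{\sigma}\left(\phi\!\left(\frac{-a-x}{\sigma}\right) - \phi\!\left(\frac{a-x}{\sigma}\right)\right)$, and $\Delta$ is the probability mass a $\mathcal{N}(x,\sigma^2)$ puts on $[-a,a]$; hence $\Delta$ is strictly positive, smooth, even in $x$, unimodal with maximum at $x=0$, and log-concave (as the integral of a Gaussian density over a fixed interval — a standard fact, since $\phi$ is log-concave and the class of log-concave functions is closed under such marginalization/truncation). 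Log-concavity of $\Delta$ means $(\log\Delta)'(x) = \Delta'(x)/\Delta(x)$ is nonincreasing. The natural strategy is then to express $\frac{1}{\alpha-1}h'(x)$ using $\psi := (\log\Delta)'$ as
\[
\frac{1}{\alpha-1}h'(x) = \frac{1}{\alpha-1}\Bigl(\psi(\theta+x) - (\alpha-1)^2\,\psi((1-\alpha)(\theta+x))\Bigr),
\]
since the chain rule contributes a factor $(1-\alpha)$ from the inner argument times the prefactor $(1-\alpha)$, giving $(1-\alpha)^2 = (\alpha-1)^2$. The aim is to show $g'(x) = \frac{\alpha x}{\sigma^2} + \frac{1}{\alpha-1}h'(x) \ge 0$ for all $x \in (0,c]$.

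The main obstacle is that $h'(x)$ can be negative and one must show the Gaussian term $\frac{\alpha x}{\sigma^2}$ dominates, and that the sign analysis works for all $\alpha>1$ including the delicate regime $\alpha$ close to $1$ (where the $\frac{1}{\alpha-1}$ prefactor blows up). To handle this I would (i) use log-concavity to bound $\psi(\theta+x) \le \psi(\theta+x)$ trivially but more usefully note $\psi$ is odd (since $\Delta$ is even, $\Delta'$ is odd, so $\psi = \Delta'/\Delta$ is odd) and $\psi(t)$ has the sign of $-t$ with $|\psi(t)| \le t/\sigma^2$ (this last bound because $\Delta$ dominates the corresponding ratio for the full Gaussian in the relevant sense — concretely one shows $|\Delta'(x)/\Delta(x)| \le |x|/\sigma^2$, which follows from $\Delta$ being a Gaussian-smoothed indicator, or equivalently from the fact that $-\log\Delta$ has second derivative bounded below by $1/\sigma^2$... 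I should double-check the direction, but the operative inequality is a comparison of $\psi$ with the pure-Gaussian score $-t/\sigma^2$). Combining, for $x>0$ and $\theta$ such that $\theta+x>0$, $\psi(\theta+x)$ is negative but bounded by $-(\theta+x)/\sigma^2$ in magnitude, while $-(\alpha-1)^2\psi((1-\alpha)(\theta+x)) = -(\alpha-1)^2\psi(-( \alpha-1)(\theta+x))$ is positive (since $\psi$ of a negative argument is positive) — so the second piece helps. The remaining worry is when $\theta+x<0$; there the roles flip and one uses the Gaussian term together with the crude bound $|\psi(t)|\le |t|/\sigma^2$ applied to both terms, yielding $\frac{1}{\alpha-1}|h'(x)| \le \frac{1}{\alpha-1}\bigl(\frac{|\theta+x|}{\sigma^2} + (\alpha-1)^2\frac{(\alpha-1)|\theta+x|}{\sigma^2}\bigr)$, which after the algebra must be shown $\le \frac{\alpha x}{\sigma^2}$ — and this is exactly the step I expect to require either an extra structural observation (e.g. that the correction term in Lemma~\ref{lemma:tgm_rdp} is itself monotone, provable by a convexity/Jensen argument on $\log\Delta$ viewed as a function along the exponential tilt) or a more careful two-sided estimate rather than the crude triangle-inequality bound. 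An alternative cleaner route, which I would try first, is to avoid differentiation entirely: write the correction term as $\log \mathbb{E}[\,\cdot\,]$ of a likelihood ratio restricted to $[-a,a]$ and recognize the whole divergence $g(x)$ as $\frac{1}{\alpha-1}\log$ of the $\alpha$-th moment of the density ratio between two truncated Gaussians, then argue monotonicity in $x$ via a coupling / stochastic-dominance argument on the tilted measures, or via Hölder-type inequalities showing the moment is increasing as the two truncated Gaussians are pulled apart.
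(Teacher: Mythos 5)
There is a genuine gap, and it originates in the closed form you start from. You transcribe the correction term of Lemma~\ref{lemma:tgm_rdp} as it appears in the main text, but that statement is mis-typeset: the derivation in Appendix~\ref{appen:dp_derivation} (and the one actually used in the paper's proofs) gives the correction $\log\frac{\Delta(\theta+c)}{\Delta(\theta)}+\frac{1}{\alpha-1}\log\frac{\Delta(\theta+(1-\alpha)c)}{\Delta(\theta)}$, i.e.\ the inner argument is $\theta+(1-\alpha)c$, not $(1-\alpha)(\theta+c)$, and the second logarithm carries weight $\frac{1}{\alpha-1}$, not $1-\alpha$. Your $h(x)$ and the displayed formula for $\frac{1}{\alpha-1}h'(x)$ inherit both errors (and there is also a sign slip: your own chain-rule count gives $+(\alpha-1)^2\psi$, not $-(\alpha-1)^2\psi$). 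As a consequence your endgame cannot close: the crude bound you write scales like $\frac{|\theta+x|}{(\alpha-1)\sigma^2}$, which is not dominated by $\frac{\alpha x}{\sigma^2}$ when $|\theta|$ is large, $x$ is small, or $\alpha\to 1$ --- and you acknowledge this yourself, deferring the decisive inequality to an unspecified ``extra structural observation'' or an unexecuted coupling/H\"older argument. As written, the proof is therefore incomplete at exactly the step that matters.

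The frustrating part is that your analytic ingredients are the right ones, and with the correct closed form the difficulty you are fighting evaporates; this is precisely the paper's route. Setting $\psi=(\log\Delta)'$ and $\sigma=1$, the derivative of the divergence in $x$ is $\alpha x+\psi(\theta+x)-\psi(\theta+(1-\alpha)x)$: the $\frac{1}{\alpha-1}$ prefactor is cancelled by the chain-rule factor $(1-\alpha)$, so there is no blow-up near $\alpha=1$, and the two arguments of $\psi$ differ by exactly $\alpha x$. The paper then finishes with the one-sided estimate you circle around but hesitate on the direction of: $\psi'(t)=(\log\Delta)''(t)=\mathcal{I}_{TN}(t)-1\ge -1$ because the Fisher information of the truncated Gaussian is nonnegative, so by the mean value theorem $\psi(\theta+x)-\psi(\theta+(1-\alpha)x)\ge -\alpha x$, giving a nonnegative derivative. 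Note that only this lower bound on $\psi'$ is needed; log-concavity of $\Delta$, oddness of $\psi$, and the bound $|\psi(t)|\le |t|/\sigma^2$ (all true, but the latter requires both one-sided bounds on $(\log\Delta)''$) play no role. If you redo your computation from the appendix form of the lemma, your plan becomes the paper's proof almost verbatim.
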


Lemma \ref{lemma:monotonicity} tells us that similar to the Gaussian case, the R\'enyi Divergence between two bounded Gaussian distribution with same variance is maximized at max sensitivity. Now, given the same sensitivity $c$, we show that regardless of location parameter chosen, we always get smaller R\'enyi Divergence for truncated Gaussian compared to Gaussian. The result is presented below.

\begin{theorem}
\label{th:trunc_dp_amplification}
    Given fixed $\theta$ and norm clipping bound $c$, we have $D_{\alpha}\left(\mathcal{N}^B\left(\theta,\sigma^2,\mathcal{B}\right)\|\mathcal{N}^B\left(\theta+c,\sigma^2,\mathcal{B}\right)\right) \leq D_{\alpha}\left(\mathcal{N}\left(\theta,\sigma^2\right)\|\mathcal{N}\left(\theta+c,\sigma^2\right)\right)$
\end{theorem}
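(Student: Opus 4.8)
The plan is to establish the inequality separately for the two bounded mechanisms, since $\mathcal{N}^R$ and $\mathcal{N}^T$ behave differently. For the rectified Gaussian, the statement is already essentially the Proposition proved immediately above via the data-processing inequality (Lemma~\ref{lemma:rdp_dpi}): sampling $\mathcal{N}^R(\theta,\sigma^2,\mathcal{B})$ is the post-processing of $\mathcal{N}(\theta,\sigma^2)$ by the clipping map $x\mapsto \mathrm{clip}(x;-a,a)$, and this map does not depend on $\theta$, so it is a valid channel $X\to Y$ in the sense of Lemma~\ref{lemma:rdp_dpi}. Hence $D_\alpha(\mathcal{N}^R(\theta,\sigma^2,\mathcal{B})\|\mathcal{N}^R(\theta+c,\sigma^2,\mathcal{B}))\le D_\alpha(\mathcal{N}(\theta,\sigma^2)\|\mathcal{N}(\theta+c,\sigma^2))$, and the right side equals $\tfrac{\alpha c^2}{2\sigma^2}$ by Lemma~\ref{lemma:gaussian_rdp}. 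So the real content is the truncated case.

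For the truncated Gaussian, I would start from the closed form in Lemma~\ref{lemma:tgm_rdp}: the Rényi divergence between the two truncated Gaussians equals $D_\alpha(\mathcal{N}(\theta,\sigma^2)\|\mathcal{N}(\theta+c,\sigma^2))$ plus a correction term $\log\!\big(\tfrac{\Delta(\theta+c)}{\Delta(\theta)}\big(\tfrac{\Delta((1-\alpha)(c+\theta))}{\Delta(\theta)}\big)^{1-\alpha}\big)$, where $\Delta(x)=\Phi(\tfrac{a-x}{\sigma})-\Phi(\tfrac{-a-x}{\sigma})$ is the Gaussian mass that $\mathcal{N}(x,\sigma^2)$ assigns to $\mathcal{B}=[-a,a]$. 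So it suffices to show this correction term is $\le 0$, i.e.
\begin{equation*}
\Delta(\theta+c)\,\Delta\big((1-\alpha)(c+\theta)\big)^{1-\alpha}\ \le\ \Delta(\theta)^{2-\alpha}.
\end{equation*}
The natural tool here is the log-concavity of $\Delta$. Indeed $\Delta(x)=\int_{-a}^{a}\tfrac1\sigma\phi(\tfrac{t-x}{\sigma})\,dt$ is a convolution of the Gaussian density (log-concave) with the indicator of an interval (log-concave), hence $\Delta$ is log-concave in $x$; equivalently $g:=\log\Delta$ is concave. Writing $\theta_1=\theta$, $\theta_2=\theta+c$, the point $(1-\alpha)(c+\theta)=\alpha\theta_1+(1-\alpha)\theta_2$ is an affine combination of $\theta_1,\theta_2$ with weights $\alpha$ and $1-\alpha$ (note $1-\alpha<0$, so this is an extrapolation, not a convex combination). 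Rearranging, the desired inequality is
\begin{equation*}
g(\theta_2)+(1-\alpha)\,g\big(\alpha\theta_1+(1-\alpha)\theta_2\big)\ \le\ (2-\alpha)\,g(\theta_1),
\end{equation*}
which, after dividing by the positive quantity and grouping, is exactly the statement that $g$ lies below its chords extended to this extrapolated point — a consequence of concavity. Concretely I would set $s=\alpha\theta_1+(1-\alpha)\theta_2$ so that $\theta_1$ is the convex combination $\theta_1=\tfrac{1}{\alpha}s+\tfrac{\alpha-1}{\alpha}\theta_2$ of $s$ and $\theta_2$ (weights $\tfrac1\alpha,\tfrac{\alpha-1}{\alpha}\in[0,1]$ summing to $1$ since $\alpha>1$), apply concavity of $g$ at $\theta_1$, namely $g(\theta_1)\ge \tfrac1\alpha g(s)+\tfrac{\alpha-1}{\alpha}g(\theta_2)$, multiply through by $\alpha>0$ and rearrange to recover the bound. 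Combining this with Lemma~\ref{lemma:monotonicity} (which lets us reduce the "for all $x\le c$" requirement in the per-instance RDP definition to the endpoint $x=c$) completes the argument.

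The main obstacle is the log-concavity step and making sure the signs work out, since $\alpha>1$ forces $1-\alpha<0$ and the "affine combination" is an extrapolation outside $[\theta_1,\theta_2]$ — one has to be careful to rewrite it as a genuine convex combination with $\theta_1$ in the interior before invoking concavity of $g=\log\Delta$. The log-concavity of $\Delta$ itself is standard (Prékopa's theorem, or directly from the fact that a Gaussian CDF difference over a fixed window is log-concave), but I would want to either cite it cleanly or give the one-line convolution argument; everything else is bookkeeping with the formula from Lemma~\ref{lemma:tgm_rdp}.
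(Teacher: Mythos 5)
Your proposal follows essentially the same route as the paper's proof: dispose of the rectified case by data processing (Lemma~\ref{lemma:rdp_dpi}), use the closed form of Lemma~\ref{lemma:tgm_rdp} to reduce the truncated case to showing the correction term is nonpositive, and obtain that from concavity of $g(x)=\log\Delta(x)$ via Jensen's inequality applied at exactly the weights $\tfrac1\alpha,\tfrac{\alpha-1}{\alpha}$ --- this last step is verbatim the paper's argument. The one genuine difference is how you justify concavity of $g$: you invoke log-concavity of $\Delta$ as the convolution of the Gaussian density with the indicator of $[-a,a]$ (Pr\'ekopa), whereas the paper proves it from scratch as Lemma~\ref{lemma:concave} through a fairly long calculus argument (the paper needs that lemma anyway for the FIL result, Theorem~\ref{th:bgm_fil_amp}). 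Your route is cleaner and perfectly valid for this theorem, provided you cite the log-concavity-of-convolutions fact explicitly.

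One bookkeeping issue to fix: the ``desired inequality'' you wrote, $\Delta(\theta+c)\,\Delta(s)^{1-\alpha}\le\Delta(\theta)^{2-\alpha}$ with $s=\theta+(1-\alpha)c$, is transcribed from the main-text statement of Lemma~\ref{lemma:tgm_rdp}, whose exponent (and whose argument $(1-\alpha)(c+\theta)$) is a typo; the appendix derivation gives the correction term $\log\tfrac{\Delta(\theta+c)}{\Delta(\theta)}+\tfrac{1}{\alpha-1}\log\tfrac{\Delta(s)}{\Delta(\theta)}$, so the exponent should be $\tfrac{1}{\alpha-1}$, not $1-\alpha$. As written, your target $g(\theta+c)+(1-\alpha)g(s)\le(2-\alpha)g(\theta)$ does not follow from concavity and is false in general: for $\alpha=2$ it reads $g(\theta+c)\le g(\theta-c)$, which fails e.g.\ at $\theta=-c$ since $\Delta$ is symmetric and maximized at $0$. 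Fortunately your concrete Jensen step, $g(\theta)\ge\tfrac1\alpha g(s)+\tfrac{\alpha-1}{\alpha}g(\theta+c)$, rearranges (multiply by $\tfrac{\alpha}{\alpha-1}>0$) to $g(\theta+c)+\tfrac{1}{\alpha-1}g(s)\le\tfrac{\alpha}{\alpha-1}g(\theta)$, which is exactly what the correct correction term requires and is the paper's inequality; so the substance of your argument is right once the stated target is replaced by the correct exponent.
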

We defer the detailed proof of Theorem \ref{th:trunc_dp_amplification} to Appendix \ref{appen:dp_amp}.

% We start from the simpler case of rectified Gaussian.

% \begin{lemma}
%     For any fixed $\theta,\sigma,a>0$, the R\'enyi Divergence between two truncated normal random variable satisfies the following: $D_{\alpha}\left(Sgn\left(\theta,\sigma^2,[-a,a]\right)\|Sgn\left(\theta+x,\sigma^2,[-a,a]\right)\right)$ is an increasing function of $x$.
% \end{lemma}
% \begin{lemma}
%     For any fixed $\theta,\sigma,a,c>0$, we have \begin{equation}
%     \begin{split}
%         &D_{\alpha}\left(Sgn\left(\theta,\sigma^2,[-a,a]\right)\|Sgn\left(\theta+c,\sigma^2,[-a,a]\right)\right)\\ \leq &D_{\alpha}\left(\mathcal{N}\left(\theta,\sigma^2\right)\|\mathcal{N}\left(\theta+c,\sigma^2\right)\right)
%     \end{split}
%     \end{equation}
% \end{lemma}

% Now we look at the case of truncated Gaussian. Since truncated Gaussian is not achievable from post-processing of Gaussian, Lemma \ref{lemma:rdp_dpi} is not directly applicable here. Now, given the same sensitivity $c$, we show that regardless of location parameter chosen, we always get smaller R\'enyi Divergence for truncated Gaussian compared to Gaussian. 

Up till now, for both rectified and truncated Gaussian mechanisms we are able to show that they could achieve smaller R\'enyi Divergence compared to vanilla Gaussian. To perform accounting, it suffices to calculate \begin{align*}
    \max\Bigg\{D_{\alpha}\left(\mathcal{N}^{B}\left(\theta,\sigma^2,\mathcal{B}\right)\|\mathcal{N}^B\left(\theta+c,\sigma^2,\mathcal{B}\right)\right), D_{\alpha}\left(\mathcal{N}^{B}\left(\theta+c,\sigma^2,\mathcal{B}\right)\|\mathcal{N}^B\left(\theta,\sigma^2,\mathcal{B}\right)\right)\Bigg\}
\end{align*}
We show comparison of the $(2,\epsilon)$-RDP for different mechanisms in Figure \ref{fig:bounded_support_motivation}. Similar to FIL, privacy amplification via bounded support is more significant as $\theta$ approaches the tail, i.e. when the absolute value of $\theta$ is large. This is a direct result of having a data independent $S$: Consider the case where $\theta\gg 1$. Let $\hat{\theta}=BG(\theta)$ be the noisy estimate for $\theta$. When $BG$ is the rectified Gaussian mechanism, $\hat{\theta}=1$ with probability $\Phi\left(\frac{\theta-1}{\sigma}\right)\approx 1$ for large enough $\theta$. When $BG$ is the truncated Gaussian mechanism, $\hat{\theta}$ has approximately the same probability of being any value between -1 and 1. Neither case gives us much useful information of the value $\theta$ itself. 
Therefore, bounded Gaussian mechanism provides much stronger privacy protection given this specific $\theta$. 
% Therefore, in order to fully exploit the advantage of rectification and truncation, we use the notion of pDP~\cite{wang2019per} in our work. Instead of computing the location $\theta_{\max}$ that maximizes R\'enyi Divergence, we directly compute the divergence at the true location $\theta$. 

\setlength{\textfloatsep}{20pt}
\begin{algorithm}[t]
\setlength{\abovedisplayskip}{0pt}
\setlength{\belowdisplayskip}{0pt}
\setlength{\abovedisplayshortskip}{0pt}
\setlength{\belowdisplayshortskip}{0pt}
\caption{RDP \& FIL Accounting for private SGD using bounded Gaussian mechanism}
\label{alg:2}
\begin{algorithmic}[1]
    \STATE {\bf Input:}  Per-example $L_{\infty}$-clipped gradients $\{\hat{g}(x_1),\cdots,\hat{g}(x_n)\}\in\left[\mathbb{R}^d\right]^n$, variance $\sigma^2\mathbf{I}_d$, sensitivity $C$, bounded support set $\mathcal{B}=\{\mathbf{x}|\|\mathbf{x}\|_{\infty}\leq a\}$, RDP parameter $\alpha$.
    \STATE Initialize $\boldsymbol{\epsilon}=\textbf{0}_d, \boldsymbol{\eta}=\textbf{0}_{d}$. Denote $S=[-a,a]$.
    \STATE Compute the summed gradient: $\hat{g}\leftarrow\sum_i\hat{g}(x_i)$.
    \FOR {$j\in [d]$}
        \STATE Compute Renyi Divergence based on Equation \ref{eq:rec_rdp} / \ref{eq:trunc_rdp}:\[\boldsymbol{\epsilon}_j\leftarrow D_{\alpha}\left(\mathcal{N}^B\left(\hat{g}[j],\sigma^2, S\right)\|\mathcal{N}^B\left(\hat{g}[j]+C,\sigma^2, S\right)\right)\]
        \STATE Compute $\eta^B$ based on Lemma \ref{lemma:fil_closed_form} for each example:\[\boldsymbol{\eta}_{j}\leftarrow\eta^B(\hat{g}[j],\sigma^2,S)\]
    \ENDFOR
    \STATE \textbf{Output} RDP budget $\epsilon=\sum_{j}\boldsymbol{\epsilon}_j$ and per example FIL 
    $\textbf{FIL}[i]=\left\|\boldsymbol{\eta}^\top\nabla_{x_i}\hat{g}(x_i)\right\|_2^2$.
    % $\textbf{FIL}(i)=(\boldsymbol{\eta}_i^\top\nabla_{x_i}\hat{g}(x_i))^\top(\boldsymbol{\eta}_i^\top\nabla_{x_i}\hat{g}(x_i))$.
    % \IF{$\max_a\mathbf{r}_a\leq\frac{M+2\nu}{B}$}{
    %     \OUTPUT $\Bar{w}$
    %     }
    % \ELSE{
    %     \O null
    %     }
    % \ENDIF
    % \RETURN $\Bar{w}$
\end{algorithmic}
\end{algorithm}

% \vspace{-.1in}
\section{Tensorization of bounded Gaussian mechanism}
In this section we introduce how to tensorize the privacy analysis from the previous section and challenges of its application to DP-SGD. 

% \vspace{-.05in}
\subsection{$L_{\infty}$ bounded support}
We consider the case for truncated Gaussian first. Assume the support set for the bounded Gaussian mechanism in the multi-dimensional case is $\mathcal{B}$. To derive the pdf for truncated Gaussian, we need to calculate the cdf within $\mathcal{B}$, namely $\int_{\mathcal{B}}\phi(\mathbf{x})d\mathbf{x}$. When $\mathcal{B}=\{\mathbf{x}:\|\mathbf{x}\|_2\leq c\}$, the above expression is a $d$-dimensional ball integral over the standard multi-dimensional Gaussian pdf. Neither is it easy to compute its closed form nor are we aware of efficient approximation of it for large $d$. Meanwhile, consider the same $L_2$ bounded support but over the rectified Gaussian with location parameter $\theta$. We will need to compute $\Phi(\theta-\mathbf{x})$ for all $\mathbf{x}$ such that $\|\mathbf{x}\|_2=c$. Deriving its FIL and R\'enyi Divergence is complex and inefficient in practice for large $d$. Hence, instead of using $L_2$ bounded support, we focus on $L_{\infty}$ truncation / rectification in our work for high dimension input. A key benefit of that is we can decompose the $d$-dimensional problem into $d$ scalar problems and account for the privacy budget separately for each coordinate. The privacy cost for the entire input is then the sum of privacy cost calculated at each coordinate.

As a direct result of that, $L_2$ clipping typically used in Gaussian mechanism does not provide us with tight privacy analysis. When $d$ is large, for all coordinates, using the $L_2$ clipping bound as the sensitivity significantly overestimate the privacy budget. Therefore, we use $L_{\infty}$ clipping for multi-dimensional input. 
% While $L_{\infty}$ introduces bias to the gradient, previous work has shown that it could achieve similar performance compared to $L_2$ clipping. \shengyuan{cite}

\subsection{Accounting in multi-dimension}
To calculate the R\'enyi Divergence and FIL for bounded Gaussian mechanism in the multi-dimensional case, we need to integrate the multivariate Gaussian pdf over the bounded support set. With $L_{\infty}$ bounded support and $L_{\infty}$ clipping, we are able to do it at the coordinate level and compose the privacy costs across the coordinates afterwards. Specifically, we proved the following proposition for reducing $d$-dimensional accounting to scalar accounting.
\begin{proposition}
    \label{prop:multidim}
    Let $f(D)=\boldsymbol{\theta}\in\mathbb{R}^d$, $\mathcal{B}=\{\mathbf{x}|\|\mathbf{x}\|_{\infty}\leq a\}$ the bounded support set. Let $S=[-a,a]$. Given $\boldsymbol{\theta}$, the bounded Gaussian mechanism $BG(\cdot)$ satisfies $(\alpha,\epsilon)$-per instance RDP for all where
    \begin{align*}
        \epsilon &= D_{\alpha}\left(\mathcal{N}^B\left(\boldsymbol{\theta},\sigma^2\mathbf{I}_d,\mathcal{B}\right)\|\mathcal{N}^B\left(\boldsymbol{\theta}+c\cdot\mathbf{1}_d,\sigma^2\mathbf{I}_d,\mathcal{B}\right)\right)\\
        &= \sum_jD_{\alpha}\left(\mathcal{N}^B\left(\boldsymbol{\theta}_j,\sigma^2,S\right)\|\mathcal{N}^B\left(\boldsymbol{\theta}_j+c,\sigma^2,S\right)\right)
    \end{align*}
    % \vspace{-.25in}
    
    Further, compute $\boldsymbol{\eta}_j:=\eta^B(\boldsymbol{\theta}_j,\sigma^2,S)$ using Lemma \ref{lemma:fil_closed_form} with location parameter $\boldsymbol{\theta}_j$, variance $\sigma^2$, and bounded support $S$, $BG(\cdot)$ satisfies FIL where
    \begin{align*}
        \mathcal{I}_h(D) = J_f^\top \textbf{diag}\left(\left[\boldsymbol{\eta}_1^2,\cdots,\boldsymbol{\eta}_d^2\right]\right)J_f 
    \end{align*}
    where $\textbf{diag}(\cdot)$ denotes the diagonal matrix. 
\end{proposition}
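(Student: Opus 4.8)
The plan is to reduce the $d$-dimensional statement to the scalar results of Lemmas~\ref{lemma:fil_closed_form}--\ref{lemma:monotonicity} by exploiting the product structure that an $L_\infty$ support forces on the mechanism. The one structural fact that drives both halves is: $\mathcal{B}=\{\mathbf{x}:\|\mathbf{x}\|_\infty\le a\}=S^d$ with $S=[-a,a]$, and the unconstrained density $\mathcal{N}(\boldsymbol{\theta},\sigma^2\mathbf{I}_d)$ already factorizes as $\prod_j\tfrac1\sigma\phi(\tfrac{\cdot-\boldsymbol{\theta}_j}{\sigma})$. Hence the bounded-Gaussian density factorizes across coordinates, $p^B(\mathbf{h}\mid\boldsymbol{\theta})=\prod_{j=1}^d p^B_S(\mathbf{h}_j\mid\boldsymbol{\theta}_j)$, for both variants: in the truncated case the normalizer splits because $\int_{\mathcal{B}}\phi(\mathbf{x})\,d\mathbf{x}=\prod_j\int_S\phi(x)\,dx$, and in the rectified case the per-coordinate clipping is applied independently, so the continuous piece together with the boundary atoms---which live on the faces and corners of the box $\mathcal{B}$, each of which is itself a product over the coordinates of $\{-a\}$, $(-a,a)$, or $\{a\}$---is exactly the $d$-fold product of the scalar rectified law. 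I would state this factorization first and then treat the RDP and FIL claims in turn.

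For the RDP claim I would invoke tensorization of the R\'enyi divergence: for product laws, $\mathbb{E}_{\mathbf{x}\sim\prod_jQ_j}\big[(\prod_jP_j/\prod_jQ_j)^\alpha\big]=\prod_j\mathbb{E}_{x_j\sim Q_j}[(P_j/Q_j)^\alpha]$, so applying $\tfrac1{\alpha-1}\log(\cdot)$ converts the product into the sum $\sum_jD_\alpha(P_j\|Q_j)$. With $P_j=\mathcal{N}^B(\boldsymbol{\theta}_j,\sigma^2,S)$ and $Q_j=\mathcal{N}^B(\boldsymbol{\theta}_j+\mathbf{v}_j,\sigma^2,S)$ this gives the second displayed equality for any fixed shift $\mathbf{v}$. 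It then remains to identify $\mathbf{v}=c\mathbf{1}_d$ as the relevant neighbor: under $L_\infty$-clipping with sensitivity $c$ the neighboring location vectors are exactly those with $\|\mathbf{v}\|_\infty\le c$, and since this constraint decouples across coordinates, the maximum of $\sum_jD_\alpha(\cdot)$ is the sum of the per-coordinate maxima over $|\mathbf{v}_j|\le c$; Lemma~\ref{lemma:monotonicity} (for the truncated mechanism) and the monotonicity/DPI argument underlying Theorem~\ref{th:trunc_dp_amplification} together with Lemma~\ref{lemma:rdp_dpi} (for the rectified mechanism) show each per-coordinate divergence is monotone in the shift, so each per-coordinate maximum is attained at the endpoint. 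This yields the first displayed equality and the per-instance RDP guarantee.

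For the FIL claim I would work through the reparametrization $\boldsymbol{\theta}=f(D)$ using the factorized log-likelihood $\log p^B(\mathbf{h}\mid f(D))=\sum_j\log p^B_S(\mathbf{h}_j\mid[f(D)]_j)$. Since the cross terms vanish, $\nabla^2_{\boldsymbol{\theta}}\log p^B$ is diagonal, so $\mathcal{I}_{\boldsymbol{\theta}}:=-\mathbb{E}_{\mathbf{h}}[\nabla^2_{\boldsymbol{\theta}}\log p^B]=\textbf{diag}(\boldsymbol{\eta}_1^2,\dots,\boldsymbol{\eta}_d^2)$, where each diagonal entry is the scalar Fisher information with respect to $\boldsymbol{\theta}_j$, which by Lemma~\ref{lemma:fil_closed_form} equals $\boldsymbol{\eta}_j^2=\eta^B(\boldsymbol{\theta}_j,\sigma^2,S)^2$. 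The chain rule then gives $\nabla^2_D\log p^B(\mathbf{h}\mid f(D))=J_f^\top(\nabla^2_{\boldsymbol{\theta}}\log p^B)J_f+\sum_k(\partial_{\boldsymbol{\theta}_k}\log p^B)\,\nabla^2_Df_k$; taking $-\mathbb{E}_{\mathbf{h}}$, the second term drops because the score has mean zero, i.e. $\mathbb{E}_{\mathbf{h}}[\partial_{\boldsymbol{\theta}_k}\log p^B]=0$, which still holds for the rectified mixture since differentiating the normalization identity $\int_S p^B_S(x\mid\boldsymbol{\theta}_k)\,dx=1$ (atoms included) in $\boldsymbol{\theta}_k$ is precisely that statement. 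This leaves $\mathcal{I}_h(D)=J_f^\top\mathcal{I}_{\boldsymbol{\theta}}J_f=J_f^\top\textbf{diag}(\boldsymbol{\eta}_1^2,\dots,\boldsymbol{\eta}_d^2)J_f$, as claimed. The main obstacle I expect is the measure-theoretic bookkeeping for the rectified variant: one must fix a dominating measure (Lebesgue on the open box together with counting measure on its lower-dimensional faces), check that the product factorization and the score identity genuinely hold in that mixed setting, and confirm the atom masses $\Phi(\tfrac{-a-\boldsymbol{\theta}_j}{\sigma})$ and $\Phi(\tfrac{\boldsymbol{\theta}_j-a}{\sigma})$ are smooth enough in $\boldsymbol{\theta}$ to justify the Hessian manipulations; the truncated case is routine by comparison.
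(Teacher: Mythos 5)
Your proposal follows essentially the same route as the paper's proof: exploit that the $L_\infty$ support makes the bounded Gaussian a product distribution across coordinates, tensorize the R\'enyi divergence so the integral over the box splits into a product (hence a sum of scalar divergences), and for FIL note the coordinate-wise log-likelihood makes $\nabla^2_{\boldsymbol{\theta}}$ diagonal, with the chain-rule Hessian term vanishing by the zero-mean score exactly as in the scalar Lemma~\ref{lemma:fil_closed_form} argument. Your additional discussion of why the shift $c\cdot\mathbf{1}_d$ dominates all neighbors and the measure-theoretic care for the rectified atoms goes beyond what the paper's proof spells out, but the core argument is the same.
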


We provide detailed steps for privacy accounting for the bounded Gaussian mechanism in Algorithm~\ref{alg:2}. 

\subsection{Difficulty in subsampling}

Privacy amplification via subsampling is a common technique used in analyzing the privacy cost for mini-batch gradient based optimization algorithm. It has been shown and implemented in multiple works that we can efficiently account the privacy for a subsampled Gaussian mechanism with RDP~\citep{mironov2019r}. Unfortunately, these techniques heavily rely on certain properties of the Gaussian distribution that is not true for bounded Gaussian distribution with $L_{\infty}$ bounded support. In the rest of this section, we will explain the difficulties of analyzing amplification via subsampling for bounded Gaussian mechanism.

% \vspace{-.1in}
\textbf{Setup.} Assume we have two random variables $\mu_{\theta}$ and $\mu_{\theta’}$ sampled from two distribution located at $\theta,\theta’\in\mathbb{R}^d$ respectively. We care about calculating 
\begin{equation}
\label{eq:subsampled_RD}
    D_{\alpha}(\mu_\theta\|(1-q)\mu_\theta+q\mu_{\theta’})\tag{*}
\end{equation}
Suppose $\|\theta-\theta’\|_2=c$. When $d=1$, this quantity could be efficiently approximated by simulating the integral.

% \paragraph{Sum of per coordinate $\epsilon$'s doesn't work.} As mentioned in the previous section, without subsampling we can simply add up the the privacy cost at each coordinate. Doing the same for quantity \ref{eq:subsampled_RD} calculates the sum of privacy cost of doing independent subsampling at each coordinate. However, this is different from subsampling in SGD, where we subsample at the example level instead of coordinate level. If a certain data $x$ is sampled, all of its coordinate will be sampled. Therefore, the simple accounting technique without subsampling does not work here.

\textbf{$\mathcal{N}^b$ is not shift / rotation invariant.} A critical step to calculate (\ref{eq:subsampled_RD}) for the Gaussian mechanism is to apply an affine transformation to $[\theta,\theta']$ so that computing (\ref{eq:subsampled_RD}) is equivalent to computing $D_{\alpha}(\mu_0\|(1-q)\mu_0+qc\mathbf{e}_1)$ where $\mathbf{e}_1$ is the unit vector \citep{mironov2019r}. 
% This is true for Gaussian mechanism since Gaussian distribution is 1) shift invariant and 2) rotation invariant. Neither is true for bounded Gaussian distribution with fixed support. Both distributions considered in this work requires calculating the cdf at some value depending on the true location parameter. Therefore, the landscape of both distributions change as you shift or rotate the $\theta$ and $\theta’$.
However such equivalence fails under rectified/truncated Gaussian distribution considered in this work because the landscape of both distributions change as you shift or rotate $[\theta,\theta']$.

For the above reasons, we find that it is non-trivial to reduce the calculation of (\ref{eq:subsampled_RD}) for arbitrary large $d$ to a set of 1-dimensional sub problems---requiring integration at every coordinate, which is computationally inefficient for SGD applications where $d$ is large. Therefore, we only consider full batch gradient descent in our DP experiment. We believe efficient accounting for a subsampled bounded Gaussian mechanism is an interesting direction of future work.

\section{Experiment}
We evaluate private SGD with the bounded Gaussian mechanism on multiple datasets to show that enforcing bounded support of the noisy output provides amplified privacy-utility tradeoff for practical learning tasks. 
\subsection{Synthetic data mean estimation}
We first demonstrate how our new privacy analysis for the Gaussian mechanism with bounded support help improve the privacy utility tradeoff for mean estimation problem. 

\textbf{Setup.} We randomly sample $X_1,\cdots,X_n$ from a Gaussian $\mathcal{N}(\mu\mathbf{1}_d,I_d)$ and clip all data to $[-1,1]^d$. To estimate the mean $\mu$ with $X_1,\cdots,X_n$ privately, we take the average of the data and apply output perturbation on top of that: $\hat{\mu}=Perturb(\frac{1}{n}\sum_{i=1}^nX_i)$. Unlike the Gaussian mechanism, the rectified Gaussian mechanism could output a biased estimate when the rectification range is not centered at the true location parameter. In this experiment, we explore how such biasedness affects  privacy amplification via bounded support. We restrict the bounded support set to take the form of $B=[-a,a]^d$ for some $a\in\mathbb{R}$. Hence, at every coordinate, the bounded support centered at 0. Define the \textit{Biasedness} as the distance between the location parameter $\mu$ and the rectification center, in this case, $|\mu|$. We pick $n=900$ and $d=100$ and tune the noise multiplier $\sigma$ and bounded support range at each coordinate $a$. We evaluate the MSE loss and $(2,\epsilon)$-RDP for each hyperparameter combination. Results are averaged over 5 independent trials. 

\begin{figure}[t!]
    \centering
    % \vspace{-0.1in}
    \begin{subfigure}{0.45\textwidth}
        \centering
        \includegraphics[width=0.99\textwidth,trim=10 10 10 30]{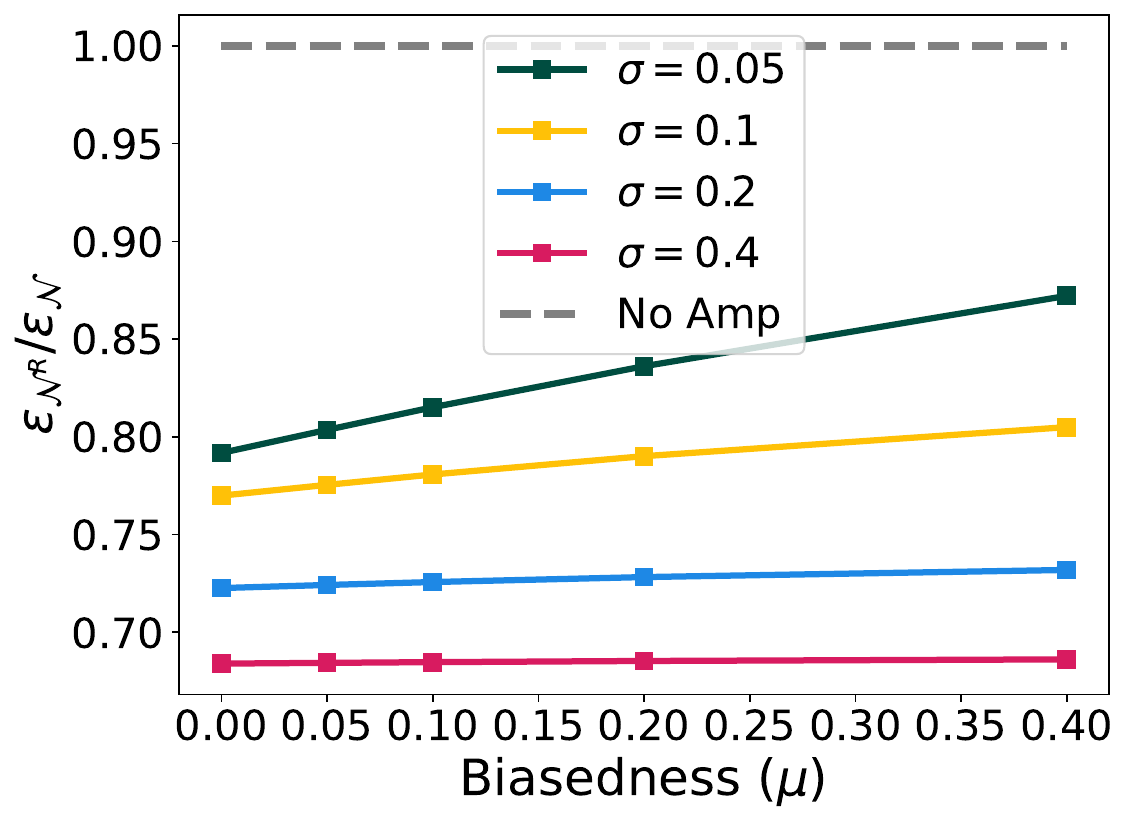}
        \subcaption{\small Biasedness vs Privacy}
    \end{subfigure}\hfill
    \begin{subfigure}{0.45\textwidth}
        \centering
        \includegraphics[width=0.99\textwidth,trim=10 10 10 30]{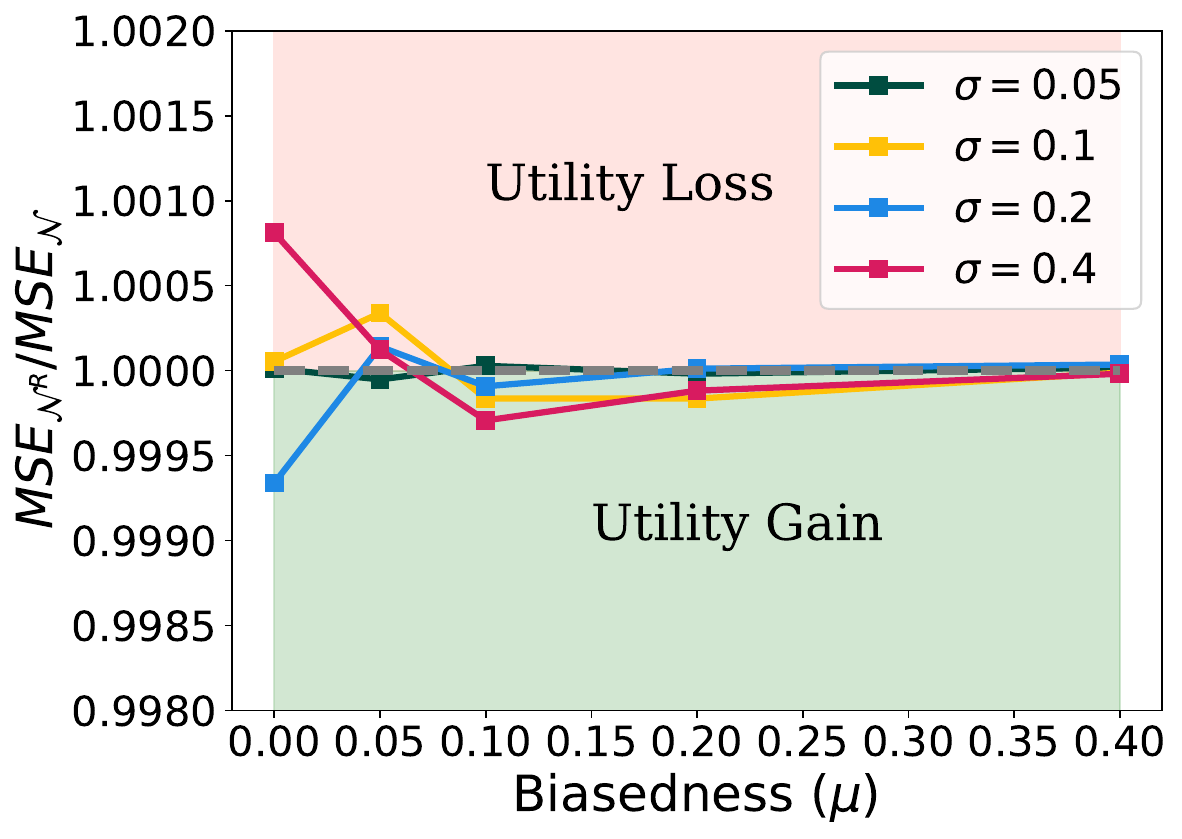}
        \subcaption{\small Biasedness vs Utility}
    \end{subfigure}
    % \hfill
    % \begin{subfigure}{0.16\textwidth}
    %     \centering
    %     \includegraphics[width=0.99\textwidth,trim=10 10 10 30]{figs/Global_loss_CelebA_pareto_frontier_DP_test.pdf}
    %     \subcaption{DP}
    % \end{subfigure}
    % \begin{subfigure}{0.16\textwidth}
    %     \centering
    %     \includegraphics[width=0.99\textwidth,trim=10 10 10 30]{figs/Global_loss_CelebA_pareto_frontier_DP_test.pdf}
    %     \subcaption{Placeholder}
    % \end{subfigure}
    % \vspace{-0.1in}
    \caption{\small  Biasedness-privacy-utility tradeoffs for synthetic mean estimation. \textit{Left}: Relation between biasedness and privacy amplification. Smaller $\epsilon_{\mathcal{N}^R}/\epsilon_{\mathcal{N}}$ means stronger amplification. \textit{Right}: Relation between biasedness and utility change corresponding to the same hyperparameter combinations shown in the Left figure.} %
  % \vspace{-0.1in}
    \label{fig:mean_estimation}
\end{figure}

\begin{figure*}[t!]
    \centering
    % \vspace{-0.1in}
    % \begin{subfigure}{0.32\textwidth}
    %     \centering
    %     \includegraphics[width=0.99\textwidth,trim=10 10 10 30]{}
    %     \subcaption{CIFAR 10}
    % \end{subfigure}\hfill
    % \begin{subfigure}{0.32\textwidth}
    %     \centering
    %     \includegraphics[width=0.99\textwidth,trim=10 10 10 30]{}
    %     \subcaption{CIFAR 100}
    % \end{subfigure}\hfill
    % \begin{subfigure}{0.32\textwidth}
    %     \centering
    %     \includegraphics[width=0.99\textwidth,trim=10 10 10 30]{}
    %     \subcaption{OxfordIIIT}
    % \end{subfigure}\hfill
    % \begin{subfigure}{0.32\textwidth}
    %     \centering
    %     \includegraphics[width=0.99\textwidth,trim=10 10 10 0]{}
    %     \subcaption{CIFAR 10 DP-RAFT}
    % \end{subfigure}\hfill
    % \begin{subfigure}{0.32\textwidth}
    %     \centering
    %     \includegraphics[width=0.99\textwidth,trim=10 10 10 0]{}
    %     \subcaption{CIFAR 100 DP-RAFT}
    % \end{subfigure}\hfill
    % \begin{subfigure}{0.32\textwidth}
    %     \centering
    %     \includegraphics[width=0.99\textwidth,trim=10 10 10 0]{}
    %     \subcaption{OxfordIIIT DP-RAFT}
    % \end{subfigure}\hfill
    \begin{subfigure}{0.32\textwidth}
        \centering
        \includegraphics[width=0.99\textwidth,trim=10 10 10 0]{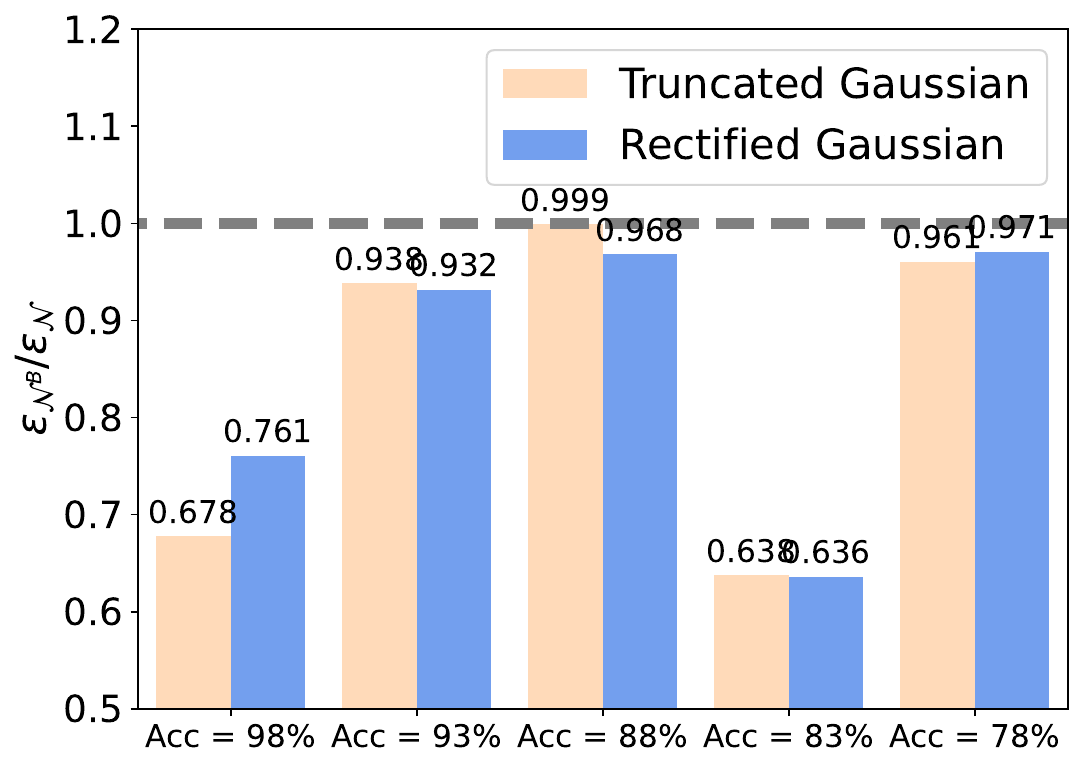}
        \subcaption{CIFAR 10 BeitV2}
        \label{fig:cifar_beit}
    \end{subfigure}\hfill
    \begin{subfigure}{0.32\textwidth}
        \centering
        \includegraphics[width=0.99\textwidth,trim=10 10 10 0]{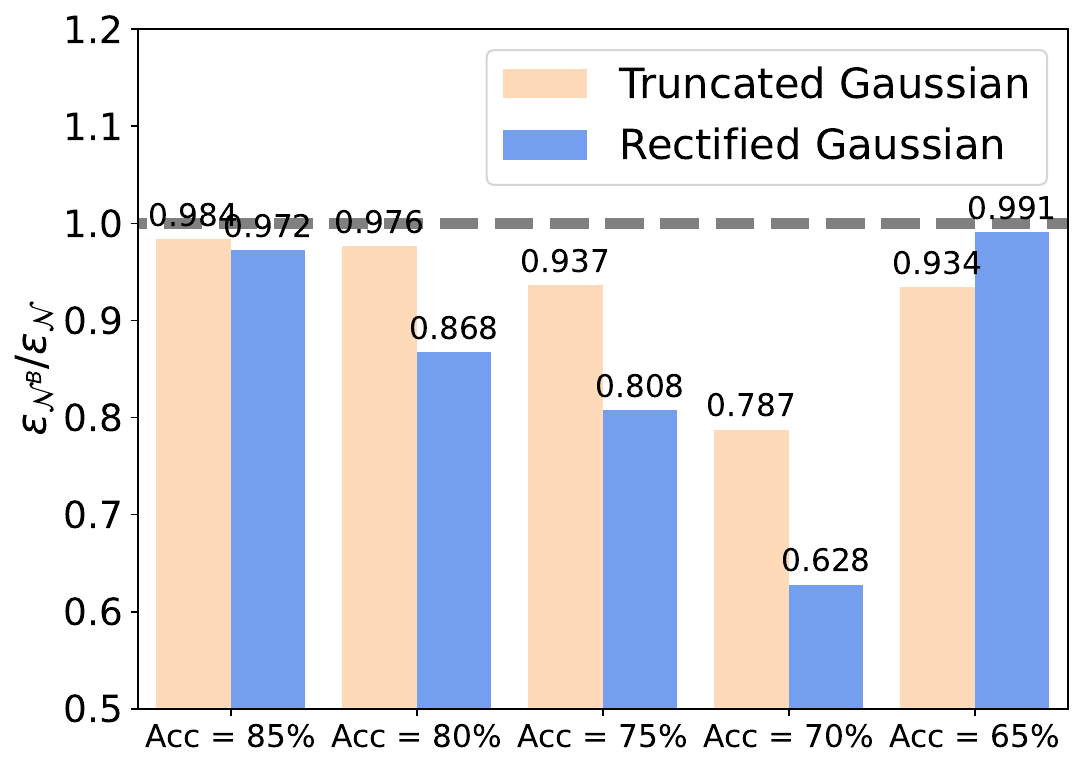}
        \subcaption{CIFAR 100 BeitV2}
        \label{fig:cifar100_beit}
    \end{subfigure}\hfill
    \begin{subfigure}{0.32\textwidth}
        \centering
        \includegraphics[width=0.99\textwidth,trim=10 10 10 0]{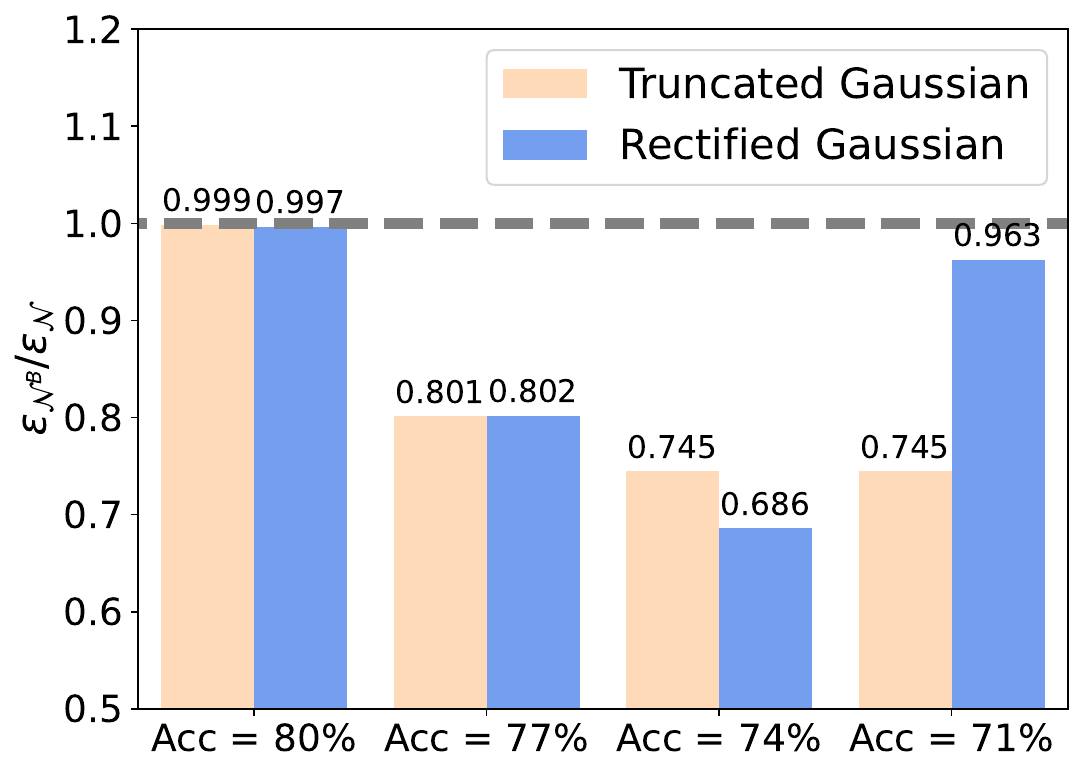}
        \subcaption{OxfordIIIT BeitV2}
        \label{fig:oxford_beit}
    \end{subfigure}\hfill
    \begin{subfigure}{0.32\textwidth}
        \centering
        \includegraphics[width=0.99\textwidth,trim=10 10 10 0]{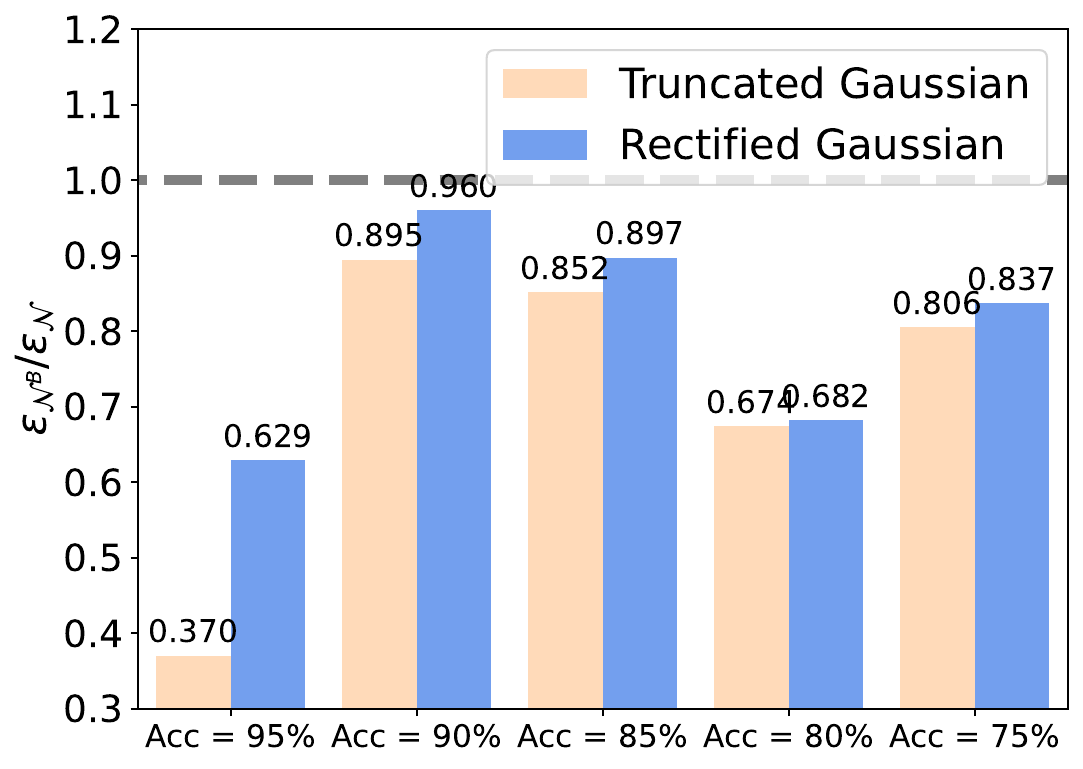}
        \subcaption{CIFAR 10 ViT}
        \label{fig:cifar_vit}
    \end{subfigure}\hfill
    \begin{subfigure}{0.32\textwidth}
        \centering
        \includegraphics[width=0.99\textwidth,trim=10 10 10 0]{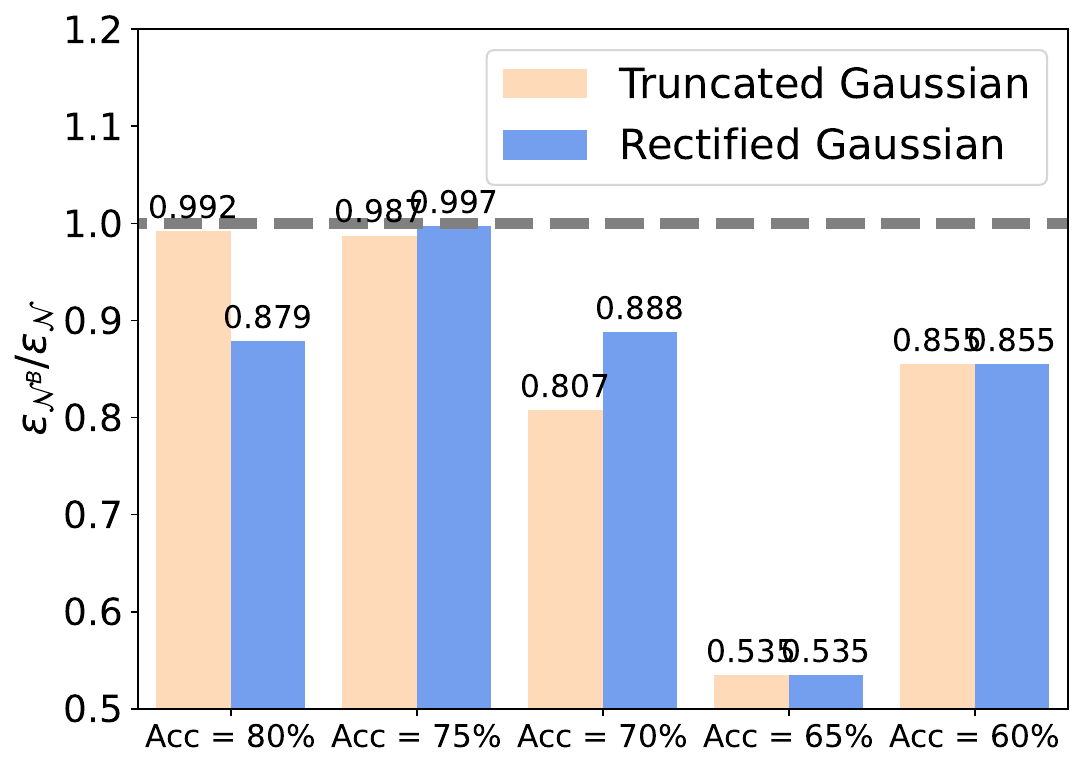}
        \subcaption{CIFAR 100 ViT}
        \label{fig:cifar100_vit}
    \end{subfigure}\hfill
    \begin{subfigure}{0.32\textwidth}
        \centering
        \includegraphics[width=0.99\textwidth,trim=10 10 10 0]{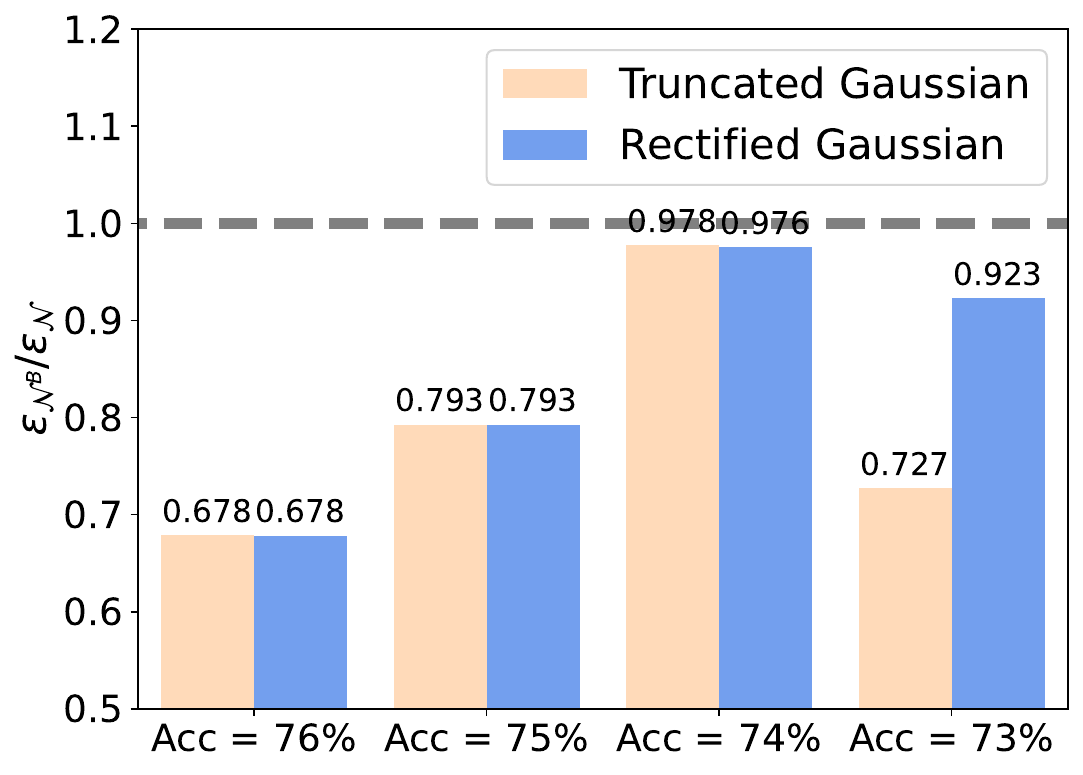}
        \subcaption{OxfordIIIT ViT}
        \label{fig:oxford_vit}
    \end{subfigure}\hfill
    % \hfill
    % \begin{subfigure}{0.16\textwidth}
    %     \centering
    %     \includegraphics[width=0.99\textwidth,trim=10 10 10 30]{figs/Global_loss_CelebA_pareto_frontier_DP_test.pdf}
    %     \subcaption{DP}
    % \end{subfigure}
    % \begin{subfigure}{0.16\textwidth}
    %     \centering
    %     \includegraphics[width=0.99\textwidth,trim=10 10 10 30]{figs/Global_loss_CelebA_pareto_frontier_DP_test.pdf}
    %     \subcaption{Placeholder}
    % \end{subfigure}
    % \vspace{-0.05in}
    \caption{\small  Comparison between bounded Gaussian mechanism and Gaussian mechanism in terms of pDP-for-all utility tradeoff on three datasets. The $y$-axis is the ratio between bounded Gaussian pDP $\epsilon$ and Gaussian pDP $\epsilon$. The gray dotted line represents no amplification. We limit the true accuracy for both mechanisms to be no more than $1\%$ higher than the target accuracy reported in the figure.} %
  % \vspace{-0.1in}
    \label{fig:dp_utility_tradeoff}
\end{figure*}
% \begin{figure*}[t!]
%     \centering
    
%     \begin{subfigure}{0.32\textwidth}
%         \centering
%         \includegraphics[width=0.99\textwidth,trim=10 10 10 0]{}
%         \subcaption{$\mu=0$, $\mathcal{B}=[-a,a]^d$}
%     \end{subfigure}\hfill
%     \begin{subfigure}{0.32\textwidth}
%         \centering
%         \includegraphics[width=0.99\textwidth,trim=10 10 10 0]{}
%         \subcaption{$\mu=0.2$, $\mathcal{B}=[-a,a]^d$}
%     \end{subfigure}\hfill
%     \begin{subfigure}{0.32\textwidth}
%         \centering
%         \includegraphics[width=0.99\textwidth,trim=10 10 10 0]{}
%         \subcaption{$\mu=0.5$, $\mathcal{B}=[-a,a]^d$}
%     \end{subfigure}\hfill
    
%     % \hfill
%     % \begin{subfigure}{0.16\textwidth}
%     %     \centering
%     %     \includegraphics[width=0.99\textwidth,trim=10 10 10 30]{figs/Global_loss_CelebA_pareto_frontier_DP_test.pdf}
%     %     \subcaption{DP}
%     % \end{subfigure}
%     % \begin{subfigure}{0.16\textwidth}
%     %     \centering
%     %     \includegraphics[width=0.99\textwidth,trim=10 10 10 30]{figs/Global_loss_CelebA_pareto_frontier_DP_test.pdf}
%     %     \subcaption{Placeholder}
%     % \end{subfigure}
%     % \vspace{-0.1in}
%     \caption{\small TODO} %
%   %\vspace{-0.25in}
%     \label{fig:mean_estimation}
% \end{figure*}

\textbf{Results}. As demonstrated in Figure \ref{fig:mean_estimation}, in all cases the rectified Gaussian mechanism provides both amplified privacy as well as comparable utility compared to the Gaussian mechanism. When the rectification range is unbiased ($\mu=0$), our method can provide up to 30\% reduction in the privacy cost while only suffering from $<0.5\%$ change in the MSE Loss (e.g. $\sigma=0.4$). Meanwhile, privacy amplification shrinks as the biasedness increases, despite amplified privacy utility tradeoff via bounded support. 
% \shengyuan{
% Explanations?
% }

% \vspace{-.05in}
\subsection{pDP utility tradeoff}

\begin{wrapfigure}{r}{0.55\textwidth}
    \centering
    % \vspace{-0.1in}
    \begin{subfigure}{0.25\textwidth}
        \centering
        \includegraphics[width=0.99\textwidth,trim=10 10 10 30]{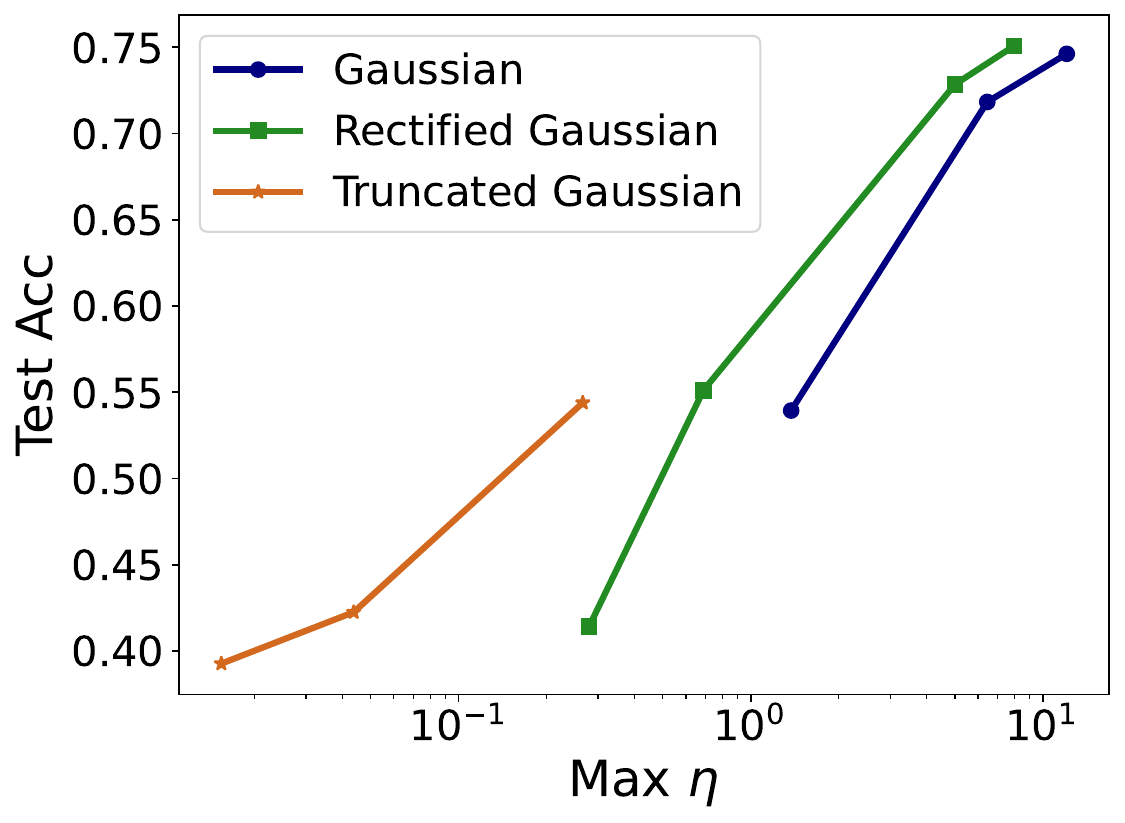}
        \subcaption{\small Full training, $\max\eta$}
        \label{fig:full_train_max_eta}
    \end{subfigure}\hfill
    \begin{subfigure}{0.25\textwidth}
        \centering
        \includegraphics[width=0.99\textwidth,trim=10 10 10 30]{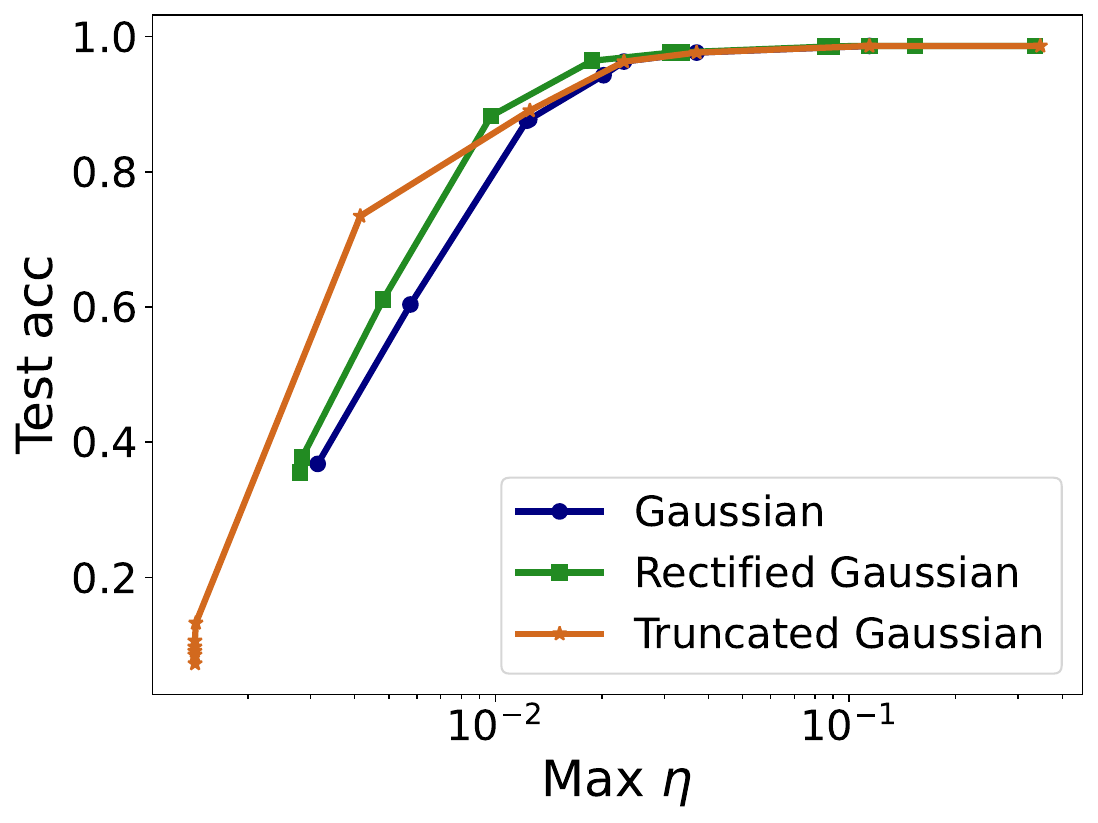}
        \subcaption{\small Linear probing, $\max\eta$}
        \label{fig:linear_probe_max_eta}
    \end{subfigure}
    \begin{subfigure}{0.25\textwidth}
        \centering
        \includegraphics[width=0.99\textwidth,trim=10 10 10 0]{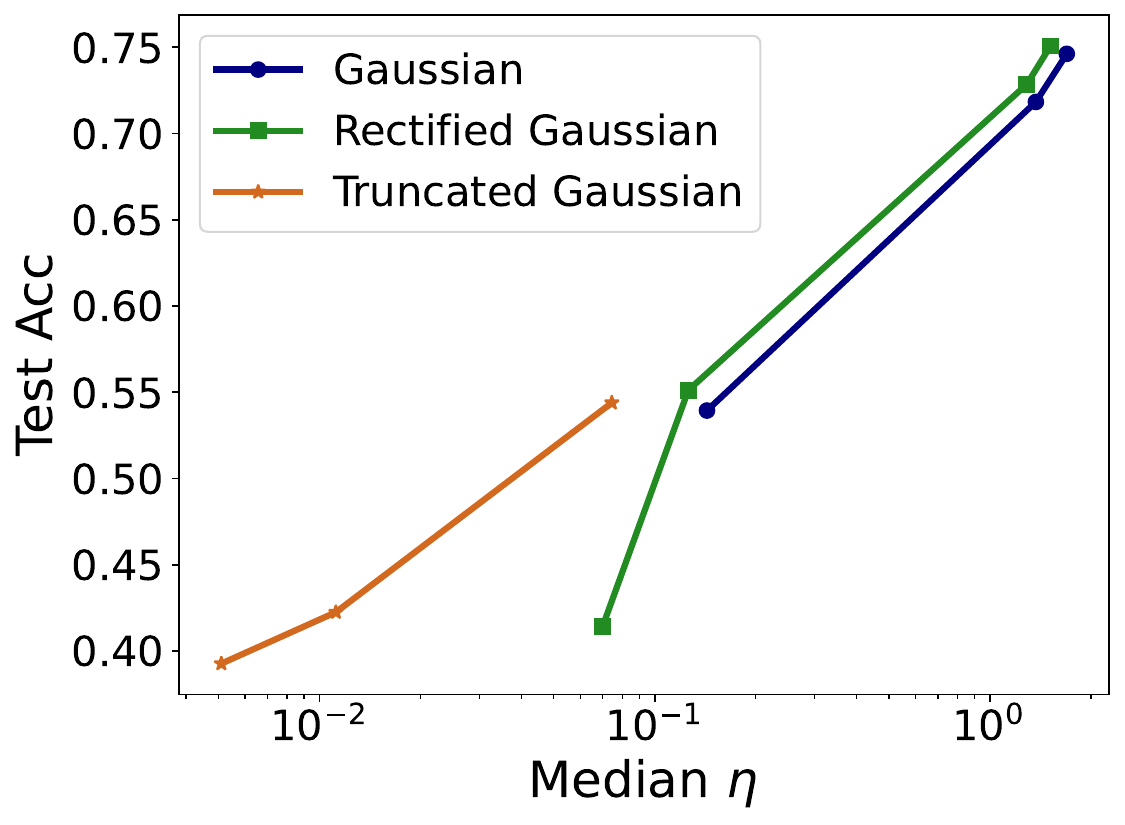}
        \subcaption{\small Full training, median $\eta$}
        \label{fig:full_train_median_eta}
    \end{subfigure}\hfill
    \begin{subfigure}{0.25\textwidth}
        \centering
        \includegraphics[width=0.99\textwidth,trim=10 10 10 0]{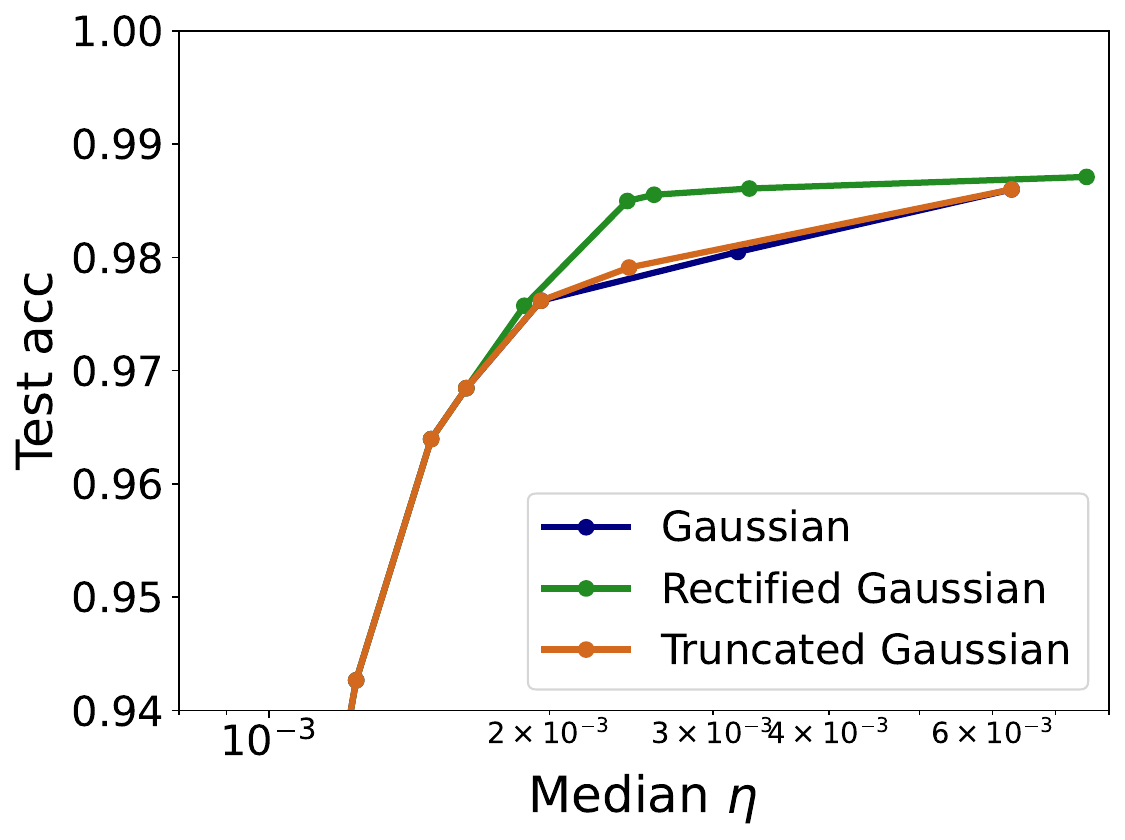}
        \subcaption{\small Linear probing, median $\eta$ }
        \label{fig:linear_probe_median_eta}
    \end{subfigure}
    % \hfill
    % \begin{subfigure}{0.16\textwidth}
    %     \centering
    %     \includegraphics[width=0.99\textwidth,trim=10 10 10 30]{figs/Global_loss_CelebA_pareto_frontier_DP_test.pdf}
    %     \subcaption{DP}
    % \end{subfigure}
    % \begin{subfigure}{0.16\textwidth}
    %     \centering
    %     \includegraphics[width=0.99\textwidth,trim=10 10 10 30]{figs/Global_loss_CelebA_pareto_frontier_DP_test.pdf}
    %     \subcaption{Placeholder}
    % \end{subfigure}
    % \vspace{-0.1in}
    \caption{\small  Comparison between bounded Gaussian mechanism and Gaussian mechanism in terms of FIL-utility tradeoff.} %
  % \vspace{-0.1in}
    \label{fig:fil_utility_tradeoff}
\end{wrapfigure}
We evaluate the  pDP-utility tradeoff for the bounded Gaussian mechanism. We consider three image classification tasks: CIFAR10, CIFAR100 \citep{krizhevsky2009learning}, and OxfordIIITPet \citep{parkhi2012oxford}. Following \citet{panda2022dp}, we take a pretrained feature extractor and finetune a linear model on top. Compared to performing private training from scratch, it has been shown that linear probing on a large pretrained model significantly improves the privacy-utility tradeoff \citep{panda2022dp,de2022unlocking}. We perform finetuning on two models pretrained on ImageNet-21k \citep{deng2009imagenet}: beitv2-224 \citep{peng2022beit} and ViT-large-384 \citep{dosovitskiy2020image}. As discussed earlier, we use full batch training for all experiments in this section, which is similarly done by prior works to reduce effective noise and improve utility \citep{panda2022dp}. For fair comparison, we use $L_{\infty}$ clipping for all privacy mechanisms to control the sensitivity. We perform grid search over the all hyperparameters for all mechanisms. Details of hyperparameters can be found in Appendix \ref{appen:hyperparam}.

Results are shown in Figure \ref{fig:dp_utility_tradeoff}. For all datasets and models, we fix the utility requirement and measure the  privacy amplification given that both the bounded Gaussian and Gaussian mechanism have achieved the target test accuracy. In many cases, the bounded Gaussian mechanism significantly reduces the privacy spent without drastically sacrificing the utility. To provide a few examples, finetuning on CIFAR 10 using beitv2 can get 98\% accuracy by reducing $>32\%$ of the privacy cost via truncation (See Figure \ref{fig:dp_utility_tradeoff}\subref{fig:cifar_beit}). Finetuning on CIFAR 100 using ViT-large-384 can get 80\% accuracy by reducing $>12\%$ of the privacy cost via rectification (See Figure \ref{fig:dp_utility_tradeoff}\subref{fig:cifar100_vit}). Note that one can always choose the bounded support set large enough so that the bounded Gaussian mechanism recovers the Gaussian mechanism. Thus, our new method and analysis always enjoy privacy utility trade-offs no worse than the Gaussian baseline.
% \shengyuan{
% \begin{enumerate}
%     \item Try DP-RAFT
%     \item \begin{table*}[h!]
% %  	\vspace{1em}
% 	\centering
% 	\scalebox{0.67}{
% 	\begin{tabular}{ ccc} 
% 	   \toprule[\heavyrulewidth]
	     
%         \textbf{WRN DP-RAFT} & L2 clipping &  Linf clipping \\
%         \midrule
%         $\epsilon=2$ & 89.83 & 87.14 \\
%         $\epsilon=4$ & 90.72 & 87.92 \\
%         \toprule
%         % \rowcolor{myred}
       
%     \bottomrule[\heavyrulewidth]
% 	\end{tabular}}
% 		\caption{CIFAR 10}
%   \label{table:WRN large}
% \vspace{-0.1in}
% \end{table*}

% \begin{table*}[h!]
% %  	\vspace{1em}
% 	\centering
% 	\scalebox{0.67}{
% 	\begin{tabular}{ ccc} 
% 	   \toprule[\heavyrulewidth]
	     
%         \textbf{WRN DP-RAFT} & L2 clipping &  Linf clipping \\
%         \midrule
%         $\epsilon=2$ & 65.73 & 53.21\\
%         $\epsilon=4$ & 67.77 & 53.69\\
%         \toprule
%         % \rowcolor{myred}
       
%     \bottomrule[\heavyrulewidth]
% 	\end{tabular}}
% 		\caption{CIFAR 100}
%   \label{table:WRN large}
% \vspace{-0.1in}
% \end{table*}
% \end{enumerate}
% }
% \vspace{-.05in}
\subsection{FIL-utility tradeoff}
% \vspace{-.05in}

Finally, we evaluate the FIL-utility tradeoff of the bounded Gaussian mechanism compared to the vanilla Gaussian mechanism on CIFAR10. Different from the DP accounting, FIL accounting for the bounded Gaussian mechanism supports subsampling. Therefore, we are able to use mini-batch gradient for training the model. We perform two different experiments in this section: \textit{1)} train a WRN \citep{zagoruyko2016wide} from scratch with private SGD; \textit{2)} linear probing of a pretrained beitv2 model with private SGD with full batch. Similar to the DP experiments, we only plot the Pareto frontier for the methods. We plot test accuracy vs. the maximum $\eta$ (Figure \ref{fig:fil_utility_tradeoff}\subref{fig:full_train_max_eta},\subref{fig:linear_probe_max_eta}), whose inverse lower bounds the reconstruction variance for the most vulnerable example. We also plot test accuracy vs. the median $\eta$ (Figure \ref{fig:fil_utility_tradeoff}\subref{fig:full_train_median_eta},\subref{fig:linear_probe_median_eta}) to show the average case performance. The results are shown in Figure \ref{fig:fil_utility_tradeoff}. Both rectified and truncated Gaussian achieve improved FIL utility trade off in both training paradigm. For example, to achieve 75\% accuracy, rectified Gaussian achieves $\max\eta=7.97$, outperforming Gaussian ($\max\eta=12.07$) by $>33\%$.

% \begin{table*}[h!]
% %  	\vspace{1em}
% 	\centering
% 	\scalebox{0.67}{
% 	\begin{tabular}{ cccc} 
% 	   \toprule[\heavyrulewidth]
	     
%         \textbf{BeitV2 finetune} & Gaussian &  Rectified & Truncation \\
%         \midrule
%         test acc = 0.985 & $\epsilon=0.9084$ & $\epsilon=0.8836$ & $\epsilon=0.8108$\\
%         test acc = 0.97 & $\epsilon=0.1512$ & $\epsilon=0.1352$ & $\epsilon=0.1533$\\
%         test acc = 0.95 & $\epsilon=0.1190$ & $\epsilon=0.1175$ & $\epsilon=0.1207$\\
%         test acc = 0.93 & $\epsilon=0.1150$ & $\epsilon=0.1108$ & $\epsilon=0.1092$\\
%         \midrule
%         \textbf{ViT finetune} &  &  &  \\
%         % test acc = 0.7216 & - & - & -\\
%         % test acc = 0.578 & $\epsilon=0.1048$ & - & -\\
%         \toprule
%         % \rowcolor{myred}
       
%     \bottomrule[\heavyrulewidth]
% 	\end{tabular}}
% 		\caption{CIFAR 10}
%   \label{table:WRN large}
% \vspace{-0.1in}
% \end{table*}

%\vspace{-.1in}
\section{Conclusion \& Future work}
In this work we proposed a novel analysis for two privacy mechanisms that consider bounding the support set for the Gaussian mechanism. We proved that these mechanisms enjoy amplified FIL and per-instance DP guarantees compared to the Gaussian mechanism and can improve privacy-utility trade-offs for private SGD. We also showed that our approaches outperform the Gaussian mechanism on real world private deep learning tasks.
An important consideration in future work may be to enable subsampling for pDP accounting, with a potential option being to subsample the coordinates while performing gradient descent~\citep{chen2023privacy} so that accounting can be done per coordinate.
%Other important future directions include exploring the mechanism under more standard privacy notion, applying R\'enyi privacy filter in the context of bounded Gaussian mechanism, etc.
More generally, we hope future works can build upon this work to further study how compression heuristics can help to improve privacy amplification.

\section*{Broader Impact}
This paper presents work whose goal is to advance the field of privacy preserving machine learning. We aim to understand how applying bounded support on top of the Gaussian mechanism can help improve privacy guarantees for tasks like private gradient descent. However, we note that we do not encourage publishing the per-instance DP privacy loss unconditionally due to its data dependent nature~\citep{wang2019per}. Releasing per-instance DP statistics publicly is an active area of research \citep{redberg2021privately} and our work presents an opportunity for future directions that relate compression and publishable per-instance DP.

% \shengyuan{future}
% \shengyuan{dp-ftrl, subsample coordinate}

\newpage
{\small
\bibliographystyle{abbrvnat}
\bibliography{references}
}

%%%%%%%%%%%%%%%%%%%%%%%%%%%%%%%%%%%%%%%%%%%%%%%%%%%%%%%%%%%%%%%%%%%%%%%%%%%%%%%
%%%%%%%%%%%%%%%%%%%%%%%%%%%%%%%%%%%%%%%%%%%%%%%%%%%%%%%%%%%%%%%%%%%%%%%%%%%%%%%
% APPENDIX
%%%%%%%%%%%%%%%%%%%%%%%%%%%%%%%%%%%%%%%%%%%%%%%%%%%%%%%%%%%%%%%%%%%%%%%%%%%%%%%
%%%%%%%%%%%%%%%%%%%%%%%%%%%%%%%%%%%%%%%%%%%%%%%%%%%%%%%%%%%%%%%%%%%%%%%%%%%%%%%
\newpage
\appendix
\onecolumn
\section{Proof of Theorem \ref{th:bgm_fil_amp}}
\label{appen:fil}
We start by showing the the closed form FIL for truncated and rectified Gaussian mechanism (Lemma \ref{lemma:fil_closed_form}).
\begin{proof}[Proof for Lemma \ref{lemma:fil_closed_form}]
We first consider the case when $\mathcal{M}$ is the truncated Gaussian mechanism. Let $p^T(x)$ be the probability density function for the truncated Gaussian, we have
\begin{align*}
    \frac{\partial}{\partial\theta}\log p^T(x) = \frac{x-\theta}{\sigma^2} +\frac{1}{\sigma} \frac{\phi\left(\frac{b-\theta}{\sigma}\right)-\phi\left(\frac{a-\theta}{\sigma}\right)}{\Phi\left(\frac{b-\theta}{\sigma}\right)-\Phi\left(\frac{a-\theta}{\sigma}\right)}.
\end{align*}

\begin{align*}
    \frac{\partial^2}{\partial\theta^2}\log p^T(x) &= -\frac{1}{\sigma^2}+\frac{1}{\sigma}\left(-\frac{1}{\sigma}\frac{\left(\Phi\left(\frac{b-\theta}{\sigma}\right)-\Phi\left(\frac{a-\theta}{\sigma}\right)\right)\left(\phi'\left(\frac{b-\theta}{\sigma}\right)-\phi'\left(\frac{a-\theta}{\sigma}\right)\right)-\left(\phi\left(\frac{b-\theta}{\sigma}\right)-\phi\left(\frac{a-\theta}{\sigma}\right)\right)^2}{\left(\Phi\left(\frac{b-\theta}{\sigma}\right)-\Phi\left(\frac{a-\theta}{\sigma}\right)\right)^2}\right)\\
    &=-\frac{1}{\sigma^2}+\frac{1}{\sigma^2}\frac{\left(\phi\left(\frac{b-\theta}{\sigma}\right)-\phi\left(\frac{a-\theta}{\sigma}\right)\right)^2}{\left(\Phi\left(\frac{b-\theta}{\sigma}\right)-\Phi\left(\frac{a-\theta}{\sigma}\right)\right)^2}-\frac{1}{\sigma^2}\frac{\phi'\left(\frac{b-\theta}{\sigma}\right)-\phi'\left(\frac{a-\theta}{\sigma}\right)}{\Phi\left(\frac{b-\theta}{\sigma}\right)-\Phi\left(\frac{a-\theta}{\sigma}\right)}\\
    &=-\frac{1}{\sigma^2}+\frac{1}{\sigma^2}\frac{\left(\phi\left(\frac{b-\theta}{\sigma}\right)-\phi\left(\frac{a-\theta}{\sigma}\right)\right)^2}{\left(\Phi\left(\frac{b-\theta}{\sigma}\right)-\Phi\left(\frac{a-\theta}{\sigma}\right)\right)^2}-\frac{1}{\sigma^2}\frac{\frac{a-\theta}{\sigma}\phi\left(\frac{a-\theta}{\sigma}\right)-\frac{b-\theta}{\sigma}\phi\left(\frac{b-\theta}{\sigma}\right)}{\Phi\left(\frac{b-\theta}{\sigma}\right)-\Phi\left(\frac{a-\theta}{\sigma}\right)}.
\end{align*}
Therefore, \begin{align*}
    \mathcal{I}_{\mathcal{M}}(\theta) &=\mathbb{E}_{x}\left[-\frac{\partial^2}{\partial\theta^2}\log p^T(x)\right]\\
    &= \frac{1}{\sigma^2}-\frac{1}{\sigma^2}\frac{\left(\phi\left(\frac{b-\theta}{\sigma}\right)-\phi\left(\frac{a-\theta}{\sigma}\right)\right)^2}{\left(\Phi\left(\frac{b-\theta}{\sigma}\right)-\Phi\left(\frac{a-\theta}{\sigma}\right)\right)^2}+\frac{1}{\sigma^2}\frac{\frac{a-\theta}{\sigma}\phi\left(\frac{a-\theta}{\sigma}\right)-\frac{b-\theta}{\sigma}\phi\left(\frac{b-\theta}{\sigma}\right)}{\Phi\left(\frac{b-\theta}{\sigma}\right)-\Phi\left(\frac{a-\theta}{\sigma}\right)}.
\end{align*}
Now we follow the proof from \citet{hannun2021measuring} to compute $\mathcal{I}_{\mathcal{M}}(D)$. Let $\mathbf{H}$ be the second-order derivative of $f(D)$: $\mathbf{H}_{ijk}=\frac{\partial^2f_k}{\partial D_i\partial D_j}$
\begin{align*}
    \mathcal{I}_{\mathcal{M}}(D) &= \mathbb{E}\left[-\nabla_D^2\log p^T(x|D)\right]\\
    &= \mathbb{E}\left[-J_f^\top\frac{\partial^2}{\partial\theta^2}\log p^T(x)J_f-\mathbf{H}\frac{\partial}{\partial\theta}\log p^T(x) \right]\\
    &= J_f^\top \mathcal{I}_{\mathcal{M}}(\theta) J_f-\mathbf{H}\mathbb{E}\left[\frac{\partial}{\partial\theta}\log p^T(x) \right]\\
    &= J_f^\top \mathcal{I}_{\mathcal{M}}(\theta) J_f-\mathbf{H}\mathbb{E}\left[\frac{x-\theta}{\sigma^2} +\frac{1}{\sigma} \frac{\phi\left(\frac{b-\theta}{\sigma}\right)-\phi\left(\frac{a-\theta}{\sigma}\right)}{\Phi\left(\frac{b-\theta}{\sigma}\right)-\Phi\left(\frac{a-\theta}{\sigma}\right)}\right]\\
        &= J_f^\top \mathcal{I}_{\mathcal{M}}(\theta) J_f-\mathbf{H}\left(\frac{\theta+\frac{\phi\left(\frac{a-\theta}{\sigma}\right)-\phi\left(\frac{b-\theta}{\sigma}\right)}{\Phi\left(\frac{b-\theta}{\sigma}\right)-\Phi\left(\frac{a-\theta}{\sigma}\right)}\sigma-\theta}{\sigma^2}+\frac{1}{\sigma} \frac{\phi\left(\frac{b-\theta}{\sigma}\right)-\phi\left(\frac{a-\theta}{\sigma}\right)}{\Phi\left(\frac{b-\theta}{\sigma}\right)-\Phi\left(\frac{a-\theta}{\sigma}\right)}\right)\\
        &=J_f^\top \mathcal{I}_{\mathcal{M}}(\theta) J_f.
\end{align*}
This concludes our derivation of FIL for the truncated Gaussian mechanism.

When $\mathcal{M}$ is the rectified Gaussian mechanism, let $p^R(x)$ be the probability density function for the rectified Gaussian, we have
\begin{align*}
    \frac{\partial}{\partial\theta}\log p^R(x) &= U(x;a,b)\frac{x-\theta}{\sigma^2}-\frac{\mathbf{1}_{x=a}}{\sigma}\frac{\phi\left(\frac{a-\theta}{\sigma}\right)}{\Phi\left(\frac{a-\theta}{\sigma}\right)}+\frac{\mathbf{1}_{x=b}}{\sigma}\frac{\phi\left(\frac{\theta-b}{\sigma}\right)}{\Phi\left(\frac{\theta-b}{\sigma}\right)}
\end{align*}
\begin{align*}
    \frac{\partial^2}{\partial\theta^2}\log p^R(x) &= -U(x;a,b)\frac{1}{\sigma^2}+\frac{\mathbf{1}_{x=a}}{\sigma^2}\frac{\Phi\left(\frac{a-\theta}{\sigma}\right)\phi'\left(\frac{a-\theta}{\sigma}\right)-\phi^2\left(\frac{a-\theta}{\sigma}\right)}{\Phi\left(\frac{a-\theta}{\sigma}\right)^2}+\frac{\mathbf{1}_{x=b}}{\sigma^2}\frac{\Phi\left(\frac{\theta-b}{\sigma}\right)\phi'\left(\frac{\theta-b}{\sigma}\right)-\phi^2\left(\frac{\theta-b}{\sigma}\right)}{\Phi\left(\frac{\theta-b}{\sigma}\right)^2}
\end{align*}
The fisher information is thus:
\begin{align*}
    \mathcal{I}_{\mathcal{M}}(\theta) &= \mathbb{E}_{x}\left[-\frac{\partial^2}{\partial\theta^2}\log p^R(x)\right]\\
    &= \mathbb{E}_{x}\left[U(x;a,b)\frac{1}{\sigma^2}-\frac{\mathbf{1}_{x=a}}{\sigma^2}\frac{\Phi\left(\frac{a-\theta}{\sigma}\right)\phi'\left(\frac{a-\theta}{\sigma}\right)-\phi^2\left(\frac{a-\theta}{\sigma}\right)}{\Phi\left(\frac{a-\theta}{\sigma}\right)^2}-\frac{\mathbf{1}_{x=b}}{\sigma^2}\frac{\Phi\left(\frac{\theta-b}{\sigma}\right)\phi'\left(\frac{\theta-b}{\sigma}\right)-\phi^2\left(\frac{\theta-b}{\sigma}\right)}{\Phi\left(\frac{\theta-b}{\sigma}\right)^2}\right]\\
    &= \frac{\Phi\left(\frac{\theta-b}{\sigma}\right) - \Phi\left(\frac{\theta-a}{\sigma}\right)}{\sigma^2}-\frac{\phi'\left(\frac{a-\theta}{\sigma}\right)}{\sigma^2}+\frac{\phi^2\left(\frac{a-\theta}{\sigma}\right)}{\sigma^2\Phi\left(\frac{a-\theta}{\sigma}\right)}-\frac{\phi'\left(\frac{\theta-b}{\sigma}\right)}{\sigma^2}+\frac{\phi^2\left(\frac{\theta-b}{\sigma}\right)}{\sigma^2\Phi\left(\frac{\theta-b}{\sigma}\right)}\\
    &= \frac{\Phi\left(\frac{\theta-b}{\sigma}\right) - \Phi\left(\frac{\theta-a}{\sigma}\right)}{\sigma^2}+\frac{\frac{a-\theta}{\sigma}\phi\left(\frac{a-\theta}{\sigma}\right)}{\sigma^2}+\frac{\phi^2\left(\frac{a-\theta}{\sigma}\right)}{\sigma^2\Phi\left(\frac{a-\theta}{\sigma}\right)}+\frac{\frac{\theta-b}{\sigma}\phi\left(\frac{\theta-b}{\sigma}\right)}{\sigma^2}+\frac{\phi^2\left(\frac{\theta-b}{\sigma}\right)}{\sigma^2\Phi\left(\frac{\theta-b}{\sigma}\right)}\\
    &= \frac{1}{\sigma^2}\left(\frac{\phi^2\left(\frac{a-\theta}{\sigma}\right)}{\Phi\left(\frac{a-\theta}{\sigma}\right)}+\frac{\phi^2\left(\frac{\theta-b}{\sigma}\right)}{\Phi\left(\frac{\theta-b}{\sigma}\right)}+\Phi\left(\frac{\theta-b}{\sigma}\right) - \Phi\left(\frac{\theta-a}{\sigma}\right)+\frac{a-\theta}{\sigma}\phi\left(\frac{a-\theta}{\sigma}\right) - \frac{b-\theta}{\sigma}\phi\left(\frac{b-\theta}{\sigma}\right)\right).
\end{align*}
Similar to the proof for truncated gaussian, we have
\begin{align*}
    \mathcal{I}_{\mathcal{M}}(D) &= \mathbb{E}\left[-\nabla_D^2\log p^R(x|D)\right]\\
    &= J_f^\top \mathcal{I}_{\mathcal{M}}(\theta) J_f-\mathbf{H}\mathbb{E}\left[\frac{\partial}{\partial\theta}\log p^R(x) \right]\\
    &= J_f^\top \mathcal{I}_{\mathcal{M}}(\theta) J_f-\mathbf{H}\mathbb{E}\left[U(x;a,b)\frac{x-\theta}{\sigma^2}-\frac{\mathbf{1}_{x=a}}{\sigma}\frac{\phi\left(\frac{a-\theta}{\sigma}\right)}{\Phi\left(\frac{a-\theta}{\sigma}\right)}+\frac{\mathbf{1}_{x=b}}{\sigma}\frac{\phi\left(\frac{\theta-b}{\sigma}\right)}{\Phi\left(\frac{\theta-b}{\sigma}\right)}\right]\\
        % &= J_f^\top \mathcal{I}_{\mathcal{M}}(\theta) J_f-\mathbf{H}\left(\frac{\theta+\frac{\phi\left(\frac{a-\theta}{\sigma}\right)-\phi\left(\frac{b-\theta}{\sigma}\right)}{\Phi\left(\frac{b-\theta}{\sigma}\right)-\Phi\left(\frac{a-\theta}{\sigma}\right)}\sigma-\theta}{\sigma^2}+\frac{1}{\sigma} \frac{\phi\left(\frac{b-\theta}{\sigma}\right)-\phi\left(\frac{a-\theta}{\sigma}\right)}{\Phi\left(\frac{b-\theta}{\sigma}\right)-\Phi\left(\frac{a-\theta}{\sigma}\right)}\right)\\
        &=J_f^\top \mathcal{I}_{\mathcal{M}}(\theta) J_f.
\end{align*}
\end{proof}

First note that for gaussian mechanism, $\eta=\frac{1}{\sigma^2}$ \citep{hannun2021measuring}. Now we are ready to show the FIL amplification via bounded support. $\eta^R\leq\eta$ follows directly from the following post-processing property of FIL.

\begin{lemma}[Theorem 2.86 from \citet{schervish2012theory}]
    Let $T=t(X)$ be a statistics. Then the FIM of T and X satisfies $\mathcal{I}_T(\theta)\leq\mathcal{I}_X(\theta)$.
\end{lemma}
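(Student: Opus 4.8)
The plan is to prove this Fisher-information data-processing inequality by comparing score functions of $X$ and of the coarsening $T = t(X)$, and concluding via conditional Jensen. Throughout, read ``$\le$'' in the Loewner (positive-semidefinite) order. Write $p_X(\,\cdot\,;\theta)$ and $p_T(\,\cdot\,;\theta)$ for the laws of $X$ and $T$, and let $s_X(x;\theta) = \nabla_\theta \log p_X(x;\theta)$ and $s_T(y;\theta) = \nabla_\theta \log p_T(y;\theta)$ be the corresponding scores. Under the usual regularity conditions (differentiation under the integral sign, which is where finiteness of the Fisher information enters) the scores are mean-zero, so $\mathcal{I}_X(\theta) = \mathbb{E}\left[s_X(X;\theta)\,s_X(X;\theta)^\top\right]$ and likewise $\mathcal{I}_T(\theta) = \mathbb{E}\left[s_T(T;\theta)\,s_T(T;\theta)^\top\right]$.

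The crux is the identity $\mathbb{E}\left[s_X(X;\theta)\mid T\right] = s_T(T;\theta)$. Since $T$ is a deterministic function of $X$, one factors the joint law through the conditional density on the fibers, $p_X(x;\theta) = p_{X\mid T}\!\left(x\mid t(x);\theta\right)\,p_T\!\left(t(x);\theta\right)$, hence $s_X(x;\theta) = \nabla_\theta\log p_{X\mid T}\!\left(x\mid t(x);\theta\right) + s_T\!\left(t(x);\theta\right)$. Conditioning on $T = y$ and using that $\int p_{X\mid T}(x\mid y;\theta)\,dx = 1$ for every $\theta$, its $\theta$-derivative vanishes, so $\mathbb{E}\left[\nabla_\theta\log p_{X\mid T}(X\mid y;\theta)\mid T = y\right] = \int \nabla_\theta p_{X\mid T}(x\mid y;\theta)\,dx = 0$, leaving exactly $\mathbb{E}\left[s_X(X;\theta)\mid T\right] = s_T(T;\theta)$.

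Given the identity, the conclusion is one line of conditional Jensen in PSD form: for any random vector $V$, $\mathbb{E}[V\mid T]\,\mathbb{E}[V\mid T]^\top \preceq \mathbb{E}\left[VV^\top\mid T\right]$, so with $V = s_X(X;\theta)$ and then taking expectations,
\[ \mathcal{I}_T(\theta) = \mathbb{E}\left[\mathbb{E}[s_X\mid T]\,\mathbb{E}[s_X\mid T]^\top\right] \preceq \mathbb{E}\left[\mathbb{E}[s_X s_X^\top\mid T]\right] = \mathcal{I}_X(\theta). \]
Equivalently one may split $s_X = \big(s_X - s_T(T)\big) + s_T(T)$ into orthogonal components --- orthogonality being precisely the identity above --- and read off $\mathcal{I}_X(\theta) = \mathbb{E}\left[(s_X - s_T(T))(s_X - s_T(T))^\top\right] + \mathcal{I}_T(\theta) \succeq \mathcal{I}_T(\theta)$, with the residual covariance measuring the information discarded by passing from $X$ to $T$.

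The main obstacle is not the algebra but the measure-theoretic bookkeeping, and in particular making the fiber-conditioning argument rigorous for statistics whose image law is not absolutely continuous --- which is exactly the case of interest here, since the clip map defining the rectified Gaussian puts atoms at the endpoints $a$ and $b$. One handles this by fixing a common dominating measure (Lebesgue on $(a,b)$ plus point masses at $\{a,b\}$), checking that $\log p_T$ is still differentiable in $\theta$ against it, and noting that the conditional laws $p_{X\mid T}(\,\cdot\mid y;\theta)$ are benign (the Gaussian restricted to $\{x\le a\}$ or $\{x\ge b\}$, or a point mass at an interior $y$), after which Steps above go through verbatim. Combined with $\mathcal{I}_h(D) = J_f^\top \mathcal{I}_{\mathcal M}(\theta) J_f$ from Lemma~\ref{lemma:fil_closed_form} and monotonicity of the matrix $2$-norm under the Loewner order, this gives $\eta^R \le \eta$ for the rectified mechanism; the truncated mechanism, not being a post-processing of the Gaussian, must be treated by a separate argument (e.g.\ observing that its scalar Fisher information equals the variance of the standardized truncated normal, and that truncating a Gaussian to an interval cannot increase variance).
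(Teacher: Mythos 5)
Your proof is correct. Note that the paper does not prove this lemma at all --- it is imported verbatim as Theorem 2.86 of \citet{schervish2012theory} --- so there is no in-paper argument to compare against; what you give is the standard proof behind that citation: the identity $\mathbb{E}\left[s_X(X;\theta)\mid T\right]=s_T(T;\theta)$ obtained by differentiating the normalized conditional density on the fibers, followed by conditional Jensen (equivalently, the orthogonal decomposition of the score), yielding $\mathcal{I}_T(\theta)\preceq\mathcal{I}_X(\theta)$. Your added care about the mixed dominating measure (Lebesgue on $(a,b)$ plus atoms at the endpoints) is exactly the point one must check to apply the lemma to the rectified mechanism, and your closing remark is also consistent with the paper: the lemma is only used for $\eta^R\leq\eta$, while the truncated case is handled separately (the paper via concavity of $\log(\Phi(b-x)-\Phi(a-x))$; your variance-of-the-truncated-normal observation is a valid alternative).
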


Now we focus on showing $\eta^T\leq\eta$. W.L.O.G, we assume $\sigma=1$ for the rest of the proof. We need the following lemma to prove Theorem \ref{th:bgm_fil_amp}.
\begin{lemma}
\label{lemma:concave}
    Let $f(x)=\log(\Phi(b-x)-\Phi(a-x))$, we have $f(x)$ is concave.
\end{lemma}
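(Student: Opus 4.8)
## Proof Proposal for Lemma \ref{lemma:concave}

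\textbf{Setup and strategy.} The plan is to show that $f(x) = \log(\Phi(b-x) - \Phi(a-x))$ is concave by proving $f''(x) \le 0$ everywhere. Writing $g(x) = \Phi(b-x) - \Phi(a-x) > 0$ (positive since $a < b$ and $\Phi$ is strictly increasing), we have $f = \log g$, so $f'' = (g'' g - (g')^2)/g^2$. Since $g^2 > 0$, it suffices to show $g'' g \le (g')^2$, i.e. that $g$ is log-concave. First I would compute $g'(x) = -\phi(b-x) + \phi(a-x) = \phi(a-x) - \phi(b-x)$ and $g''(x) = -\phi'(b-x) + \phi'(a-x)$; using $\phi'(t) = -t\phi(t)$ this becomes $g''(x) = (b-x)\phi(b-x) - (a-x)\phi(a-x)$.

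\textbf{Main approach via log-concavity of Gaussian and preservation under integration.} Rather than grinding through the inequality $g''g \le (g')^2$ directly with these explicit Gaussian expressions — which is messy — I would instead invoke the structural fact that the Gaussian density $\phi$ is log-concave, and that log-concavity is preserved under the operation of taking a "sliding window" integral. Concretely, $g(x) = \Phi(b-x) - \Phi(a-x) = \int_{a-x}^{b-x} \phi(t)\, dt = \int_a^b \phi(s - x)\, ds$. The integrand $(s,x) \mapsto \phi(s-x)$ is a log-concave function of the pair $(s,x) \in \mathbb{R}^2$, because $s - x$ is linear and $\log \phi$ is concave. By the Prékopa–Leindler theorem (marginals / integrals of log-concave functions are log-concave — here integrating out $s$ over the fixed interval $[a,b]$, equivalently integrating $\phi(s-x)\mathbf{1}_{[a,b]}(s)$ which is log-concave in $(s,x)$ since $\mathbf{1}_{[a,b]}$ is log-concave), the function $x \mapsto g(x)$ is log-concave on $\mathbb{R}$. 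Hence $f = \log g$ is concave. This is clean and avoids any computation.

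\textbf{Alternative elementary route, if a self-contained argument is preferred.} If the paper wants to avoid citing Prékopa–Leindler, I would fall back to the direct second-derivative computation. Set $u = a - x$, $v = b - x$ with $u < v$. Then $g = \Phi(v) - \Phi(u)$, $g' = \phi(u) - \phi(v)$, $g'' = v\phi(v) - u\phi(u)$ (using $\phi' (t)= -t\phi(t)$ and the chain rule sign from differentiating in $x$). I need $(\Phi(v)-\Phi(u))(v\phi(v) - u\phi(u)) \le (\phi(u)-\phi(v))^2$. One way: write everything as a double integral. We have $\Phi(v) - \Phi(u) = \int_u^v \phi(t)\,dt$ and $v\phi(v) - u\phi(u) = \int_u^v (t\phi(t))'\,dt = \int_u^v (\phi(t) + t\phi'(t))\,dt = \int_u^v (1 - t^2)\phi(t)\,dt$, while $\phi(u) - \phi(v) = \int_u^v -\phi'(t)\,dt = \int_u^v t\phi(t)\,dt$. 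So the claim becomes
\[
\left(\int_u^v \phi(t)\,dt\right)\left(\int_u^v (1-t^2)\phi(t)\,dt\right) \le \left(\int_u^v t\phi(t)\,dt\right)^2,
\]
which, dividing by $\left(\int_u^v\phi\right)^2$ and letting $\mathbb{E}$ denote expectation w.r.t. the probability density $\phi(t)/\int_u^v\phi$ on $[u,v]$, reads $1 - \mathbb{E}[t^2] \le (\mathbb{E}[t])^2$, i.e. $1 \le \mathbb{E}[t^2] + (\mathbb{E}[t])^2$. Since $\mathbb{E}[t^2] \ge (\mathbb{E}[t])^2 \ge 0$ is not quite enough, I would instead note $\mathbb{E}[t^2] + (\mathbb{E}[t])^2 \ge \mathrm{Var}(t) + 2(\mathbb{E}[t])^2$... hmm, this needs the known fact that a truncated-Gaussian random variable $T$ supported on $[u,v]$ satisfies $\mathrm{Var}(T) \ge 1 - $ (boundary terms), equivalently the identity $\mathrm{Var}(T) = 1 + u\phi(u)/Z - v\phi(v)/Z - (\mathbb{E} T)^2$ with $Z = \Phi(v)-\Phi(u)$; substituting, the inequality $1 \le \mathbb{E}[T^2] + (\mathbb{E} T)^2$ is exactly $\mathrm{Var}(T) \le \mathbb{E}[T^2]$...

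\textbf{Expected main obstacle.} The cleanest path is definitely the log-concavity/Prékopa–Leindler argument, and I would present that as the primary proof; the obstacle there is merely making sure the measure-theoretic statement (integral of a jointly log-concave function against a log-concave indicator is log-concave in the remaining variable) is cited correctly. The elementary route's obstacle, as the scratch work above shows, is that reducing to $1 \le \mathbb{E}[T^2] + (\mathbb{E}[T])^2$ for $T$ a Gaussian truncated to $[u,v]$ still requires the truncated-Gaussian variance identity and then a short argument that $\mathbb{E}[T^2] \ge 1 - (\mathbb{E}[T])^2$; this is true but takes a couple of careful lines. I would therefore lead with Prékopa–Leindler (or equivalently cite that $\Phi$ is log-concave and differences of the form $\int_a^b \phi(s-x)ds$ inherit log-concavity) and relegate the computation to a remark.
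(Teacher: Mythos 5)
Your primary argument is correct, and it takes a genuinely different route from the paper. You observe that $g(x)=\Phi(b-x)-\Phi(a-x)=\int_a^b\phi(s-x)\,ds$ is the integral over $s$ of the jointly log-concave function $(s,x)\mapsto\phi(s-x)\mathbf{1}_{[a,b]}(s)$ (equivalently, a convolution of the log-concave density $\phi$ with the log-concave indicator of an interval), so log-concavity of $g$ follows from Pr\'ekopa's theorem and $f=\log g$ is concave. The paper instead proves the lemma from scratch: it rewrites $f$ in terms of $\mathrm{erf}$, reduces concavity to the nonnegativity of an explicit expression $h(x)$, and establishes this via a bespoke auxiliary lemma (Lemma~\ref{lem:derivatives}) on sign changes of a function whose derivative factors as a positive function times a strictly concave function, verified by direct computation of second derivatives. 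Your route buys a two-line proof at the price of invoking Pr\'ekopa--Leindler as a black box; the paper's route is self-contained and, along the way, produces the explicit form of $f''$ (which equals $\mathcal{I}_{\mathcal{M}}(\theta)-1$ for the truncated Gaussian), the quantity actually used in the proof of Theorem~\ref{th:bgm_fil_amp}.

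Your fallback ``elementary route,'' however, contains a sign error that derails it: with $u=a-x$, $v=b-x$ one has $g''(x)=\phi'(b-x)-\phi'(a-x)=u\phi(u)-v\phi(v)$, not $v\phi(v)-u\phi(u)$. As a consequence, the inequality you reduce to, $1\le\mathbb{E}[T^2]+(\mathbb{E}[T])^2$ for $T$ a standard normal truncated to $[u,v]$, is false --- for $u=-\varepsilon$, $v=\varepsilon$ the right-hand side is $O(\varepsilon^2)$. With the correct sign, the reduction is $\mathbb{E}[T^2]-(\mathbb{E}[T])^2\le 1$, i.e.\ $\mathrm{Var}(T)\le 1$: truncating a standard normal to an interval cannot increase its variance. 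That statement is true (it is a Brascamp--Lieb-type fact) but is essentially equivalent in strength to the lemma itself, so this branch would still need a real argument. Since you flag that branch as scratch work and lead with the Pr\'ekopa argument, your primary proof stands; just drop or repair the fallback before presenting it.
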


\begin{proof}[Proof for Lemma \ref{lemma:concave}]
We first prove the following lemma.
\newcommand{\sign}{\mathrm{sign}}
\newcommand{\erf}{\mathrm{erf}}
\begin{lemma}\label{lem:derivatives}
Let $f\colon R\to R$ be a continuous and thrice differentiable function with continuous derivatives and $f(0)=0$, $f'(0)=0$, $f''(0)>0$, $\lim_{x\to \infty}f(x)\geq 0$, $\lim_{x\to \infty}\sign(f’(x))=-1$.  Also, assume $f’(x) = r(x)t(x)$ such that $r(x)>0$ and $t(x)$ is strictly concave for $x>0$. Then we have $f(x)\geq 0$ for all $x>0$. 
\end{lemma}
\begin{proof}[Proof for Lemma \ref{lem:derivatives}]
Based on the conditions on the limit in the infinity, there should exist a point $x_\infty$ such that the for all $x\geq x_\infty$ $f(x_\infty)\geq 0$ and $f'(x_\infty)\leq 0$.
We prove the statement by contradiction. Assume there exist a point $0<x_{-}<x_\infty$ such that $f(x_{-})<0$.  Since $f'(0)=0$ and $f''(0)>0$, we have $\lim_{x\to 0^+} \sign(f'(x)) =1$ and $\lim_{x\to 0^+} \sign(f(x)) =1$. Now, since the function is positive in the right neighborhood of $0$, and it is negative at $x_{-}$, the gradient should become negative in at least one point $0<x_1<x_{-}$. On the other hand, since the function is negative in $x_-$ and positive in $x_{\infty}$, the gradient should get positive in at least one point $x_-<x_2< x_\infty$. Now, considering the sign of $f'$, we have $f'(0)>0$, $f'(x_1)<0$, $f'(x_2)>0$ and $f'(x_\infty)<0$. Since $f'$ is changing sign 3 times, it should at least have $3$ roots as well. This means that $t(x)$ should have three roots. However, we know that $t$ is strictly concave and cannot have three roots. Now we prove $g_c(x)$ is concave for all $c>0$.
\end{proof}
Now we use this lemma to prove the statement. Let us take the second gradient of function $f(x)=\log(\Phi(b-x) - \Phi(a-x)).$ For simplicity, let us define an alternative function $g_c(x) = \log(\erf(x+c) - \erf(x))$. We can write $f(x) =g_{\frac{b-a}{\sqrt{2}}}(\frac{a-x}{\sqrt{2}}) -\log(2)$. Since $g$ is a linear transformation of $f$, proving the concavity of $g$ is equivalent to proving the concavity of $f$. From now on, we use $g$ instead of $g_c$. For the second derivative of $g$ we have:
\begin{align*}\frac{d^2g(x)}{dx^2} &=\frac{-4 (e^{-x^2} - e^{-(c + x)^2})^2}{\pi (\erf(x) - \erf(c + x))^2}\\
&+ \frac{-4\sqrt{\pi} (e^{-x^2} x - e^{-(c + x)^2} (c + x)) (\erf(x) - \erf(c + x))}{\pi(\erf(x) - \erf(c + x))^2}
\end{align*}
We need to prove this quantity is negative for all $x$ and $c$. We only focus on the numerator and prove that the following quantity is positive:
\begin{align*}
h(x) &= (e^{-x^2} - e^{-(c + x)^2})^2  \\
&+ \sqrt{\pi} (e^{-x^2} x - e^{-(c + x)^2} (c + x)) (\erf(x) - \erf(c + x))
\end{align*}
We can only focus on the case that $x$ is positive as the function is symmetric around $x=-c/2$ and for the case were $-c/2<x<0$, we know that $h(x)$ is trivially positive because both terms are positive.

Once again we change the function for simplicity. Let $$r_y(x) = (e^{-x^2} - e^{-y^2})^2
+ \sqrt{\pi} (e^{-x^2} x - e^{-y^2}y) (\erf(x) - \erf(y)).$$
Hence, we have:
$$h(x) = r_x(c+x).$$
We now prove for any $x,c>0$, $ r_x(c+x)\geq 0$. We invoke Lemma \ref{lem:derivatives} to prove this. consider $f(c)=r_x(c+x)$ as a function of $c$. We have $f(0)=0$. We can also observe that $\lim_{c\to \infty} f(c) = e^{-2x^2} -\sqrt{\pi}xe^{-x^2}(1-\erf(x)).$ And this quantity is positive for all $x$, simply because the gradient of $e^{-x^2}-\sqrt{\pi} x(1-\erf(x))$ is equal to $\sqrt{\pi}(\erf(x)-1)$ which is always negative. Therefore, the supremum happens in the limit when pushing $x$ to infinity, which is $0$. Now let us look at the gradient of $f$, which we denote by $f'$. We have
\begin{align*}
f'(c)&= \sqrt{\pi} (2 e^{-(x + c)^2} (x + c)^2 - e^{-(x + c)^2}) (\erf(x) - \erf(x + c))\\
&+ 4 e^{-(x + c)^2} (e^{-x^2} - e^{-(x + c)^2}) (x + c)\\
&- 2 e^{-(x + c)^2} (e^{-x^2} x - e^{-(x + c)^2} (x + c))\\
&= e^{-(x+c)^2}\Big((2(x+c)^2 -1 )(\erf(x)-\erf(x+c))\\
&~~~~~~~~~~~~+4(e^{-x^2} - e^{-(x + c)^2}) (x + c)\\
&~~~~~~~~~~~~- 2 (e^{-x^2} x - e^{-(x + c)^2} (x + c))\Big)
\end{align*}
It is easy to see that $\lim_{c\to \infty} f'(c) = 0$ because of the $e^{-(x+c)^2}$ factor. Also we can observe that the $\lim_{c\to \infty}\sign(f'(c))=1$. Now note that we have already separated $f'$ in the form of $f'(c)=t(c)q(c)$ where $t(c)=e^{-(x+c)^2}>0$. So, the only step left is to show that $q(c)$ is concave. To show this, we take the second gradient of $q(c)=\sqrt{\pi}(2(x+c)^2 -1 )(\erf(x)-\erf(x+c))+
4(e^{-x^2} - e^{-(x + c)^2}) (x + c)
- 2 (e^{-x^2} x - e^{-(x + c)^2} (x + c)$. We have
$$q''(c) = 4 \sqrt{\pi} (\erf(x) - \erf(c + x)) - 8 (c+x) e^{-(c + x)^2}.$$
Note that this is negative because $(\erf(x) - \erf(c+x)<0$  (Since $c+x>x$). Also $-8(c+x)e^{-(c+x)^2}<0$ since $c+x>0$. 
\end{proof} 

Now we have all the ingredients for proving Theorem \ref{th:bgm_fil_amp}.

\begin{proof}[Proof for Theorem \ref{th:bgm_fil_amp}]
    By Lemma \ref{lemma:concave}, $f''(x)<0$ everywhere. Note that when $\mathcal{M}$ is truncated gaussian mechanism, $\mathcal{I}_{\mathcal{M}}(\theta)=1+f''(x)<1$. Note that $\eta^2=\frac{1}{\sigma^2}=1$. Hence, we completed the proof that $\eta^T\leq \eta$.
\end{proof}

\section{Derivation of R\'enyi Divergence of bounded Gaussian distribution}
\label{appen:dp_derivation}
\begin{proof}

Consider the case of truncated Gaussian:
\begin{align*}
    &D_{\alpha}(\mathcal{N}^T(\theta,\sigma^2,\mathcal{B})\|\mathcal{N}^T(\theta+c,\sigma^2,\mathcal{B}))\\ 
    &=\frac{1}{\alpha-1}\log\int_{-a}^{a}\frac{1}{\sigma}\left(\frac{\phi\left(\frac{x-(\theta+c)}{\sigma}\right)}{\Delta(\theta+c)}\right)^{1-\alpha}\left(\frac{\phi\left(\frac{x-\theta}{\sigma}\right)}{\Delta(\theta)}\right)^{\alpha}dx\\
    &=\frac{1}{\alpha-1}\log\int_{-a}^{a}\frac{1}{\sigma}\phi\left(\frac{x-\theta}{\sigma}\right)^{\alpha}\phi\left(\frac{x-\theta-c}{\sigma}\right)^{1-\alpha}\left(\frac{\Delta(\theta+c)}{\Delta(\theta)}\right)^{\alpha-1}\left(\frac{1}{\Delta(\theta)}\right)dx\\
    &=\frac{1}{\alpha-1}\log\Bigg(\left(\frac{\Delta(\theta+c)}{\Delta(\theta)}\right)^{\alpha-1}\left(\frac{1}{\Delta(\theta)}\right)\int_{-a}^{a}\frac{1}{\sigma}\phi\left(\frac{x-\theta}{\sigma}\right)^{\alpha}\phi\left(\frac{x-\theta-c}{\sigma}\right)^{1-\alpha}dx\Bigg)
\end{align*}

Now we focus on the integration part:
\begin{align*}
    &\int_{-a}^a\frac{1}{\sigma}\phi\left(\frac{x-\theta}{\sigma}\right)^{\alpha}\phi\left(\frac{x-\theta-c}{\sigma}\right)^{1-\alpha}dx\\
    &=\frac{1}{\sqrt{2\pi}\sigma}\int_{-a}^a\exp\left(-\frac{1}{2\sigma^2}\left(\alpha(x-\theta)^2+(1-\alpha)(x-\theta-c)^2\right)\right)dx\\
    &=\frac{1}{\sqrt{2\pi}\sigma}\int_{-a}^a\exp\left(-\frac{1}{2\sigma^2}\left(x^2-2(\theta+(1-\alpha)c)x+(\theta+(1-\alpha)c)^2-(\alpha^2-\alpha)c^2\right)\right)dx\\
    &=\exp\left(\frac{(\alpha^2-\alpha)c^2}{2\sigma^2}\right)\frac{1}{\sqrt{2\pi}\sigma}\int_{-a}^a\exp\left(-\frac{\left(x-(\theta+(1-\alpha)c)\right)^2}{2\sigma^2}\right)\\
    &=\exp\left(\frac{(\alpha^2-\alpha)c^2}{2\sigma^2}\right)\Delta(\theta+(1-\alpha)c)
\end{align*}

Hence, we have
\begin{align*}
    &D_{\alpha}(\mathcal{N}^T(\theta,\sigma^2,\mathcal{B})\|\mathcal{N}^T(\theta+c,\sigma^2,\mathcal{B}))\\ 
    &=\frac{1}{\alpha-1}\log\Bigg(\left(\frac{\Delta(\theta+c)}{\Delta(\theta)}\right)^{\alpha-1}\left(\frac{\Delta(\theta+(1-\alpha)c)}{\Delta(\theta)}\right)\exp\left(\frac{(\alpha^2-\alpha)c^2}{2\sigma^2}\right)\Bigg)\\
    &=D_{\alpha}(\mathcal{N}(\theta,\sigma^2)\|\mathcal{N}(\theta+c,\sigma^2))+\log\frac{\Delta(\theta+c)}{\Delta(\theta)}+\frac{1}{\alpha-1}\log\frac{\Delta(\theta+(1-\alpha)c)}{\Delta(\theta)}
\end{align*}

Consider the case of rectified Gaussian:
    \begin{align*}
        &D_{\alpha}(\mathcal{N}^R(\theta,\sigma^2,\mathcal{B})\|\mathcal{N}^R(\theta+c,\sigma^2,\mathcal{B}))\\ &= \frac{1}{\alpha-1}\log\Bigg(\frac{\Phi\left(\frac{-a-\theta}{\sigma}\right)^\alpha}{\Phi\left(\frac{-a-\theta-c}{\sigma}\right)^{\alpha-1}}
    +\frac{\Phi\left(\frac{\theta-a}{\sigma}\right)^\alpha}{\Phi\left(\frac{\theta+c-a}{\sigma}\right)^{\alpha-1}}+\int_{-a}^a\frac{1}{\sigma}\phi\left(\frac{x-\theta}{\sigma}\right)^{\alpha}\phi\left(\frac{x-\theta-c}{\sigma}\right)^{1-\alpha}dx\Bigg)\\
    &=\frac{1}{\alpha-1}\log\Bigg(\frac{\Phi\left(\frac{-a-\theta}{\sigma}\right)^\alpha}{\Phi\left(\frac{-a-\theta-c}{\sigma}\right)^{\alpha-1}}
+\frac{\Phi\left(\frac{\theta-a}{\sigma}\right)^\alpha}{\Phi\left(\frac{\theta+c-a}{\sigma}\right)^{\alpha-1}}+\exp\left(\frac{(\alpha^2-\alpha)c^2}{2\sigma^2}\right)\Delta(\theta+(1-\alpha)c)\Bigg)
    \end{align*}

\end{proof}

% \section{Proof of Lemma \ref{lemma:monotonicity_rec}}
% \label{append:dp_monotonicity_rec}
% \begin{proof}
%     To show $D_{\alpha}(\mathcal{N}^R(\theta,\sigma^2,\mathcal{B})\|\mathcal{N}^R(\theta+x,\sigma^2,\mathcal{B}))$ is an increasing function of $x$, it suffices to show the quantity within the $\log$ is an increasing function of $x$. For the rest of the proof we set $\sigma=1$ W.L.O.G. 
%     We first look at $a(x)=\exp\left(\frac{(\alpha^2-\alpha)x^2}{2}\right)\Delta(\theta+(1-\alpha)x)$. We want to show
%     \begin{align*}
%         &\frac{d\log(a(x))}{dx}>0\\
%         &\Leftrightarrow\frac{d}{dx}\frac{(\alpha^2-\alpha)x^2}{2}+\log\Delta(\theta+(1-\alpha)x)>0\\
%         &\Leftrightarrow(\alpha^2-\alpha)x+(1-\alpha)g(\theta+(1-\alpha)x)>0
%     \end{align*}
% \end{proof}

\section{Proof of Lemma \ref{lemma:monotonicity}}
\label{append:dp_monotonicity}
\begin{proof}
It suffice to show that the derivative w.r.t $x$ is positive for $x>0$. For the rest of the proof, we assume $\sigma=1$ WLOG.
    \begin{align*}
        &\frac{\partial}{\partial x}D_{\alpha}\left(\mathcal{N}^T\left(\theta,\sigma^2,\mathcal{B}\right)\|\mathcal{N}^T\left(\theta+x,\sigma^2,\mathcal{B}\right)\right)\\ &= \alpha x+\frac{\partial}{\partial x}\log\left(\Phi(a-\theta-x)-\Phi(-a-\theta-x)\right)+\frac{1}{\alpha-1}\log \left(\Phi(a-\theta-(1-\alpha)x)-\Phi(-a-\theta-(1-\alpha)x)\right)\\
        &= \alpha x+\frac{-\phi(a-\theta-x)+\phi(-a-\theta-x)}{\Phi(a-\theta-x)-\Phi(-a-\theta-x)}+\frac{1}{\alpha-1}\frac{-(1-\alpha)\phi(a-\theta-(1-\alpha)x)+(1-\alpha)\phi(-a-\theta-(1-\alpha)x)}{\Phi(a-\theta-(1-\alpha)x)-\Phi(-a-\theta-(1-\alpha)x))}\\
        &= \alpha x+g(\theta+x)-g(\theta+(1-\alpha)x)
    \end{align*}
    where $g(x)=\frac{-\phi(a-x)+\phi(-a-x)}{\Phi(a-x)-\Phi(-a-x)}$.
    It suffice to show that $g(x-k)-g(x)\leq k$ for all $x$ and $k>0$. Now observe that
    \begin{align*}
        h(x)&=\frac{\partial g(x)}{\partial x}\\ &= -\frac{\partial -g(x)}{\partial x}\\
        &=-(-\mathcal{I}_{TN}(x)+1)\\
        &= \mathcal{I}_{TN}(x)-1\\
        &\geq -1
    \end{align*}
    where $\mathcal{I}_{TN}(x)$ is the fisher information at $x$. By Mean Value Theorem, for all $x$ and $k>0$, there exists $x'$ such that $x-k<x'<x$ and $h(x')=\frac{g(x)-g(x-k)}{k}$. Therefore, $\frac{g(x)-g(x-k)}{k}\geq -1\Rightarrow g(x-k)-g(x)\leq k$.
    
\end{proof}
\section{Proof of Theorem \ref{th:trunc_dp_amplification}}
\label{appen:dp_amp}
\begin{proof}
    Assume $\sigma=1$ here, WLOG. The RHS is simply $\alpha c^2$ because of the shift invariance of R\'enyi Divergence evaluated on two Gaussians. Based on the derivation in Lemma \ref{lemma:tgm_rdp}, it suffices to show that \[\log\frac{\Phi(a-\theta-c)-\Phi(-a-\theta-c)}{\Phi(a-\theta)-\Phi(-a-\theta)}+\frac{1}{\alpha-1}\log\frac{\Phi(a-\theta-(1-\alpha)c)-\Phi(-a-\theta-(1-\alpha)c))}{\Phi(a-\theta)-\Phi(-a-\theta)}\leq 0\]
    Let $f(x)=\log\left(\Phi(a-x)-\Phi(-a-x)\right)$ and we wish to prove that
    \[f(\theta+c)-f(\theta)+\frac{1}{\alpha-1}\left(f(\theta+(1-\alpha)c)-f(\theta)\right)\leq 0\]
    
    Note that by Lemma \ref{lemma:concave}, $f$ is concave. By Jensen's Inequality, we have
    \begin{align*}
        &\frac{f(\theta+c)+\frac{1}{\alpha-1}f(\theta+(1-\alpha)c)}{\frac{\alpha}{\alpha-1}}\leq f\left(\frac{\theta+c+\frac{1}{\alpha-1}(\theta+(1-\alpha)c)}{\frac{\alpha}{\alpha-1}}\right)\\
        \Leftrightarrow&f(\theta+c)+\frac{1}{\alpha-1}f(\theta+(1-\alpha)c)\leq \frac{\alpha}{\alpha-1}f(\theta)
    \end{align*}
    
\end{proof}
\section{Proof of Proposition \ref{prop:multidim}}
\begin{proof}
Note that with $L_{\infty}$ bounded support, both multivariate rectified and truncated Gaussian distribution is product distribution. In other words, the multivariate bounded Gaussian pdf $\mathbf{p}(\mathbf{x})$ could be written as the product of scalar bounded Gaussian pdf for each dimension $\prod_ip(\mathbf{x}_i)$. Given $\mathcal{B}=\{x|\|x\|_{\infty}\leq a\}$, we have
\begin{align*}
    D_{\alpha}\left(\mathbf{p}\|\mathbf{q}\right) &= \frac{1}{\alpha-1}\log\int_{\mathcal{B}}\mathbf{p}^\alpha\mathbf{q}^{1-\alpha}d\mathbf{x}\\
    &=\frac{1}{\alpha-1}\log\int_{S}\cdots\int_S\left(\prod_j\mathbf{p}_j\right)^\alpha\left(\prod_j\mathbf{q}_j\right)^{1-\alpha}dx_1\cdots dx_d\\
    &=\frac{1}{\alpha-1}\log\prod_j\int_S\mathbf{p}_j^\alpha\mathbf{q}_j^{1-\alpha}dx_j\\
    &=\sum_j\frac{1}{\alpha-1}\log\int_S\mathbf{p}_j^\alpha\mathbf{q}_j^{1-\alpha}dx_j\\
    &=\sum_jD_{\alpha}(p_j\|q_j)
\end{align*}

Now we compute the $FIL$ of the bounded Gaussian mechanism in the multi dimension case. Let $\mathbf{p}^B(\mathbf{x})$ be the probability density function of the multivariate bounded Gaussian mechanism. Hence, $\mathbf{p}^B$ could wither be the rectified Gaussian pdf $\mathbf{p}^R$ or the truncated Gaussian pdf $\mathbf{p}^T$. Note since each coordinate is independent of the others, we have
\begin{align*}
    \frac{\partial}{\partial\boldsymbol{\theta}}\mathbf{p}^B(\mathbf{x})=\left[\frac{\partial}{\partial\theta_1}p_1^B(\mathbf{x}_1),\cdots,\frac{\partial}{\partial\theta_d}p_d^B(\mathbf{x}_d)\right]
\end{align*}

\begin{align*}
    \frac{\partial^2}{\partial\boldsymbol{\theta}^2}\mathbf{p}^B(\mathbf{x})=\textbf{diag}\left(\left[\frac{\partial^2}{\partial\theta_1^2}p_1^B(\mathbf{x}_1),\cdots,\frac{\partial^2}{\partial\theta_d^2}p_d^B(\mathbf{x}_d)\right]\right)
\end{align*}

Hence, we have 
\begin{align*}
    \mathcal{I}_{\mathcal{M}}(D) &= \mathbb{E}\left[-\nabla_D^2\log \mathbf{p}^R(\mathbf{x}|D)\right]\\
    &= J_f^\top \mathbb{E}\left[\frac{\partial^2}{\partial\boldsymbol{\theta}^2}\log \mathbf{p}^B(\mathbf{x}) \right] J_f-\mathbf{H}\mathbb{E}\left[\frac{\partial}{\partial\boldsymbol{\theta}}\log \mathbf{p}^B(\mathbf{x}) \right]\\
    &= J_f^\top\textbf{diag}\left(\left[\boldsymbol{\eta}_1^2,\cdots,\boldsymbol{\eta}_d^2\right]\right) J_f
\end{align*}
where $\boldsymbol{\eta}_j:=\eta^B(\boldsymbol{\theta}_j,\sigma^2,S)$. 
    
\end{proof}
\section{Relation between quantization and rectified Gaussian}

As we mentioned earlier, our initial motivation comes from compression of SGD. In fact, we found that there's inherent relation between performing quantization over a Gaussian random variable and performing rectification over a Gaussian random variable. A quantization function takes a vector/scalar and an alphabet as input and dithers each scalar to its closest element on the alphabet. Then we have the following result:
\begin{theorem}
\label{th:quant}
    Let $q_k(\cdot)$ be a $k$-bit quantization function whose alphabet is within bounded range $[a,b]$ and $h$ be some hypothesis sampled from a Gaussian distribution: $h\sim\mathcal{N}(\theta,\sigma^2)$. We have the random variable $\lim_{k\rightarrow\infty}q_k(h)$ follows the distribution of $\mathcal{N}^R(\theta,\sigma^2,[a,b])$. Further, the same convergence result applies to the FIL as well.
\end{theorem}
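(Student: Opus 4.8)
The plan is to separate the statement into (i) convergence in distribution of $q_k(h)$ to $\mathcal{N}^R(\theta,\sigma^2,[a,b])$ and (ii) convergence of the associated Fisher information, hence of the FIL. Throughout I take the natural reading that the $k$-bit alphabet $\mathcal{A}_k\subseteq[a,b]$ becomes dense in $[a,b]$ as $k\to\infty$ with its extreme atoms tending to $a$ and $b$ (the uniform grid of $2^k$ points, with mesh $\Delta_k=(b-a)/(2^k-1)\to 0$, being the canonical example), and that $q_k$ is deterministic nearest-point rounding.

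For (i), I would first observe that for every fixed $h\in\mathbb{R}$ we have $q_k(h)\to\mathrm{clip}(h,a,b)$: if $h\in(a,b)$ then $|q_k(h)-h|\le\Delta_k/2\to 0$, while if $h\le a$ (resp.\ $h\ge b$) the closest alphabet element is eventually $\min\mathcal{A}_k\to a$ (resp.\ $\max\mathcal{A}_k\to b$). Hence, with $h\sim\mathcal{N}(\theta,\sigma^2)$, $q_k(h)\to\mathrm{clip}(h,a,b)$ surely, so $q_k(h)$ converges in distribution to $\mathrm{clip}(h,a,b)$; by the characterization recorded right after Definition~\ref{eq:rectified_gaussian} --- a draw from $\mathcal{N}^R(\theta,\sigma^2,[a,b])$ is exactly a draw from $\mathcal{N}(\theta,\sigma^2)$ followed by clipping to $[a,b]$ --- the limit law is $\mathcal{N}^R(\theta,\sigma^2,[a,b])$. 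Equivalently, writing $t_{-1}=-\infty<t_0<\dots<t_{N-1}<t_N=+\infty$ for the cell boundaries of $q_k$, one checks $\Pr[q_k(h)=\min\mathcal{A}_k]=\Phi(\tfrac{t_0-\theta}{\sigma})\to\Phi(\tfrac{a-\theta}{\sigma})$, symmetrically at $b$, and that the interior masses $\Phi(\tfrac{t_j-\theta}{\sigma})-\Phi(\tfrac{t_{j-1}-\theta}{\sigma})$ form a Riemann sum for $\tfrac1\sigma\phi(\tfrac{x-\theta}{\sigma})$ on $(a,b)$.

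For (ii), write $p_j(\theta)=\Phi(\tfrac{t_j-\theta}{\sigma})-\Phi(\tfrac{t_{j-1}-\theta}{\sigma})$ for the law of $q_k(h)$. Since $\sum_j p_j'(\theta)=0$, the score has zero mean, so exactly as in the proof of Lemma~\ref{lemma:fil_closed_form} the Hessian-of-$f$ term vanishes and $\mathcal{I}_{q_k\circ f}(D)=J_f^\top\mathcal{I}_{q_k}(\theta)J_f$ with $\mathcal{I}_{q_k}(\theta)=\sum_j (p_j'(\theta))^2/p_j(\theta)$; it therefore suffices to show $\mathcal{I}_{q_k}(\theta)\to(\eta^R)^2$ with $(\eta^R)^2$ as in Equation~\ref{eq:rec_fil}. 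I would split $\mathcal{I}_{q_k}(\theta)$ into the two boundary cells and the interior cells, using $p_j'(\theta)=\tfrac1\sigma(\phi(\tfrac{t_{j-1}-\theta}{\sigma})-\phi(\tfrac{t_j-\theta}{\sigma}))$. The boundary cell at $a$ has $p_j\to\Phi(\tfrac{a-\theta}{\sigma})$ and $p_j'\to-\tfrac1\sigma\phi(\tfrac{a-\theta}{\sigma})$, contributing $\tfrac1{\sigma^2}\phi^2(\tfrac{a-\theta}{\sigma})/\Phi(\tfrac{a-\theta}{\sigma})$ in the limit, symmetrically $\tfrac1{\sigma^2}\phi^2(\tfrac{\theta-b}{\sigma})/\Phi(\tfrac{\theta-b}{\sigma})$ at $b$: these are the two atom terms in Equation~\ref{eq:rec_fil}. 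For an interior cell with midpoint $x_j$ and width $\Delta_j$, Taylor expansion gives $p_j=\tfrac1\sigma\phi(\tfrac{x_j-\theta}{\sigma})\Delta_j+O(\Delta_j^3)$ and $p_j'=\tfrac{x_j-\theta}{\sigma^2}\cdot\tfrac1\sigma\phi(\tfrac{x_j-\theta}{\sigma})\Delta_j+O(\Delta_j^3)$, so $(p_j')^2/p_j=\tfrac{(x_j-\theta)^2}{\sigma^4}\cdot\tfrac1\sigma\phi(\tfrac{x_j-\theta}{\sigma})\Delta_j\,(1+o(1))$, and the interior sum is a Riemann sum for $\int_a^b\tfrac{(x-\theta)^2}{\sigma^4}\cdot\tfrac1\sigma\phi(\tfrac{x-\theta}{\sigma})\,dx$, which upon integration by parts equals the remaining (non-atom) terms of Equation~\ref{eq:rec_fil}. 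Summing the three pieces gives $\mathcal{I}_{q_k}(\theta)\to(\eta^R)^2$, and multiplying by $\|J_f\|_2$ yields the FIL claim.

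The main obstacle is making the interior limit in (ii) rigorous: Fisher information is not continuous under weak convergence, so (i) does not supply (ii) for free. One must (a) verify no interior cell has $p_j$ decaying faster than $(p_j')^2$ --- it does not, since on the compact interval $[(a-\theta)/\sigma,(b-\theta)/\sigma]$ both are $\Theta(\Delta_j)$ times quantities bounded away from $0$ and $\infty$ --- and (b) justify the interchange of limit and sum with a dominated-convergence estimate using boundedness of $\phi$, $\phi'$, and $1/\phi$ on that interval together with a uniform remainder bound in the Taylor step. A cleaner but less self-contained route is to invoke the classical monotonicity of Fisher information under refinement of a measurable partition: the partitions induced by $q_k$ increase, within $[a,b]$, to the $\sigma$-algebra generated by $\mathrm{clip}(\cdot,a,b)$, whence $\mathcal{I}_{q_k}(\theta)$ increases to the Fisher information of the corresponding pushforward law --- which is $\mathcal{N}^R(\theta,\sigma^2,[a,b])$ --- at the price of quoting that regularity statement. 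Either way the distributional claim is routine and the Fisher-information convergence is where the real work lies.
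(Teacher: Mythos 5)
Your proposal is correct, and for the FIL part it follows the same skeleton as the paper's proof: write the Fisher information of the quantized variable as $\sum_j (p_j'(\theta))^2/p_j(\theta)$, split the sum into the two boundary cells (whose limits give the atom terms $\phi^2\big(\frac{a-\theta}{\sigma}\big)/\Phi\big(\frac{a-\theta}{\sigma}\big)$ and $\phi^2\big(\frac{\theta-b}{\sigma}\big)/\Phi\big(\frac{\theta-b}{\sigma}\big)$) and the interior cells, then take $k\to\infty$. Where you genuinely differ is in evaluating the interior limit: you Taylor-expand per cell so the interior sum becomes a Riemann sum for $\int_a^b \frac{(x-\theta)^2}{\sigma^4}\cdot\frac{1}{\sigma}\phi\big(\frac{x-\theta}{\sigma}\big)\,dx$ and finish with integration by parts, whereas the paper replaces the $\Phi$- and $\phi$-difference quotients by $\phi$ and $\phi'$ inside the sum, uses $\phi'(u)=-u\phi(u)$, and then telescopes (summation by parts) to land directly on $(a-\theta)\phi(a-\theta)-(b-\theta)\phi(b-\theta)+\Phi(b-\theta)-\Phi(a-\theta)$; both routes give the same closed form (your integral matches the final display in the paper's proof; the sign of the middle term as printed in Equation \ref{eq:rec_fil} is a typo there, not an error on your side). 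Your version buys two things the paper leaves implicit: an explicit argument for the distributional claim via pointwise convergence $q_k(h)\to\mathrm{clip}(h,a,b)$, and an honest treatment of the limit--sum interchange, which the paper performs only formally; your dominated-convergence plan with uniform remainder bounds on the compact interval is the right way to make that rigorous. One caution on your ``cleaner'' alternative: for the canonical uniform $2^k$-point grid the induced partitions are not literally nested as $k$ grows, so monotone refinement of Fisher information does not apply verbatim; you would need a nested subsequence (e.g.\ dyadic refinements) or to combine the data-processing bound $\mathcal{I}_{q_k}(\theta)\le(\eta^R)^2$ (valid since $q_k(h)$ is a function of $\mathrm{clip}(h,a,b)$) with your Riemann-sum lower bound --- the primary route you describe avoids this issue entirely.
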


\begin{proof}[Proof for Theorem \ref{th:quant}]
    Assume for now $\sigma=1$. Let the interval bounded with $[a,b]$ and the alphabets being $\left\{a,a+\frac{b-a}{k},\cdots,a+(k-1)\frac{b-a}{k},b\right\}$. Let $\Delta=\frac{b-a}{k}$ and $a_i=a+(i-1)\times\Delta$ for $i=1,\cdots,k+1$. We first calculate FIL of the $k$-bit quantization. Let $p_1=\Phi\left(\frac{(a_1+a_2)/2-\theta}{\sigma}\right)$, $p_i=\Phi\left(\frac{(a_i+a_{i+1})/2-\theta}{\sigma}\right)-\Phi\left(\frac{(a_{i-1}+a_{i})/2-\theta}{\sigma}\right)$ for $2\leq i\leq k$, and $p_{k+1} = \Phi\left(\frac{\theta-(a_{k}+a_{k+1})/2}{\sigma}\right)$. It's easy to calculate that
    \begin{equation*}
    % \label{eq:prop_discretization}
        \mathcal{I}_{q_k(X)}(\theta) = \sum_{i=1}^{k+1}\frac{1}{p_i} \left(\frac{\partial p_i}{\partial \theta}\right)^2
    \end{equation*}
Specifically, $\frac{\partial p_1}{\partial \theta}=-\frac{\phi\left(\frac{(a_1+a_2)/2-\theta}{\sigma}\right)}{\sigma}$, $\frac{\partial p_i}{\partial \theta}=\frac{\phi\left(\frac{(a_{i-1}+a_{i})/2-\theta}{\sigma}\right)-\phi\left(\frac{(a_{i}+a_{i+1})/2-\theta}{\sigma}\right)}{\sigma}$ for $2\leq i\leq k$, and $\frac{\partial p_k}{\partial \theta}=\frac{\phi\left(\frac{\theta-(a_{k}+a_{k+1})/2}{\sigma}\right)}{\sigma}$.
    
Now we are interested in computing the limit of the above quantity is equivalent to the FIL of rectified Gaussian:
\begin{align*}
    \lim_{k\rightarrow\infty}\mathcal{I}_{q_k(X)}(\theta) &= \lim_{k\rightarrow\infty}\sum_{i=1}^{k+1}\frac{1}{p_i} \left(\frac{\partial p_i}{\partial \theta}\right)^2\\
    &= \lim_{k\rightarrow\infty} \frac{\phi^2(a+\frac{b-a}{2k}-\theta)}{\Phi(a+\frac{b-a}{2k}-\theta)}+\frac{\phi(\theta-b+\frac{b-a}{2k})}{\Phi(\theta-b+\frac{b-a}{2k})}+\sum_{i=1}^{k}\frac{\left(\phi(a+\frac{(2i+1)(b-a)}{2k}-\theta)-\phi(a+\frac{(2i-1)(b-a)}{2k}-\theta)\right)^2}{\Phi(a+\frac{(2i+1)(b-a)}{2k}-\theta)-\Phi(a+\frac{(2i-1)(b-a)}{2k}-\theta)}\\
    &= \frac{\phi^2(a-\theta)}{\Phi(a-\theta)}+\frac{\phi^2(\theta-b)}{\Phi(\theta-b)}+\lim_{k\rightarrow\infty}\sum_{i=1}^{k}\frac{\left(\phi(a+\frac{(2i+1)(b-a)}{2k}-\theta)-\phi(a+\frac{(2i-1)(b-a)}{2k}-\theta)\right)^2}{\Phi(a+\frac{(2i+1)(b-a)}{2k}-\theta)-\Phi(a+\frac{(2i-1)(b-a)}{2k}-\theta)}\\
    &= \frac{\phi^2(a-\theta)}{\Phi(a-\theta)}+\frac{\phi^2(\theta-b)}{\Phi(\theta-b)}+\lim_{k\rightarrow\infty}\sum_{i=1}^{k}\frac{\left(\phi(c+\frac{2i+1}{2k}\Delta)-\phi(c+\frac{2i-1}{2k}\Delta)\right)^2}{\Phi(c+\frac{2i+1}{2k}\Delta)-\Phi(c+\frac{2i-1}{2k}\Delta)}
    % \\
    % &= \text{TODO: fill up steps}\\
    % &= \frac{\phi^2(a-\theta)}{\Phi(a-\theta)}+\frac{\phi^2(\theta-b)}{\Phi(\theta-b)} + \Phi(b-\theta) - \Phi(a-\theta) + (a-\theta)\phi(a-\theta) - (b-\theta)\phi(b-\theta)
\end{align*}

It suffice to solve for that series $A=\lim_{k\rightarrow\infty}\sum_{i=1}^{k}\frac{\left(\phi(c+\frac{2i+1}{2k}\Delta)-\phi(c+\frac{2i-1}{2k}\Delta)\right)^2}{\Phi(c+\frac{2i+1}{2k}\Delta)-\Phi(c+\frac{2i-1}{2k}\Delta)}$.
\begin{align*}
    A &= \lim_{k\rightarrow\infty}\sum_{i=1}^{k}\frac{\frac{\left(\phi(c+\frac{2i+1}{2k}\Delta)-\phi(c+\frac{2i-1}{2k}\Delta)\right)^2}{\Delta/k}}{\frac{\Phi(c+\frac{2i+1}{2k}\Delta)-\Phi(c+\frac{2i-1}{2k}\Delta)}{\Delta/k}}\\
    &= \lim_{k\rightarrow\infty}\sum_{i=1}^{k}\frac{\left(\phi(c+\frac{2i+1}{2k}\Delta)-\phi(c+\frac{2i-1}{2k}\Delta)\right)\phi'(c+\frac{2i-1}{2k}\Delta)}{\phi(c+\frac{2i-1}{2k}\Delta)}\\
    &= \lim_{k\rightarrow\infty}\sum_{i=1}^{k}\frac{-(c+\frac{2i-1}{2k}\Delta)\left(\phi(c+\frac{2i+1}{2k}\Delta)-\phi(c+\frac{2i-1}{2k}\Delta)\right)\phi(c+\frac{2i-1}{2k}\Delta)}{\phi(c+\frac{2i-1}{2k}\Delta)}\\
    &= \lim_{k\rightarrow\infty}\sum_{i=1}^{k}-(c+\frac{2i-1}{2k}\Delta)\left(\phi(c+\frac{2i+1}{2k}\Delta)-\phi(c+\frac{2i-1}{2k}\Delta)\right)\\
    &= \lim_{k\rightarrow\infty}\left(c+\frac{1}{2k}\Delta\right)\phi\left(c+\frac{1}{2k}\Delta\right)+\sum_{i=1}^{k-1}\frac{\Delta}{k}\phi\left(c+\frac{2i+1}{2k}\Delta\right) -\left(c+\frac{2k-3}{2k}\Delta\right)\phi\left(c+\frac{2k-1}{2k}\Delta\right)\\
    &= (a-\theta)\phi(a-\theta)-(b-\theta)\phi(b-\theta)+\lim_{k\rightarrow\infty}\sum_{i=1}^{k-1}\frac{\Delta}{k}\phi\left(c+\frac{2i+1}{2k}\Delta\right)\\
    &= (a-\theta)\phi(a-\theta)-(b-\theta)\phi(b-\theta) + \Phi(b-\theta)-\Phi(a-\theta)
\end{align*}

Hence, \begin{align*}
    \lim_{k\rightarrow\infty}\mathcal{I}_{q_k(X)}(\theta)=\frac{\phi^2(a-\theta)}{\Phi(a-\theta)}+\frac{\phi^2(\theta-b)}{\Phi(\theta-b)} + \Phi(b-\theta) - \Phi(a-\theta) + (a-\theta)\phi(a-\theta) - (b-\theta)\phi(b-\theta)
\end{align*}
which is equivalent to $\eta^R$.
\end{proof}

% \section{Biasedness theoretical results}
% Throughout the section we refer $\hat{\mu}=\frac{1}{n}\sum_{i=1}^nX_i$ and $\tilde{\mu}=Noisy(\hat{\mu},\sigma^2)$
% \begin{lemma}
%     Let $\tilde{\mu}=\mathcal{N}(\hat{\mu},\sigma^2)$, with probability at least $\delta$, \[|\tilde{\mu}-\hat{\mu}|\leq \mathcal{O}\left(\frac{\sigma}{\sqrt{\delta}}\right)\]
% \end{lemma}

% \begin{lemma}
%     Let $\tilde{\mu}=\mathcal{N}^R(\hat{\mu},\sigma^2,[-a,a])$. Let $c=\frac{-a-\mu}{\sigma}$, $d=\frac{a-\mu}{\sigma}$, $\mu_{c_t}=\phi(c)+c\Phi(c)$, $\mu_{d_t}=-\phi(d)+d(1-\Phi(d))$. With probability at least $\delta$, \[|\tilde{\mu}-\hat{\mu}|\leq \mathcal{O}\left(\frac{\sigma\sigma_t}{\sqrt{\delta}}\right)+\sigma\mu_t\]
%     where \[\mu_t=\mu_{c_t}+\mu_{d_t},\] \[\sigma_t^2=c\mu_{c_t}+d\mu_{d_t}-\mu_t^2+\Phi(d)-\Phi(c)\]
% \end{lemma}
%%%%%%%%%%%%%%%%%%%%%%%%%%%%%%%%%%%%%%%%%%%%%%%%%%%%%%%%%%%%%%%%%%%%%%%%%%%%%%%
%%%%%%%%%%%%%%%%%%%%%%%%%%%%%%%%%%%%%%%%%%%%%%%%%%%%%%%%%%%%%%%%%%%%%%%%%%%%%%%
\section{Hyperparameters}
\label{appen:hyperparam}
We list the details of hyperparameters we search for that produces the results in Figure \ref{fig:dp_utility_tradeoff} and Figure \ref{fig:fil_utility_tradeoff}.

\begin{table*}[h!]
%  	\vspace{1em}
	\centering
        
	\scalebox{0.87}{
	\begin{tabular}{ ll} 
	   \toprule[\heavyrulewidth]
	     
        \textbf{Hyperparameter} & Values \\
        \midrule
        $L_{\infty}$ clipping bound $C$ & $10^{-3},2\times 10^{-3},5\times 10^{-3},10^{-2}$\\
        Noise multiplier $\sigma/C$ & $10^3,2\times 10^3, 10^4, 2\times 10^4, 5\times 10^4$\\
        Bounded support parameter $a$ & $4,8,16,32,64$\\
        learning rate $lr$ & $8,16$\\
        \# of epochs $T$ & $10,20$\\
        Batch size (for FIL full training only) & $200$\\
        \toprule
        % \rowcolor{myred}
       
    \bottomrule[\heavyrulewidth]
	\end{tabular}}
    \caption{CIFAR 10 Hyperparameters}
% \vspace{-0.1in}
\end{table*}

\begin{table*}[h!]
%  	\vspace{1em}
	\centering
	\scalebox{0.87}{
	\begin{tabular}{ ll} 
	   \toprule[\heavyrulewidth]
	     
        \textbf{Hyperparameter} & Values \\
        \midrule
        $L_{\infty}$ clipping bound $C$ & $5\times 10^{-5}, 10^{-4},2\times 10^{-4},5\times10^{-4}$\\
        Noise multiplier $\sigma/C$ & $10^3,2\times 10^3, 5\times 10^3, 10^4, 2\times 10^4$\\
        Bounded support parameter $a$ & $2,4,8,16,32$\\
        learning rate $lr$ & $50,100$\\
        \# of epochs $T$ & $10,20$\\
        \toprule
        % \rowcolor{myred}
       
    \bottomrule[\heavyrulewidth]
	\end{tabular}}
		\caption{CIFAR 100 Hyperparameters}
% \vspace{-0.1in}
\end{table*}

\begin{table*}[h!]
%  	\vspace{1em}
	\centering
	\scalebox{0.87}{
	\begin{tabular}{ ll} 
	   \toprule[\heavyrulewidth]
	     
        \textbf{Hyperparameter} & Values \\
        \midrule
        $L_{\infty}$ clipping bound $C$ & $10^{-6}, 2\times 10^{-6},5\times10^{-6}$\\
        Noise multiplier $\sigma/C$ & $2\times 10^2,5\times 10^2, 10^3, 2\times 10^3, 5\times 10^3$\\
        Bounded support parameter $a$ & $0.01,0.02,0.05,0.1$\\
        learning rate $lr$ & $50,100,200$\\
        \# of epochs $T$ & $5,10,20$\\
        \toprule
        % \rowcolor{myred}
       
    \bottomrule[\heavyrulewidth]
	\end{tabular}}
		\caption{OxfordIIITPet Hyperparameters}
% \vspace{-0.1in}
\end{table*}

\end{document}